\documentclass[journal]{IEEEtran}

\usepackage{mathptmx}
\usepackage{amsthm,amsmath,amssymb}
\usepackage{bm} 
\usepackage{amsfonts}
\usepackage{subeqnarray}
\usepackage{cases,cite}
\usepackage[utf8]{inputenc}
\usepackage{graphicx}
\usepackage{amssymb}
\usepackage{float}
\usepackage{graphicx} 
\usepackage{subfigure}
\usepackage{algpseudocode}
\usepackage{url}
\usepackage{multirow}
\usepackage{microtype}
\usepackage{color}

\newtheorem{proposition}{Proposition}
\newtheorem{definition}{Definition}

\newtheorem{lemma}{Lemma}
\newtheorem{corollary}{Corollary}

\newtheorem{remark}{Remark}

\newtheorem{example}{Example}

\IEEEoverridecommandlockouts

\begin{document}

\title{{\texttt{FlipIn}: A Game-Theoretic Cyber Insurance Framework for Incentive-Compatible Cyber Risk Management of Internet of Things}}

\author{Rui~Zhang
        and~Quanyan~Zhu
\thanks{R. Zhang and Q. Zhu are with Department of Electrical and Computer Engineering, New York University, Brooklyn, NY, 11201
E-mail:\{rz885,qz494\}@nyu.edu. } }
\maketitle

\begin{abstract}
Internet of Things (IoT) is highly vulnerable to emerging Advanced Persistent Threats (APTs) that are often operated by well-resourced adversaries. Achieving perfect security for IoT networks is often cost-prohibitive if not impossible. Cyber insurance is a valuable mechanism to mitigate cyber risks for IoT systems. In this work, we propose a bi-level game-theoretic framework called \texttt{FlipIn} to design incentive-compatible and welfare-maximizing cyber insurance contracts. The framework captures the strategic interactions among APT attackers, IoT defenders, and cyber insurance insurers, and incorporates influence networks to assess the systemic cyber risks of interconnected IoT devices. The \texttt{FlipIn} framework formulates a game over networks within a principal-agent problem of moral-hazard type to design a cyber risk-aware insurance contract. We completely characterize the equilibrium solutions of the bi-level games for a network of distributed defenders and a semi-homogeneous centralized defender and show that the optimal insurance contracts cover half of the defenders' losses. Our framework predicts the risk compensation of defenders and the Peltzman effect of insurance. We study a centralized security management scenario and its decentralized counterpart, and leverage numerical experiments to show that network connectivity plays an important role in the security of the IoT devices and the insurability of both distributed and centralized defenders.
\end{abstract}

\begin{IEEEkeywords}
Cyber Insurance, Internet of Things, Game-Theoretic Design, \texttt{FlipIt} game, Influence Network, Principal-agent Problem, Moral Hazard, Information Asymmetry, Risk Compensation, Peltzman Effect, Network Effects 
\end{IEEEkeywords}

\section{Introduction}
\label{sec:Introduction}
The Internet of Things (IoT) has witnessed applications in many areas such as smart cities, healthcare, and transportations \cite{atzori2010internet, gubbi2013internet}. However, IoT networks and devices can be highly vulnerable to adversaries who can inflict huge financial and non-financial losses on government, companies, and nonprofit organization\cite{weber2010internet, suo2012security, zhao2013survey, jing2014security, sicari2015security}. For example, Mirai botnet in 2016 has compromised numerous IoT devices and knocked out popular sites, such as Netflix, Spotify, Twitter, and GitHub, with massive distributed denial-of-service (DDoS) attacks \cite{antonakakis2017understanding,kolias2017ddos}. 

\begin{figure}[]
\centering
{\includegraphics[width=0.45\textwidth]{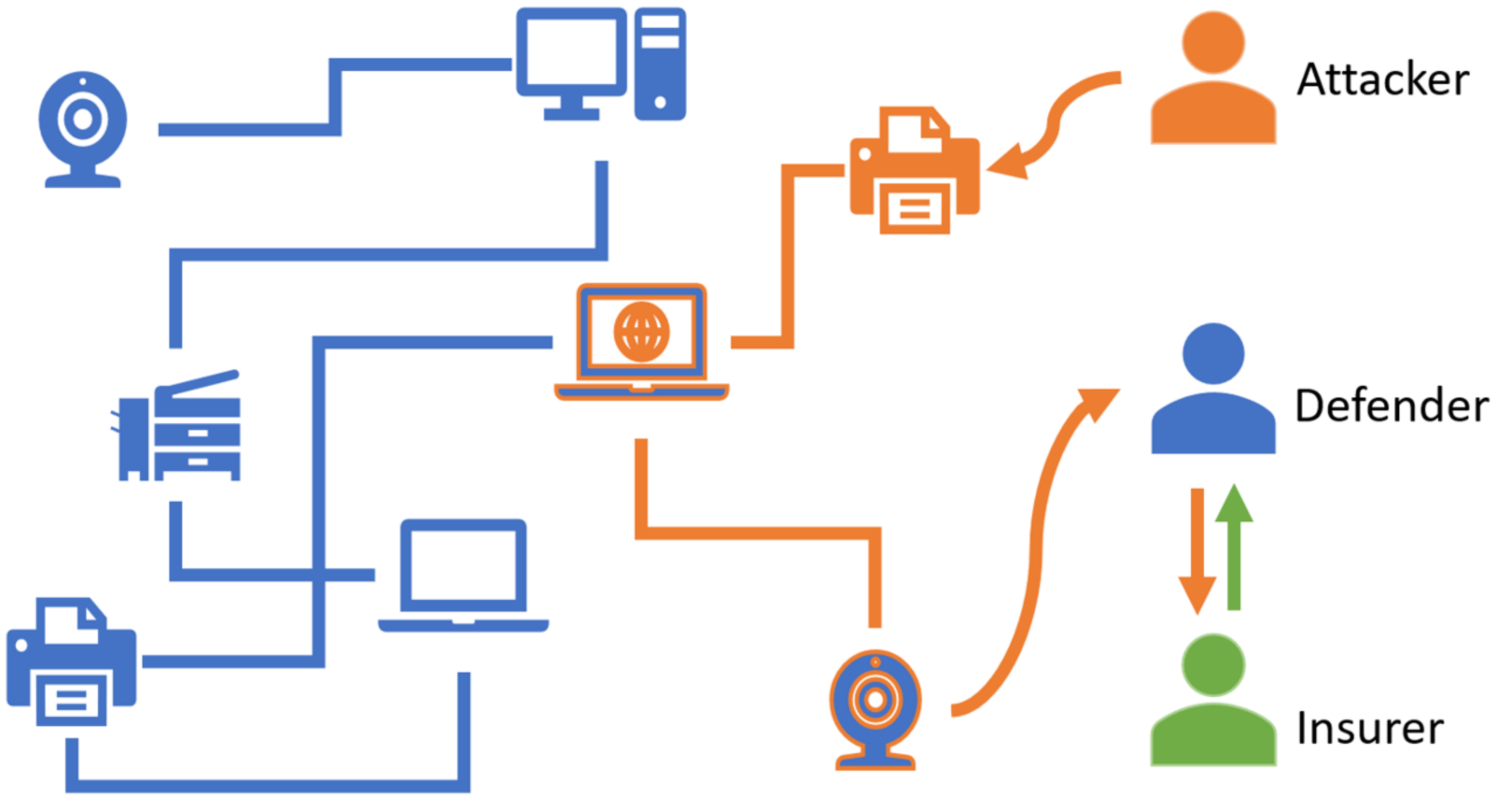}}
\caption{Cyber Insurance for IoT against APTs.}
\label{fig:F}
\end{figure}

One important class of sophisticated cyber-attacks called advanced persistent threats (APTs) has posed severe threats to IoT devices \cite{tankard2011advanced,cole2012advanced,abomhara2015cyber, hassanzadeh2015towards, hu2017defense}. Different from traditional cyber-attacks, APTs are executed by resourceful attackers, and they usually involve multiple steps and persist for a long period of time. For example, Stuxnet attack on Iran's nuclear program has compromised the target system's logic controllers, and then took control of the centrifuges, bringing them to failure \cite{langner2011stuxnet,kushner2013real}. 

The vulnerabilities of IoT devices to APTs arise from several aspects \cite{zhao2013survey,zhang2014iot,mahmoud2015internet}. First of all, security is not the primal concern during their design and the manufacturing. Users of the devices tend to adopt default or weak passwords. In addition, users do not maintain and patch their devices unless they stop working properly. Therefore, malicious activities of APTs are often unnoticed on IoT devices before they launch attacks and inflict significant losses. It is challenging to design effective defense methods and protect IoTs against APTs due to the coexistence of multiple types of devices and the lack of industrial standards. Moreover, the complexity of the IoT networks makes it difficult to investigate past cyber incidents. The defense solutions are also constrained by limited computational resources on IoT devices.

Cyber insurance becomes a new way to mitigate the risks from APTs to complement technological solutions and plays an important role when the technologies are imperfect \cite{majuca2006evolution, bohme2010modeling, marotta2017cyber}. An illustration of cyber insurance for IoT is provided in Fig. \ref{fig:F}. A defender first pays a premium to an insurer on his IoT devices, and when the devices are compromised by APTs, the insurer provides a financial coverage on various types of losses, such as data breaches, physical damages, and service shutdown. {The coverage and the financial protection against losses may prevent defenders from business discontinuities and provide them un-deprived resources to take actions to recover and defend themselves. }

Traditional insurance frameworks are not completely sufficient to address challenges of cyber insurance as the risks from APTs are caused by malicious attackers rather than accidents \cite{tankard2011advanced,cole2012advanced}. The design of effective cyber insurance contracts should take into account the attacker's behaviors and their impacts on the IoT systems. Moreover, the impact of APTs can propagate over IoT devices through network connections, and thus the insurers may bear extra risks if they fail to take into account the risk that arises from the interdependencies among IoT devices \cite{abomhara2015cyber, hassanzadeh2015towards, hu2017defense}. 

In this paper, we propose a bi-level game-theoretic framework called \texttt{FlipIn} to study the interactions among APT attackers, IoT defenders, and cyber insurance insurers over a network of IoT devices. In this framework, attackers and defenders compete over the ownership of IoT devices. The attackers aim to control longer periods so that they can conduct various malicious activities which inflict huge losses on the defenders. The defenders aim to reduce their losses by either controlling longer periods or purchasing cyber-insurance from insurers. With an objective of maximizing the revenue, the insurers charge premiums to the defenders and provide financial coverage to them when they face cyberattack-induced losses. 

As the impact of the attackers can propagate to other devices through network connections, there is a need to quantify the systemic risk of the whole network. To this end, we adopt linear influence models to capture the impact of one node on the others\cite{miura2008security,nguyen2009stochastic,alpcan2010network}. We further consider two scenarios of cyber insurance. The first one is a distributed scenario where each defender owns an IoT device in a network while the second one is its centralized counterpart where a centralized agent owns the network of nodes and plans the network defense. In both scenarios, there exists an attacker at each IoT device to compete with the defender over its ownership. 

\texttt{FlipIt} game has been broadly used to model the interactions between one APT attacker and one defender \cite{van2013flipit,bowers2012defending,pawlick2015flip,chen2017security}. We capture the adversarial interactions over networks by constructing local \texttt{FlipIt} games at each node in the distributed scenario and developing a global \texttt{FlipIt} game over a network in the centralized scenario. In the distributed
scenario, the local \texttt{FlipIt} games are composed into a network \texttt{FlipIt} game among multiple defenders and multiple attackers, which can be viewed as a bottom-up approach. In the centralized scenario, the proposed global \texttt{FlipIt} game can be decomposed into local \texttt{FlipIt} games at each node, which can be viewed as a top-down approach.

{Our \texttt{FlipIn} framework captures several unique features of IoT networks. Our framework does not require the networks to be homogeneous or fully connected. It captures IoT networks that are featured by various types of software, protocol, and hardware. Attackers or defenders may have different costs to attack or defend different IoT devices with unique physical components, respectively. In our framework, we capture those differences by the cost functions or parameters for the players to claim or reclaim the ownership of the IoT devices. Moreover, the failures of the IoT devices with sensitive information or crucial missions inflict huge losses to defenders while the failures of other devices may inflict less significant losses. Thus, defenders could encounter different losses on different devices, which is captured by the different loss parameters of defenders in our framework. Furthermore, IoT devices are often managed by different entities, and such decentralized ownership of IoT devices is investigated in the distributed scenario.}

Each defender interacts with an insurer, which is modeled by a class of moral-hazard type of principal-agent problems with incomplete information. The insurer acts as a principal and announces the insurance contract to the defender, while the defender acts as an agent and makes rational decisions. The incomplete information comes from the fact that the insurer cannot directly observe the interactions among defenders and attackers but can indirectly measure the defenders' losses as a consequence of their actions as well as the attackers' actions. 

The principal-agent problem and the network-based \texttt{FlipIt} games constitute a bi-level game among three parties, and the equilibrium solution to this composed game enables us to design effective cyber insurance contracts and mitigate financial impacts on the IoT networks and their operations.  We fully analyze the insurability of defenders by taking into account the individual rationalities for both defenders and insurers. We completely solve the insurance contract design problems in the distributed scenario and the semi-homogeneous centralized scenario. The optimal insurance contracts in both cases cover half of the defenders' losses when they are insurable. With numerical experiments, we show that network effects can damage the security of IoT networks and decrease the insurability of defenders. Our numerical experiments further indicate that nodes with more neighbors and networks with lower connectivities  are less insurable.

Our framework offers several insights on the best practices for IoT defenders on cyber risk management. Firstly, when the network influences are weak, defenders can successfully mitigate their risks through cyber insurance; when the network influences become stronger, defenders need to focus on deploying local protections instead of adopting cyber insurance. Secondly, the defenders who manage highly connected devices or sparsely connected networks do not benefit from cyber insurance. Thirdly, for weakly connected networks, decentralized management outperforms its centralized counterpart and each node is recommended to defend on its own while for strongly connected networks,  centralized management outperforms its decentralized counterpart and a global defender is preferred to be in charge of all devices.

\subsection{Organization of the Paper}
The rest of this paper is organized as follows. In Section \ref{sec:RelatedWorks}, we discuss the related works. Section \ref{sec:ProblemFormulation} formulates the problems and outlines the bi-level games for both distributed and centralized scenarios. {Section \ref{sec:FindEqu} provides an overview of finding the equilibrium.} Section \ref{sec:D} and Section \ref{sec:C} analyze the insurance contract design problems for two scenarios. Section \ref{sec:Num} presents numerical experiments and Section \ref{sec:Con} provides concluding remarks. A summary of notations has been provided in the following table.

\begin{table}[ht]
	\begin{tabular}{cc}
		\hline
		\multicolumn{2}{c}{Summary of Notations} 
		\\ \hline 
		\multicolumn{1}{c|}{$d$, $a$} & Defender, Attacker
		\\ \multicolumn{1}{c|}{$D$, $C$} & Distributed Scenario, Centralized Scenario
		\\ \multicolumn{1}{c|}{$\mathcal{N}$, $n$} & Set of Nodes, Node $n\in\mathcal{N}$
		\\ \multicolumn{1}{c|}{$p_{d,n}$, $p_{a,n}$} & {Defending Strategy/Frequency, Attacking Strategy/Frequency}
		\\ \multicolumn{1}{c|}{$\alpha_n$} & Expected Proportion of the Attacker's Controlling Time
		\\ \multicolumn{1}{c|}{$R_n$, $X_n$, $\beta_n$} & Defender's Risk, Direct Loss, Effective Loss
		\\ \multicolumn{1}{c|}{$w_{mn}$, $w_{mn}^*$} & Network Influence
		\\ \multicolumn{1}{c|}{$\eta$} & Discount Ratio of the Network Influence
		\\ \multicolumn{1}{c|}{$\gamma_{a,n}$, $c_{a,n}$} & Attacker's Utility Parameter, Cost Parameter
		\\ \multicolumn{1}{c|}{$\gamma_{d,n}$, $c_{d,n}$} & Defender's Loss Parameter, Cost Parameter
		\\ \multicolumn{1}{c|}{$s_n$, $s$} & Insurance Coverage level
		\\ \multicolumn{1}{c|}{$T_n$, $T$} & Insurance Premium
		\\ \hline
	\end{tabular}
\end{table}

\section{Related Works}
\label{sec:RelatedWorks}
{
Cyber insurance has been devised to mitigate cyber-risks by covering some of the losses caused by cyber-attacks \cite{majuca2006evolution, marotta2017cyber, bohme2010modeling}. Various frameworks have been proposed to study cyber-insurance from different perspectives. For example, Pal et al.,  have proposed a supply-demand model and showed that cyber insurance with client contract discrimination can improve network security \cite{pal2014will}; Khalili et al., have investigated the interdependent nature of cyber security and proposed a pre-screening method which is able to create profit opportunities for insurers \cite{khalili2017designing1, khalili2017designing2,khalili2018designing}; B{\"o}hme et al., have proposed several market models to study the information asymmetries between defenders and insurers as well as the interdependent security and correlated risks among defenders \cite{bohme2010modeling}; Vakilinia et al.,  have proposed a coalitional insurance framework where organizations insure a common platform instead of themselves under the consideration of their security interdependency \cite{vakilinia2018coalitional}. However, most of the current frameworks have not considered that the risks of cyber-attacks come from stealthy attackers with specific objectives and malicious activities, which is different from traditional insurance of mitigating risks from accidents. }

{
Game theory has been applied extensively to capture various types of cyber-attacks \cite{roy2010survey,ferdowsi2017colonel,han2019game}. For example, zero-sum games have been used to capture Jamming attacks and DoS/DDoS cyber-attacks\cite{li2015jamming,spyridopoulos2013gameC}; Stackelberg games have been applied to study moving target defense, honeypots, and correlated attacks \cite{vadlamudi2016moving,kiekintveld2015game,zhu2011stackelberg}; Signaling games have been used to investigate deception and data integrity attacks \cite{carroll2011game, teixeira2014security}; other applicable games have also been introduced to model different cyber-attack scenarios \cite{wang2014mean,zhu2015game}.  Game theory provides a theoretical analysis of cyber-attacks and offers valuable insights for cyber security administrators and managers to design detection and defense methods against them.}

{
APTs are severe threats to cyber security with the stealthiness and persistence nature \cite{tankard2011advanced,cole2012advanced,abomhara2015cyber, hassanzadeh2015towards, hu2017defense}.  Various papers have proposed different game-theoretic frameworks to investigate APTs and their impacts in different applications. In \cite{huang2018analysis}, Huang et al., have used a multi-stage Bayesian game to capture ATPs on cyber-physical systems; in \cite{hu2015dynamic}, Hu et al., have characterized the interplay among defenders, APT attackers, and insiders with a two-layer differential game; in \cite{xiao2017cloud}, Xiao et al., have applied prospect theory to investigate APTs on cloud storage systems; in \cite{min2018defense}, Min et al.  have captured the interactions between an APT attacker and a defender in a cloud storage system by a Colonel Blotto game; in \cite{rass2016gadapt}, Rass et al., have proposed a sequential game-theoretic framework to design defense strategies against APTs. }

{
The security aspect of IoT devices and their vulnerabilities to cyber-attacks have been reviewed and summarized by a lot of recent studies \cite{weber2010internet, suo2012security, zhao2013survey, jing2014security, sicari2015security}. Several game-theoretic frameworks have been proposed to investigate cyber-attacks towards IoT systems \cite{manshaei2013game,abuzainab2017dynamic,hu2019dynamic}. For example, Hamdi et al., have devised a game-theoretic model to study the adaptive security of eHealth IoT systems \cite{hamdi2014game}; Pouryazdan et al., have proposed a game-theoretic framework for trustworthy cloud-centric IoT applications \cite{pouryazdan2016game}; Namvar et al., have modeled the interaction between an IoT access point and a jammer with a Colonel Blotto game and proposed a centralized mechanism to address the jamming problem in the IoT systems composed of resource-constrained devices \cite{namvar2016jamming}.}

{
IoT systems could be extremely vulnerable to APTs because of complex types of devices and network connections, lack of detection and defense methods, and restricted computational resources \cite{abomhara2015cyber, hassanzadeh2015towards,zhao2013survey,zhang2014iot,mahmoud2015internet}. Game theory becomes a valuable tool to study APTs in IoT systems and design defense mechanisms against them. For example, in \cite{hu2017defense}, Hu et al., have proposed an expert system based APT detection game and showed its effectiveness to increase the security level of IoT systems; in \cite{lee2018game},  Lee et al., have proposed a game-theoretic vulnerability quantification method for social IoT systems against cyber-attacks including APTs. }

{
In this paper, we adopt \texttt{FlipIt} games to model APTs on IoT systems \cite{van2013flipit}. The \texttt{FlipIt} games have been used extensively to study APTs and investigate their impacts in various applications. In \cite{bowers2012defending}, Bowers et al., have demonstrated the application of \texttt{FlipIt} games to password reset policies, key rotation, VM refresh, and cloud auditing; in \cite{pawlick2015flip}, Pawlick et al., have used \texttt{FlipIt} game to model the competition between a defender and an attacker over the ownership of a cloud server;  in \cite{spyridopoulos2013gameJ}, Spyridopoulos et al., have proposed a variant of \texttt{FlipIt} game to study malware proliferation. These works have focused on the competition between one attacker and one defender over one resource. }

{
\texttt{FlipThem} games have been proposed to extend the \texttt{FlipIt} games and study the interactions between one defender and on attacker over multiple resources \cite{laszka2014flipthem, zhang2015game, leslie2015threshold, leslie2017multi}. In \cite{laszka2014flipthem}, Laszka et al., have proposed an AND control model where an attacker takes control of all resources to compromise the system and an OR model where an attacker only needs to take control of one resource to achieve that; in \cite{zhang2015game}, Zhang et al., have considered a situation where a defender and an attacker have strict constraints on their actions across all the resources; in \cite{leslie2015threshold} and \cite{leslie2017multi}, Leslie et al., have investigated a scenario where an attacker compromises a defender's resources with a threshold. However, they have not considered situations where multiple defenders and attackers interact in the same network of resources. Moreover, they have not captured the risk-dependencies between neighboring resources. }

{
Influence networks have been used to capture the interdependencies among neighboring nodes in cyber risk management \cite{alpcan2010network}. For example, in \cite{miura2008security}, Miura-Ko et al., have adopted influence networks to model interdependent security investments; in \cite{nguyen2009stochastic}, Nguyen et al., have presented an influence network model to study the vulnerability correlations of security assets. In this paper, we combine \texttt{FlipIt} games and influence networks and devise two scenarios of networked \texttt{FlipIt} games to capture the interactions between IoT defenders and APT attackers. }

{
We further capture the interactions between defenders and cyber insurance insurers by a moral-hazard type of principal-agent problem with incomplete information. Moral hazard has been discussed extensively in traditional insurance paradigm \cite{pauly1968economics, shavell1979moral}, and it has also been considered in cyber insurance \cite{gordon2003framework, bailey2014mitigating}. Principal-agent problems have also been applied to analyze cyber insurance \cite{pal2014improving,pal2017security}. The principal-agent problem and the nested \texttt{FlipIt} games constitute a bi-level game\cite{ghotbi2012bilevel, jenabi2013bi}. The game-of-games structure in bi-level games has been presented in various recent studies on cyber security \cite{zhu2015game, zhang2018game}. }

{
Compared with the bi-level game-theoretic cyber insurance framework proposed in \cite{zhang2017bi}, our framework captures the interactions between defenders and attackers with \texttt{FlipIt} games instead of zero-sum games to model APTs on IoT systems. One of our main contributions is that we extend the \texttt{FlipIt} game between one defender and one attacker to a network of \texttt{FlipIt} games by incorporating influence networks. We further consider both distributed and centralized scenarios of IoT defenders and analyze their interactions with both APT attackers and insurers. Our numerical experiments investigate the impacts of homogeneous/heterogeneous networks, homogeneous/heterogeneous players, and network connectivities to the security management of IoT defenders and the insurance contract designation of insurers, which have not been addressed in \cite{zhang2017bi}. The analysis offers several insights on the best practices for IoT defenders on cyber risk management and cyber insurance. Our framework further indicates that the optimal insurance coverage level is $1/2$ in the distributed scenario and the semi-homogeneous scenario and we show that an insurer can make a nonzero profit by providing cyber insurance. } 

\section{Problem Formulation}
\label{sec:ProblemFormulation}
{APTs are different from traditional cyber-attacks with the stealthiness and persistence nature. APT attackers have very specific objectives and compromise a system stealthily and slowly to maintain a small footprint and reduce detection risks. APTs, in general, could persist for long periods of time in a system. Moreover, APT attackers may prefer to stay anonymously and steal sensitive data or information instead of completely destroying services or physical components, which could inflict different types of financial and nonfinancial losses on defenders.    }

{We use \texttt{FlipIt} game to capture the competition between a defender and an APT attacker over the ownership of an IoT device \cite{van2013flipit}. In this game, a player takes control of the IoT device by moving with a cost and he only finds out about the state of the IoT device when he moves. This stealthy aspect of the game makes it suitable to capture APTs. The attacker has a specific objective to maximize his expected controlling time over the device. The attacker's malicious activities during the time that he controls the device could inflict various types of losses on the defender. The objective of the defender is to minimize his total expected losses caused the attacker. }

\begin{figure}[]
\centering
{\includegraphics[width=0.49\textwidth]{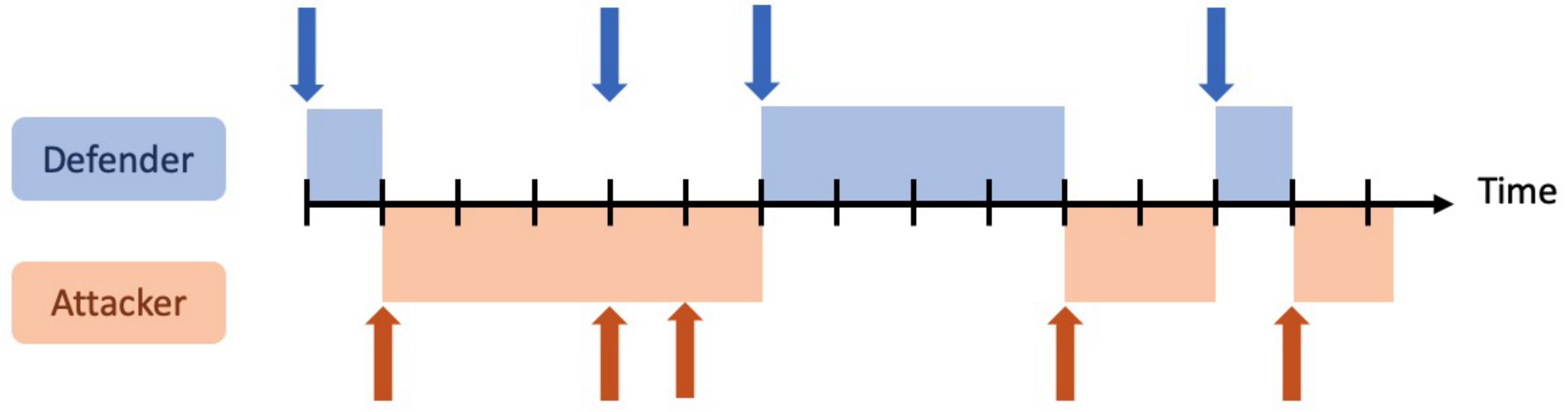}}
\caption{{The \texttt{FlipIt} game on Node $n$ between a defender and an attacker.}}
\label{fig:FlipIt}
\end{figure}

An illustration of the \texttt{FlipIt} game is provided in Fig. \ref{fig:FlipIt}: the game starts at time $0$ and the defender owns the device; at time $1$, the attacker attacks the device and then claims the ownership of it; at time $6$, the defender defends the device and reclaims the ownership of it; the game continues and the ownership of the device switches between the defender and the attacker. {Note that when the defender defends and the attacker attacks at the same time, e.g., time $4$, we follow the tie-breaking rule from \cite{van2013flipit} that their actions are ``canceled" and the ownership of the device does not change. Such a tie-breaking rule considers that the current owner of the device has the advantage to win the competition against the other player with prior knowledge or both players prefer not to move at the same time to avoid revealing themselves. This tie-breaking rule makes the game fully symmetric and allows us to handle ties smoothly.}

Consider an IoT network modeled by a directed graph $\mathcal{G}(\mathcal{N}, \mathcal{E})$ with $\mathcal{N}:=\{1,..., N\}$ and $\mathcal{E}$ denoting the set of nodes and the set of edges, respectively. Each node $n\in\mathcal{N}$ represents an IoT device which is controlled in turn by a defender and an APT attacker. Let us denote the sequence of the defender's defending time or the attacker's attacking time at node $n$ as an infinite non-decreasing sequence which can be specified as follows:
\begin{equation}
\label{eq:timeD}
\text{Defender:    } t_{d,n,1}, t_{d,n,2}, ..., t_{d,n,k-1}, t_{d,n,k}, t_{d,n,k+1},... 
\end{equation}
\begin{equation}
\label{eq:timeA}
\text{Attacker:    } t_{a,n,1}, t_{a,n,2}, ..., t_{a,n,k-1}, t_{a,n,k}, t_{a,n,k+1}, ... 
\end{equation}
Let $\phi_{d,n}(t_{d,n, k})$ or $\phi_{a,n}(t_{a,n,k})$ denote the feedback that the defender or the attacker at node $n$ receives when he defends or attacks at $t_{d,n,k}$ or $t_{a,n,k}$, where $\phi_{d,n}\in\Phi_{d,n}$ or $\phi_{a,n}\in\Phi_{a,n}$ with $\Phi_{d,n}$ or $\Phi_{a,n}$ denoting the set of all the possible forms of the feedback of the defender or the attacker, respectively. Some of the possible forms are listed here as examples \cite{van2013flipit}.
\begin{itemize}
\item (Non-adaptive) a player obtains no information, i.e., $\phi_{d,n}(t_{d,n,k}) = 0$ or $\phi_{a,n}(t_{a, n, k}) = 0$;
\item (Last move) a player obtains the opponent's last move time, i.e., $\phi_{d,n}(t_{d, n, k}) = \max \{t_{a, n, k'}|t_{a, n, k'} \leq t_{d, n, k}\}$ or  $\phi_{a,n}(t_{a, n, k}) = \max \{t_{d, n, k'}|t_{d, n, k'} \leq t_{a, n, k}\}$;
\item (Full history) a player obtains the full history of both players' moves, i.e., $\phi_{d,n}(t_{d, n, k}) = \left((t_{d, n, 1}, ..., t_{d, n, k}), (t_{a, n, 1}, ..., t_{a, n, k'})\right)$ or  $\phi_{a,n}(t_{a, n, k}) =  \left((t_{d, n, 1}, ..., t_{d, n, k'}), (t_{a, n, 1}, ..., t_{a, n, k})\right)$ where $t_{a, n, k'}= \max \{t_{a, n, k'}|t_{a, n, k'} \leq t_{d, n, k}\}$ or  $t_{d, n, k'} = \max \{t_{d, n, k'}|t_{d, n, k'} \leq t_{a, n, k}\}$.
\end{itemize}
A player decides the time of his next move following a strategy  based on his current time of move and the received feedback. Let $p_{d, n}:\mathbb{R}_{\geq 0}\times \Phi_{d,n}\rightarrow \mathbb{R}_{\geq 0}$ and $p_{a, n}:\mathbb{R}_{\geq 0}\times \Phi_{a,n}\rightarrow \mathbb{R}_{\geq 0}$ denote the defending strategy and the attacking strategy for the defender and the attacker at node $n$, respectively, we have
\begin{itemize}
\item when the defender defends at time $t_{d, n, k}$ and receives feedback $\phi_{d,n}(t_{d, n, k})$, his next defend will be at $t_{d,n,k+1} = p_{d,n}(t_{d,n,k}, \phi_{d,n}(t_{d, n, k}))$;
\item when the attacker attacks at time $t_{a, n, k}$ and receives feedback $\phi_{a,n}(t_{a, n, k})$, his next attack will be at  $t_{a,n,k+1} = p_{a,n}(t_{a,n,k}, \phi_{a,n}(t_{a, n, k}))$.
\end{itemize}

Let $\alpha_n \in [0, 1]$ denote the expected proportion of the time that the attacker controls node $n$ when the defender adopts a strategy of $p_{d, n}$ and the attacker adopts a strategy of $p_{a, n}$. While controlling the IoT devices, the attackers can benefit from conducting malicious activities, such as monitoring sensitive operations, stealing private information, and injecting ransomware. The objective of the attacker is to maximize the expected proportion of the time that he controls the device, which can be captured by the following optimization problem: 
\begin{equation}
\label{eq:AttackernMax}
\max\limits_{p_{a, n} \in \mathcal{S}_{a,n}} \  \gamma_{a, n} \alpha_n - c_{a, n}(p_{a, n}),
\end{equation}
where $\gamma_{a, n}\in\mathbb{R}_{\geq 0}$ denotes the utility parameter of the attacker and $\mathcal{S}_{a,n}=\{p_{a,n}\}$ denotes the set of all possible strategies by the attacker. Function $c_{a, n}:\mathcal{S}_{a,n}\rightarrow \mathbb{R}_{\geq 0}$ captures the cost of the attacker when he adopts a strategy of $p_{a, n}$.  Different $\gamma_{a, n}$ and $c_{a,n}$ capture the trade-offs between a larger proportion of controlling time and a smaller cost of the attacker. 

The attackers' activities can inflict huge financial and nonfinancial losses on defenders. A defender at one node may also face losses caused by the attackers at neighboring nodes, for example, the attackers can denial the services, send misleading information, or spread computer viruses to this node. Moreover, the impacts of attackers could propagate through network connections, such as computer worms and viruses, and thus, the defender may even face losses caused by undirectly connected nodes. 

{To measure the losses of defenders with respect to the attackers' controlling times, we first define the defender's risk level at node $n$ as
\begin{equation}
\label{eq:DefenderRBasic}
R_n = g_n(\alpha_1, \alpha_2, ..., \alpha_N),
\end{equation} 
where $g_n:([0,1])^N\rightarrow \mathbb{R}_{\geq 0}$ is a function of all the expected proportions of attackers' controlling times in this network.} We further assume that the defender's loss $X_n$ follows an exponential distribution of $\frac{1}{\gamma_{d,n}R_n}$ with its density function expressed as
\begin{equation}
\label{eq:Density}
\begin{array}{l}
f(x_n|\gamma_{d,n}R_n)=\frac{1}{\gamma_{d,n}R_n} \exp\left({-\frac{1}{\gamma_{d,n}R_n}x} \right),\forall x_n\in \mathbb{R}_{\geq 0}.
\end{array}
\end{equation} 
The parameter $\gamma_{d,n}\in\mathbb{R}_{>0}$ controls the level of the defender's losses, and a crucial IoT devices may have a larger $\gamma_{d, n}$. For example, an IoT device which monitors or controls nuclear reactors has a larger $\gamma_{d, n}$ than an IoT device which records the room temperature in a supermarket. {The exponential distribution has been widely applied to risk and insurance analysis \cite{dassios1989martingales, cizek2005statistical, balakrishnan2018exponential}.} We can see from the exponential distribution that the defender has a larger probability of facing a larger loss when he has a higher risk level. Furthermore, the expected loss satisfies that $\mathbb{E}(X_n) = \gamma_{d,n}R_{n}$, and the defender has a larger expected loss with the increase of his risk level. {An illustration of the relations between $p_{a,n}$, $p_{d,n}$, $\alpha_n$, $R_n$, and $X_n$ can be found in Fig. \ref{fig:D}.}
{\begin{remark}
\label{rem:LossNoNetwork}
When node $n$ has no network connections to any other nodes or there is only one node $n$ in this network, we could assume that $R_n =g(\alpha_1, \alpha_2, ..., \alpha_N) = \alpha_n$, which indicates that the defender at node $n$ has a higher risk level or faces a larger expected loss if the attacker at node $n$ has a larger expected proportion of controlling time.
\end{remark}}

\begin{figure}[]
\centering
{\includegraphics[width=0.49\textwidth]{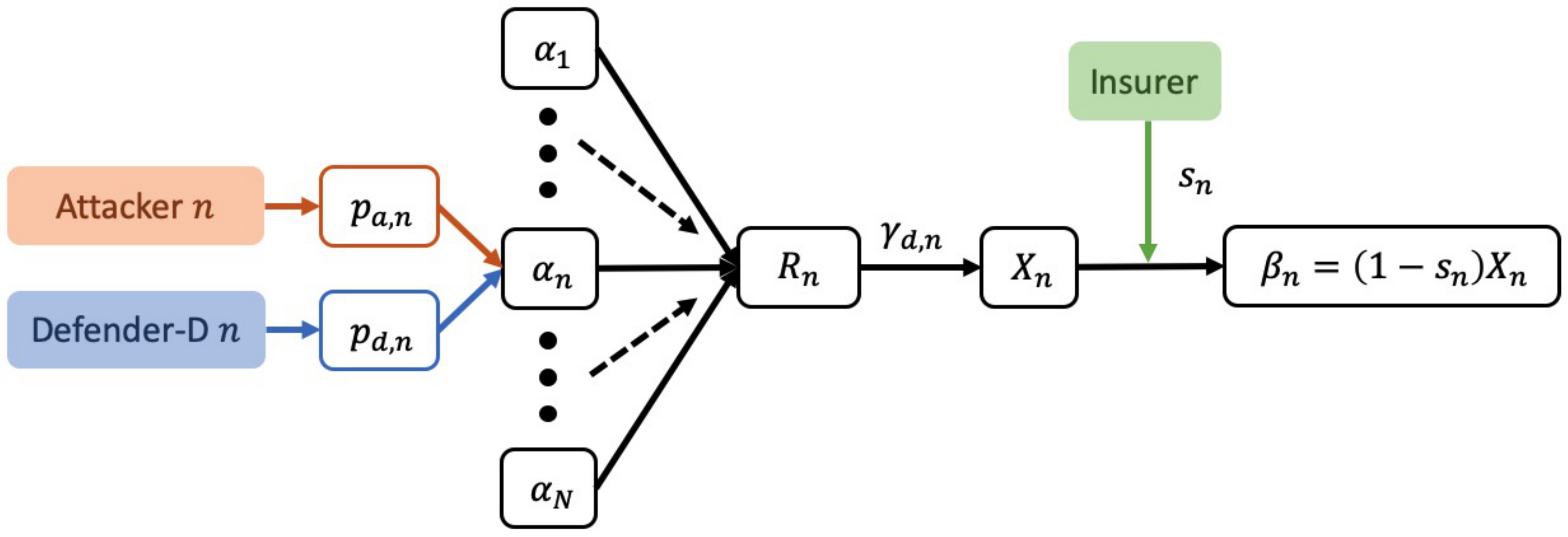}}
\caption{{Interactions among players at node $n$ in Defender-D.}}
\label{fig:D}
\end{figure}

The defenders can also purchase cyber-insurance to cover their losses. After paying a premium to the insurer, a defender receives a coverage from the insurer when he faces losses caused by attackers. In this paper, we consider two different scenarios of defenders, which are discussed separately in the following subsections. In summary, Defender-D represents a ``distributed" case where each defender at each node only considers his own losses, while Defender-C indicates a ``centralized" case where a global defender considers the overall loss of this IoT network.  {We present two examples to illustrate applications of Defender-D and Defender-C.}

{
\begin{example}[Centralized Scenario]
Consider a smart home which contains IoT devices, such as laptops, wireless routers, smart speakers, cameras, and sweeping robots. An APT attacker could compromise a smart speaker and record private conversions over days or months. The smart home owner could hardly notice the existence of the attacker as the smart speaker functions normally. The attacker could further leak the recorded conversions to the public or blackmail the owner. Similarly, APT attackers could record private videos from cameras or steal sensitive documents from laptops. The smart home owner could choose to insure his IoT devices with cyber insurance and the insurer would provide financial coverage to his losses from the leakage of private information by APT attackers. It is natural for the smart home owner to insure all the devices together, which indicates a centralized scenario.  
\end{example}}

{
\begin{example}[Distributed Scenario]
In vehicular applications and inter-networking technologies (VANET), a vehicle dynamically adjusts its route to destinations by communicating with different IoT devices, such as other vehicles, smart traffic lights, and phones owned by pedestrians. An APT attacker could compromise the vehicle and hijack the data sent and received by it. Different from traditional cyber-attacks, APTs aim not to immediately damage the vehicle as it can be detected by the security systems. With small modifications on the data over a long period of time, the vehicle may arrive at the desired destination by the attacker \cite{engoulou2014vanet,gantsou2015use,zhang2017strategic}. Moreover, an APT attacker could even affect the route of a target vehicle by sending misleading data from other vehicles and traffic lights. Cyber insurance could provide financial coverage to the passengers for their losses from arriving at a wrong destination. In this case, the security status of the vehicle is also affected by other IoT devices in this network. 
\end{example}}

\subsection{Defender-D}
In this case, there exist $N$ defenders in this IoT network and each IoT device is occupied by a defender. An illustration of the interactions between an attacker, a defender, and an insurer in one node is provided in Fig. \ref{fig:D}. Note that each attacker has no information about the players in other nodes and his objective is to maximize his expected proportion of controlling time $\alpha_n$ as in (\ref{eq:AttackernMax}). Each defender has no information about the actions of the players in other nodes, however, he may face losses caused by the attackers from other nodes.

After paying a premium to the insurer, the defender at node $n$ is entitled to receive a coverage $s_nX_n$ from the insurer when he faces a loss of $X_n$, where $s_n\in (0,1]$ denotes the coverage level of the insurance. The defender now has an effective loss of $\beta_n = X_n - s_nX_n = (1-s_n)X_n$. Note that $s_n = 0$ indicates that there is no insurance and the defender's effective loss $\beta_n$ equals to his direct loss $X_n$. As a result, the expected effective loss of the defender can be described as
\begin{equation}
\label{eq:DefendernEffectiveLoss}
\mathbb{E}[\beta_n] = \mathbb{E}[(1-s_n)X_n]=(1-s_n)\gamma_{d, n} R_n.
\end{equation}
{The objective of each defender is to minimize his expected effective loss, which can be captured by the following optimization problem after plugging (\ref{eq:DefenderRBasic}) into (\ref{eq:DefendernEffectiveLoss}):
\begin{equation}
\label{eq:DefendernMin}
\min\limits_{p_{d, n}\in \mathcal{S}_{d,n}} \  (1-s_n)\gamma_{d, n}g_n(\alpha_1, ...,\alpha_{N}) + c_{d, n}(p_{d, n}),
\end{equation}
where $\mathcal{S}_{d,n}=\{p_{d,n}\}$ denotes the set of all possible strategies by the defender. Function $c_{d, n}:\mathcal{S}_{d,n}\rightarrow \mathbb{R}_{\geq 0}$ captures the cost of the defender under the strategy $p_{d, n}$. The parameter $\gamma_{d, n}$ and function $c_{d, n}$ capture the trade-offs between a smaller expected effective loss and a higher cost. }

By comparing (\ref{eq:AttackernMax}) and (\ref{eq:DefendernMin}), we can see that the attacker aims to maximize the expected proportion of his controlling time while the defender aims to minimize it. The conflicting interest between the defender and the attacker constitutes a \texttt{FlipIt-D} game in each IoT device, and its Nash equilibrium is defined as follows.
\begin{definition}
\label{def:FlipItn}
Let $\mathcal{S}_{a, n} $ and $\mathcal{S}_{d, n}$ denote the strategy sets for the attacker and the defender at node $n$, respectively; let $J_{a,n}(p_{a,n},p_{d,n})$ and $J_{d,n}(p_{d,n},p_{a,n};s_n, \{\alpha_{m}\}_{m\neq n})$ denote the objective functions from (\ref{eq:AttackernMax}) and (\ref{eq:DefendernMin}), respectively. A strategy profile $\{p_{a,n}^*, p_{d,n}^*\}$ is a Nash equilibrium of the \texttt{FlipIt-D} game at node $n$ defined by $\langle \{\text{Attacker, Defender-D}\},  \{\mathcal{S}_{a,n},\mathcal{S}_{d,n}\}, \{J_{a,n}, J_{d,n}\}\rangle$ if 
\[ J_{a, n}(p_{a,n}^*, p_{d,n}^*) \geq J_{a, n}(p_{a,n}, p_{d,n}^*), \ \  \forall p_{a,n} \in \mathcal{S}_{a,n};\]
\[\begin{array}{l}
J_{d, n}(p_{d,n}^*, p_{a,n}^*;s_n,\{\alpha_{m}\}_{m\neq n}) \\ \ \ \ \ \ \ \ \ \ \  \leq J_{d, n}(p_{d,n}, p_{a,n}^*;s_n,\{\alpha_{m}\}_{m\neq n}), \ \  \forall p_{d,n} \in \mathcal{S}_{d,n}.
\end{array} \]
Furthermore, a strategy profile $\{\{p_{a,n}^*\}, \{p_{d,n}^*\}\}$ is a global Nash equilibrium of the \texttt{G-FlipIt-D} game defined by $\langle\{\text{Attackers, Defender-Ds}\},\{\{\mathcal{S}_{a,n}\},\{\mathcal{S}_{d,n}\}\}, \{\{J_{a,n}\}, \{J_{d,n}\}\}\rangle$ if
\[ J_{a, n}(p_{a,n}^*, p_{d,n}^*) \geq J_{a, n}(p_{a,n}, p_{d,n}^*), \ \  \forall p_{a,n} \in \mathcal{S}_{a,n}, n \in\mathcal{N};\]
\[\begin{array}{l}
J_{d, n}(p_{d,n}^*, p_{a,n}^*;s_n,\{\alpha_{m}^*\}_{m\neq n}) \\ \ \ \ \ \  \leq J_{d, n}(p_{d,n}, p_{a,n}^*;s_n,\{\alpha_{m}^*\}_{m\neq n}), \ \  \forall p_{d,n} \in \mathcal{S}_{d,n}, n \in\mathcal{N},
\end{array} \]
where $\alpha_m^*$ denotes the expected proportion of the time that the attacker controls node $m$ when the defender adopts a strategy of $p_{d, m}^*$ and the attacker adopts a strategy of $p_{a, m}^*$.
\end{definition}

\begin{figure}[]
\centering
{\includegraphics[width=0.3\textwidth]{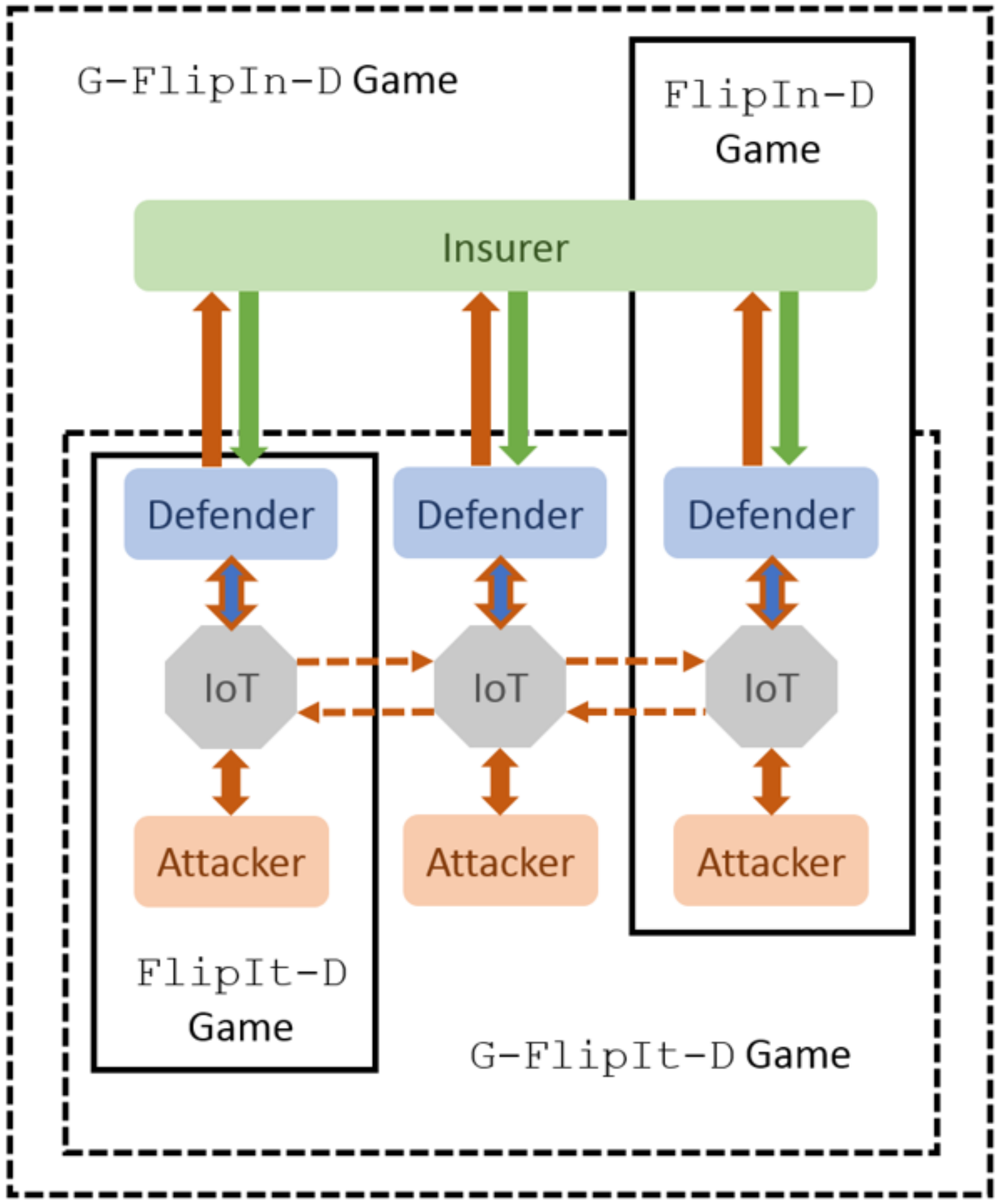}}
\caption{The structure of the games in Defender-D. There are $N$ defenders in this case. The \texttt{FlipIt-D} game is played between a defender and an attacker on one IoT device while the \texttt{G-FlipIt-D} game captures the interactions among all defenders and all attackers over the IoT network. The \texttt{FlipIn-D} game captures the interactions between a defender, an attacker and the insurer while the \texttt{G-FlipIn-D} game describes the interactions between all defenders, all attackers, and the insurer. }
\label{fig:Defendern}
\end{figure}

{An illustration of the \texttt{FlipIt-D} game and the \texttt{G-FlipIt-D} game is provided in Fig. \ref{fig:Defendern}. The Nash equilibrium of the \texttt{FlipIt-D} game is affected by the strategies of other players in the network, and it also affects the results of other \texttt{FlipIt-D} games.} The complex interactions among all attackers and defenders constitute a \texttt{G-FlipIt-D} game, whose Nash equilibrium is achieved when all the \texttt{FlipIt-D} games in each node reach their Nash equilibriums. Note that the Nash equilibrium of the \texttt{FlipIt-D} game is affected by the coverage level $s_n$ through $J_{d, n}$, and the Nash equilibrium of the \texttt{G-FlipIt-D} game is affected by the coverage levels $\{s_n\}$ of all nodes.  

The defender at node $n$ has to pay a premium $T_n\in\mathbb{R}_{\geq 0}$ to receive a coverage of his losses. A rational defender purchases cyber insurance only when his total loss including the premium under the insurance is lower than his total loss without the insurance, which can be expressed as the following individual rationality constraint for the defender:
{
\begin{equation}
\label{eq:DefendernIR}
\begin{array}{l}
(1-s_n)\gamma_{d, n} g_n\left(\alpha_1, ..., \alpha_n^*(s_n), ..., \alpha_N\right) + c_{d, n}\left(p_{d, n}^*(s_n)\right)  + T_n \\ \ \ \ \ \ \ \ \ \ \ \leq \gamma_{d,n}g_n\left(\alpha_1', ..., \alpha_n^*(0), ..., \alpha_N'\right)+ c_{d, n}\left(p_{d, n}^*(0)\right).
\end{array}
\end{equation}}
Note that we have abused the notation with the purpose of simplifying illustration: $\alpha_n^*(s_n)$ and $\alpha_n^*(0)$ denote the equilibrium expected proportions of the attacker's controlling time at node $n$ with the insurance of coverage level $s_n$ and without the insurance, respectively; $p_{d,n}^*(s_n)$ and $p_{d,n}^*(0)$ denote the equilibrium {defending strategies} of the defender at node $n$  with the insurance of coverage level $s_n$ and without the insurance, respectively. Specially, $\alpha_m$ and $\alpha_m'$ represent the expected proportions at node $m \neq n$ given the coverage level $s_n$ and $s_n=0$ at node $n$, respectively. 

{The insurer receives a premium $T_n$ from the defender and provides coverage $s_n\gamma_{d, n}g_n\left(\alpha_1, ..., \alpha_n^*(s_n), ..., \alpha_N\right)$ to him. Thus, the insurer has a profit of $T_n-s_n\gamma_{d, n}g_n\left(\alpha_1, ..., \alpha_n^*(s_n), ..., \alpha_N\right)$.} The insurer insures the defender only when he can make a profit, which can be expressed as the following individual rationality constraint for the insurer:
{
\begin{equation}
\label{eq:InsurernIR}
\begin{array}{l}
T_n-s_n\gamma_{d, n}g_n\left(\alpha_1, ..., \alpha_n^*(s_n), ..., \alpha_N\right) \geq 0.
\end{array}
\end{equation}}

The objective of the insurer is to make a larger profit from providing the insurance, which can be captured as the following optimization problem:
{
\begin{equation}
\label{eq:Insurern}
\begin{array}{l}
\max\limits_{ \{s_n,T_n\}  } \  T_n-s_n\gamma_{d, n}g_n\left(\alpha_1, ..., \alpha_n^*(s_n), ..., \alpha_N\right)\\
\begin{array}{cc}
{\text{s.t.}}&{\begin{array}{c}
(\ref{eq:DefendernIR}), (\ref{eq:InsurernIR}). 
\end{array}   }
\end{array}
\end{array}
\end{equation}}
We can see that the solution of (\ref{eq:Insurern}) depends on both the outcome of the \texttt{FlipIt-D} game at node $n$ and the expected proportions of the attackers' controlling times at other nodes. An effective insurance contract must satisfy both (\ref{eq:DefendernIR}) and (\ref{eq:InsurernIR}). The interactions between the defender and the insurer constitute a moral-hazard type of principal-agent problem where the insurer as a principal announces the insurance contract and the defender as an agent makes rational decisions on purchasing the insurance. Note that the insurer has incomplete information about the defender as he decides the insurance contract based on the outcome of the \texttt{FlipIt-D} game. 

The complex interactions between an attacker, a defender, and an insurer constitute a bi-level \texttt{FlipIn-D} game whose Nash equilibrium is defined as follows.
\begin{definition}
\label{def:BiLeveln}
{Let $\mathcal{S}_{i, n} = \{ \{s_n, T_n\} | s_n \in (0, 1], T_n \in \mathbb{R}_{\geq 0}, (\ref{eq:DefendernIR}), (\ref{eq:InsurernIR})\}$ denote the action set for the insurer; let $J_{i, n}(s_n, T_n)$ denote the objective function from (\ref{eq:Insurern}). Recall $\mathcal{S}_{d,n}$, $\mathcal{S}_{a,n}$, $J_{d,n}$, $J_{a,n}$ from Definition \ref{def:FlipItn}, a strategy profile $\{p_{a,n}^*, p_{d,n}^*, \{s_n^*, T_n^*\}\}$ is a Nash equilibrium of the bi-level \texttt{FlipIn-D} game defined by $\left\langle\{\text{Attacker, Defender-D, Insurer}\}, \{ \mathcal{S}_{a,n},\mathcal{S}_{d,n}, \mathcal{S}_{i, n}\},\right.$ $\left.\{J_{a,n}, J_{d,n}, J_{i, n}\}\right\rangle$ if $\{s_n^*, T_n^*\}$ solves (\ref{eq:Insurern}) and $\{p_{a,n}^*, p_{d,n}^*\}$ is a Nash equilibrium of the \texttt{FlipIt-D} game defined in Definition \ref{def:FlipItn} under $\{s_n^*, T_n^*\}$.}

{Furthermore, a strategy profile $\{\{p_{a,n}^*\}, \{p_{d,n}^*\}, \{\{s_n^*\}, \{T_n^*\}\}\}$ is a global Nash equilibrium of the bi-level \texttt{G-FlipIn-D} game defined by $\left\langle\{\text{Attackers, Defender-Ds, Insurer}\}  \{\{\mathcal{S}_{a,n}\},\{\mathcal{S}_{d,n}\}, \{\mathcal{S}_{i, n}\}\},\right.$ $\left. \{\{J_{a,n}\}, \{J_{d,n}\}, \{J_{i, n}\}\}\right\rangle$ if $\{s_n^*, T_n^*\}$ solves (\ref{eq:Insurern}) for all $n\in\mathcal{N}$ and $\{\{p_{a,n}^*\}, \{p_{d,n}^*\}\}$ is a Nash equilibrium of the \texttt{G-FlipIt-D} game defined in Definition \ref{def:FlipItn} under $\{\{s_n^*\}, \{T_n^*\}\}$.}
\end{definition}
An illustration of the \texttt{FlipIn-D} game and the \texttt{G-FlipIn-D} game has been provided in Fig. \ref{fig:Defendern}. The Nash equilibrium of the \texttt{FlipIn-D} game is affected by the other \texttt{FlipIn-D} games through $\alpha_m$ in (\ref{eq:Insurern}). The complex interactions among all players constitute a \texttt{G-FlipIn-D} game, whose Nash equilibrium is achieved when all the \texttt{FlipIn-D} games reach their Nash equilibriums.

\subsection{Defender-C}
In this case, there exists only one global defender in this IoT network. The interactions between the defender, attackers, and an insurer are illustrated in Fig. \ref{fig:C}. Similar to the case of Defender-D, each attacker has no information about the players in other nodes. The insurer offers one insurance contract $\{s,T\}$ to the defender which covers the losses of all the IoT devices. The defender's expected effective loss $\beta$ in this case can be expressed as
{
\begin{equation}
\label{eq:DefenderNEffectiveLoss}
\begin{array}{l}
\mathbb{E}(\beta) = \mathbb{E}\left((1-s)\sum\limits_{n=1}^N X_n\right) = (1-s)\sum\limits_{n=1}^N\gamma_{d,n}g_n\left(\alpha_1, ..., \alpha_N\right).
\end{array}
\end{equation}}

The defender aims to minimize his overall expected effective losses, which can be captured by the following optimization problem:
{
\begin{equation}
\label{eq:DefenderNMin}
\min\limits_{\{p_{d, n}\}}\ (1-s)\sum\limits_{n=1}^N \gamma_{d,n}g_n\left(\alpha_1, ..., \alpha_N\right) + \sum\limits_{n=1}^Nc_{d, n}\left(p_{d, n}\right) .
\end{equation}}

\begin{figure}[]
\centering
{\includegraphics[width=0.49\textwidth]{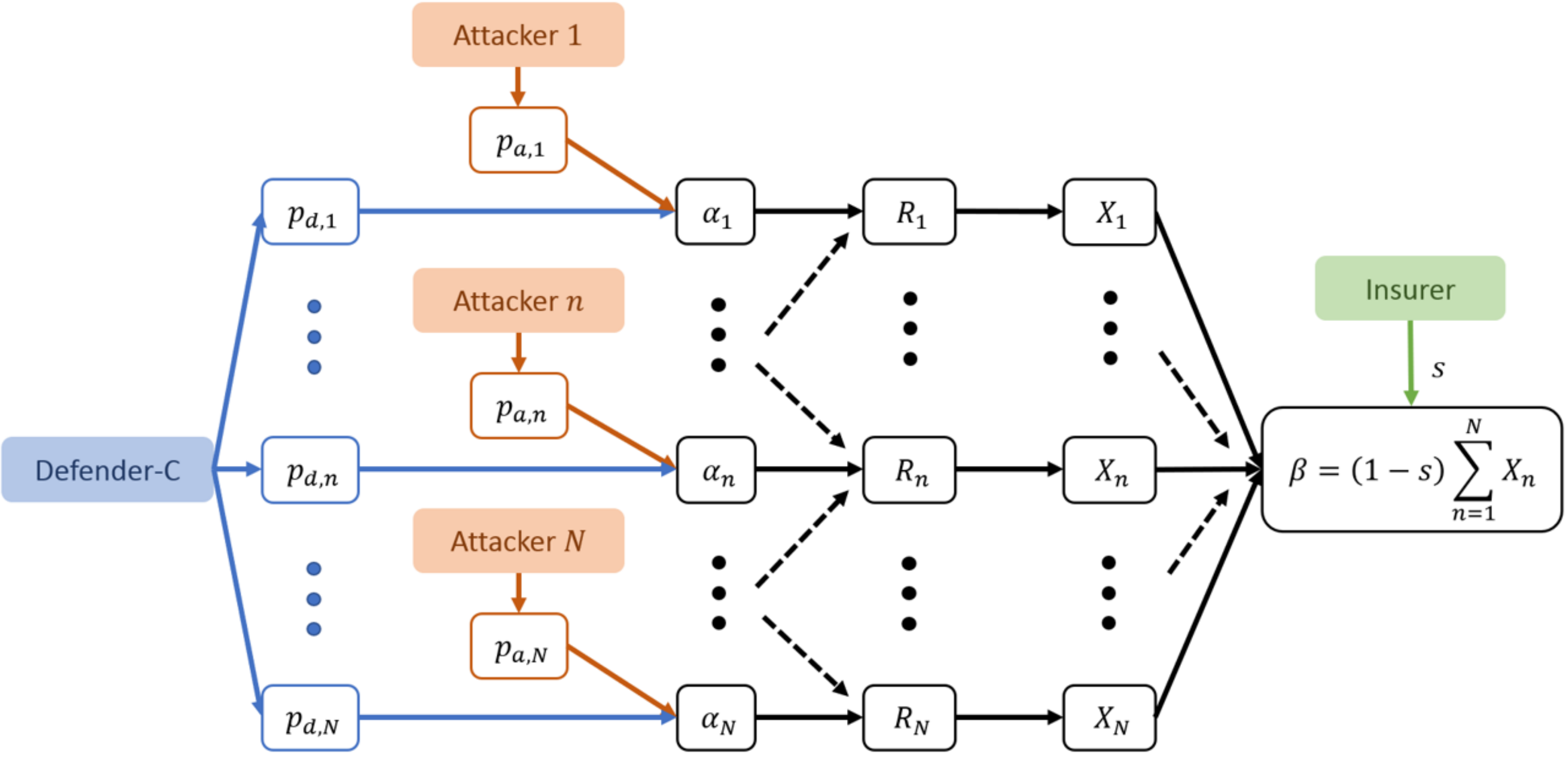}}
\caption{Interactions among players in the IoT network in Defender-C. }
\label{fig:C}
\end{figure}

The interactions between the defender and $N$ attackers in this IoT network constitute a global \texttt{FlipIt-C} game, which is defined as follows.
{
\begin{definition}
\label{def:FlipItN}
Let $\mathcal{S}_{a, n} $ and $\mathcal{S}_{d} $ denote the strategy sets for the attacker and the defender, respectively; let $J_{a,n}(p_{a,n}, p_{d, n})$ and $J_{d}(\{p_{d,n}\},\{p_{a,n}\};s)$ denote the objective functions from (\ref{eq:AttackernMax}) and (\ref{eq:DefenderNMin}), respectively. 
A strategy pair $\{\{p_{a,n}^*\}, \{p_{d,n}^*\}\}$ is a Nash equilibrium of the \texttt{FlipIt-C} game defined by $\left\langle\{\text{Attackers, Defender-C}\}, \{\{\mathcal{S}_{a,n}\},\mathcal{S}_{d}\}, \{\{J_{a,n}\}, J_{d}\}\right\rangle)$ if 
\[ J_{a, n}(p_{a,n}^*, p_{d,n}^*) \geq J_{a, n}(p_{a,n}, p_{d,n}^*), \ \  \forall p_{a,n} \in \mathcal{S}_{a,n}, n\in\mathcal{N};\]
\[ J_{d}(\{p_{d,n}^*\},\{p_{a,n}^*\};s) \leq J_{d}(\{p_{d,n}\},\{p_{a,n}^*\};s), \ \  \forall \{p_{d,n}\} \in \mathcal{S}_{d}.\]
\end{definition}}
Note that the Nash equilibrium of the \texttt{FlipIt-C} game is affected by the coverage level $s$ through $J_{d}$. {In the previous subsection, we have shown that $N$ \texttt{FlipIt-D} games constitute a global \texttt{G-FlipIt-D} game, while in this subsection, we show that the \texttt{FlipIt-C} game may be decentralized into $N$ local \texttt{L-FlipIt-C} games under certain conditions. }
{
\begin{remark}[Decentralization]
\label{rem:DefenderNMin}
If $g_n(\alpha_1, ..., \alpha_N)$ is additively separable for all $1 \leq n \leq N$, i.e., $g_n(\alpha_1, ..., \alpha_N)=\sum_{m=1}^Ng_{n,m}(\alpha_m)$, we have $\sum_{n=1}^N\gamma_{d,n} g_n\left(\alpha_1, ..., \alpha_N\right) =\sum_{n=1}^N\sum_{m=1}^N \gamma_{d,n} g_{n,m}\left(\alpha_m\right)=\sum_{n=1}^N\sum_{m=1}^N \gamma_{d,m} g_{m,n}\left(\alpha_n\right)$. Thus, solving the global problem (\ref{eq:DefenderNMin}) is equivalent to solving the following $N$ sub-problems at each node:
\begin{equation}
\label{eq:DefenderNMinSub}
\min\limits_{p_{d, n}}\ (1-s)\sum_{m=1}^N \gamma_{d,m} g_{m,n}\left(\alpha_n\right) +c_{d, n}\left(p_{d, n}\right) .
\end{equation}
\end{remark}
With Remark \ref{rem:DefenderNMin}, the interactions between the attacker and the defender at node $n$ constitute a \texttt{L-FlipIt-C} game, which is defined as follows.}
{
\begin{definition}
\label{def:FlipItNn}
Let $\mathcal{S}_{a, n} $ and $\mathcal{S}_{d, n} $ denote the strategy sets for the attacker and the defender at node $n$, respectively; let $J_{a,n}(p_{a,n},p_{d,n})$ and $J_{d,n}(p_{d,n},p_{a,n};s)$ denote the objective functions from (\ref{eq:AttackernMax}) and (\ref{eq:DefenderNMinSub}), respectively. 
A strategy profile $\{p_{a,n}^*, p_{d,n}^*\}$ is a Nash equilibrium of the \texttt{L-FlipIt-C} game at node $n$ defined by $\left\langle\{\text{Attacker, Defender-C}\},\{\mathcal{S}_{a,n},\mathcal{S}_{d,n}\}, \{J_{a,n}, J_{d,n}\}\right\rangle$ if 
\[ J_{a, n}(p_{a,n}^*, p_{d,n}^*) \geq J_{a, n}(p_{a,n}, p_{d,n}^*), \ \  \forall p_{a,n} \in \mathcal{S}_{a,n};\]
\[\begin{array}{l}
J_{d, n}(p_{d,n}^*, p_{a,n}^*;s) \leq J_{d, n}(p_{d,n}, p_{a,n}^*;s), \ \  \forall p_{d,n} \in \mathcal{S}_{d,n}.
\end{array} \]
\end{definition}
An illustration of the \texttt{FlipIt-C} game and the \texttt{L-FlipIt-C} games is provided in Fig. \ref{fig:DefenderN}. We can see that all the \texttt{L-FlipIt-C} games are independent of each other as (\ref{eq:AttackernMax}) and (\ref{eq:DefenderNMinSub}) do not depend on the outcomes of other \texttt{L-FlipIt-C} games. However, the \texttt{L-FlipIt-C} game at node $n$ takes the parameters $\{\gamma_m\}$ from other nodes into consideration. Note that when $g_n(\alpha_1, ..., \alpha_N)$ is not additively separable, we cannot decentralize the \texttt{FlipIt-C} game and obtain the \texttt{L-FlipIt-C} games. }

\begin{figure}[]
\centering
{\includegraphics[width=0.3\textwidth]{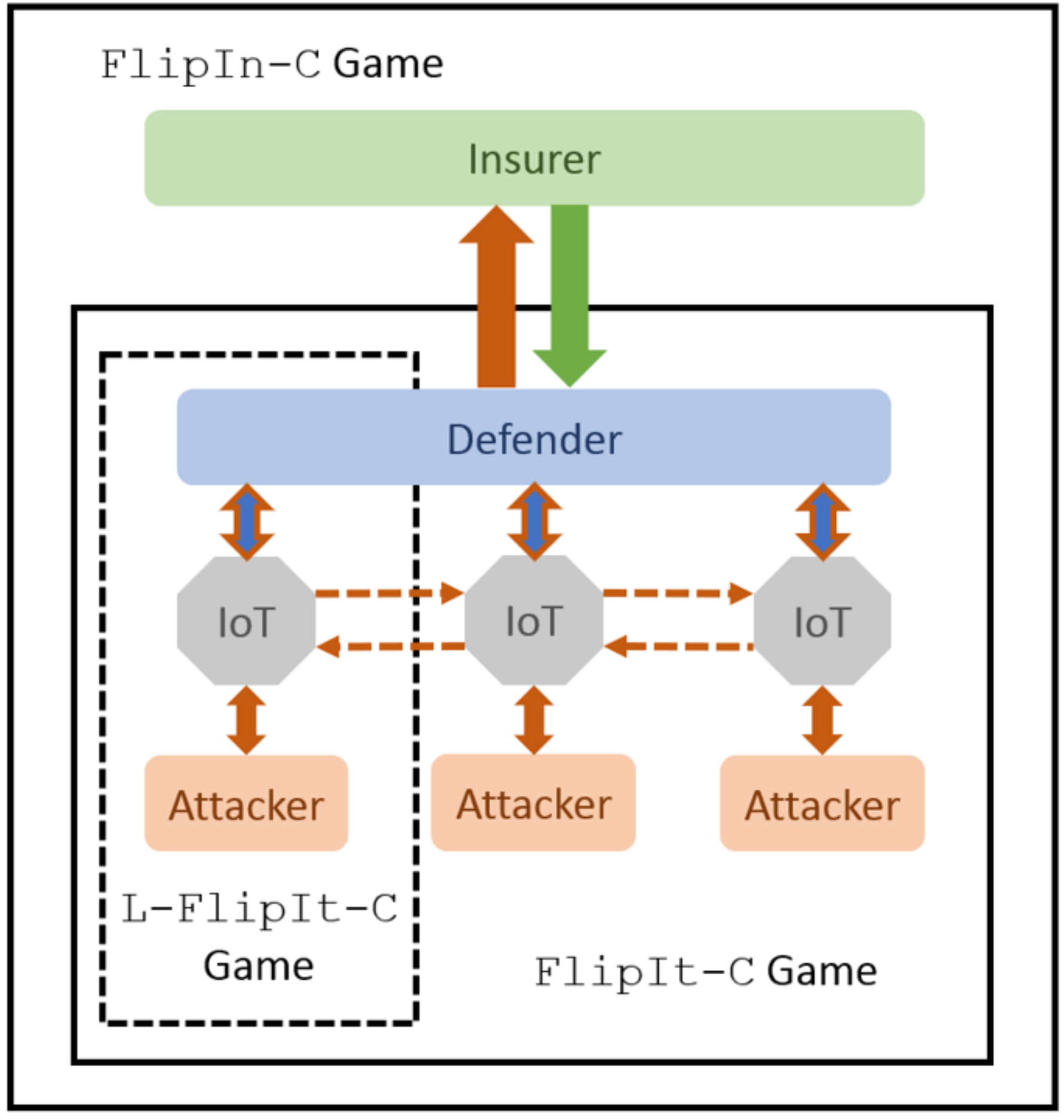}}
\caption{The structure of the games in Defender-C. There is only one defender in this case. The \texttt{FlipIt-C} game captures the interactions between a defender and all attackers in the IoT network while the \texttt{L-FlipIt-C} game captures the interactions between the defender and an attacker on one IoT device. The \texttt{FlipIn-C} captures the interactions between the defender, all attackers, and the insurer. }
\label{fig:DefenderN}
\end{figure}

The defender pays a premium $T\in\mathbb{R}_{\geq 0}$ to insure the IoT network and receive a coverage when he faces losses from any devices. Following similar steps in Section \ref{sec:ProblemFormulation}.A., the defender's individual rationality constraint in this case can be written as
{
\begin{equation}
\label{eq:DefenderNIR}
\begin{array}{l}
(1-s)\sum\limits_{n=1}^N \gamma_{d, n} g_n\left(\alpha_1^*(s), ..., \alpha_N^*(s)\right)  + \sum\limits_{n=1}^Nc_{d, n}\left(p_{d, n}^*(s)\right)   + T \\ \ \ \ \ \ \ \ \ \ \ \leq \sum\limits_{n=1}^N\gamma_{d, n} g_n\left(\alpha_1^*(0), ..., \alpha_N^*(0)\right) + \sum\limits_{n=1}^Nc_{d, n}\left(p_{d, n}^*(0)\right).
\end{array}
\end{equation}}
The insurer's individual rationality constraint can be written as
{
\begin{equation}
\label{eq:InsurerNIR}
\begin{array}{l}
T-s\sum\limits_{n=1}^N g_n\left(\alpha_1^*(s), ..., \alpha_N^*(s)\right) \geq 0.
\end{array}
\end{equation}}

The insurer aims to maximize the profit as follows:
{
\begin{equation}
\label{eq:InsurerN}
\begin{array}{l}
\max\limits_{ \{s,T\}  } \ T-s\sum\limits_{n=1}^N\gamma_{d, n} g_n\left(\alpha_1^*(s), ..., \alpha_N^*(s)\right)\\
\begin{array}{cc}
{\text{s.t.}}&{\begin{array}{c}
(\ref{eq:DefenderNIR}), (\ref{eq:InsurerNIR}). 
\end{array}   }
\end{array}
\end{array}
\end{equation}}

Similar to the case in the previous subsection, the interactions between the defender and the insurer constitute a principal-agent problem with incomplete information; the complex interactions between the defender, the attackers, and the insurer constitute a bi-level \texttt{FlipIn-C} game whose Nash equilibrium is defined as follows. 
{
\begin{definition}[Equilibrium Concept for \texttt{FlipIn-C}]
\label{def:BiLevelN}
Let $\mathcal{S}_{i} = \{ \{s, T\} | s \in (0, 1], T \in \mathbb{R}_{\geq 0}, (\ref{eq:DefenderNIR}), (\ref{eq:InsurerNIR})\}$ denote the action set for the insurer; let $J_{i}(s, T)$ denote the objective function from (\ref{eq:InsurerN}). Recall $\mathcal{S}_{d}$, $\mathcal{S}_{a,n}$, $J_{d}$, $J_{a,n}$ from Definition \ref{def:FlipItN}, a strategy profile $\{\{p_{a,n}^*\}, \{p_{d,n}^*\}, \{s^*, T^*\}\}$ is a global Nash equilibrium of the bi-level \texttt{FlipIn-C} game defined by $\left\langle\{\text{Attackers, Defender-C, Insurer}\}, \{\{\mathcal{S}_{a,n}\},\mathcal{S}_{d}, \mathcal{S}_{i}\},\right. $ $\left. \{\{J_{a,n}\}, J_{d}, J_{i}\}\right\rangle$ if $\{s^*, T^*\}$ solves (\ref{eq:InsurerN}) and $\{p_{a,n}^*, p_{d,n}^*\}$ is a Nash equilibrium of the \texttt{FlipIt-C} game defined in Definition \ref{def:FlipItN} under $\{s^*, T^*\}$.
\end{definition}}
An illustration of the \texttt{FlipIn-C} game has been provided in Fig. \ref{fig:DefenderN}. Compared with the \texttt{G-FlipIn-D} game, the \texttt{FlipIn-C} game contains only one principal-agent problem and there is only one defender who competes with each attacker on the ownership of each IoT device. Moreover, the \texttt{G-FlipIn-D} game can be considered as a bottom-up approach on a distributed scenario as the distributed games constitute a centralized game, while the \texttt{FlipIn-C} game can be considered as a top-down approach on a centralized scenario as the centralized \texttt{FlipIt-C} game can be decentralized into distributed \texttt{L-FlipIt-C} games. 

{
\section{Overview of Finding the Equilibrium}
\label{sec:FindEqu}
It is challenging to directly compute the equilibrium of the bi-level \texttt{FlipIn} games in both the distributed case and the centralized case due to the complex relations among three networked parties of players and the various strategy choices of defenders and attackers. In this paper, we consider that both defenders and attackers adopt non-adaptive periodic strategies. The periodic strategy could be viewed as a routine security examination of the defender or programmed attacks of the attacker. Moreover, we incorporate linear influence models to capture the risk dependencies between neighboring nodes, which have been used extensively to study risk propagation over networks \cite{alpcan2010network, miura2008security, nguyen2009stochastic}. The periodic strategy and linear influence model enable us to solve the lower-level security games and further analyze the higher-level insurance problems. The obtained results yield critical insights on network topology and insurance contract designation, and they provide valuable baselines for future analysis on both \texttt{FlipIt} games and cyber insurance.} 

\begin{figure}[]
\centering
{\includegraphics[width=0.49\textwidth]{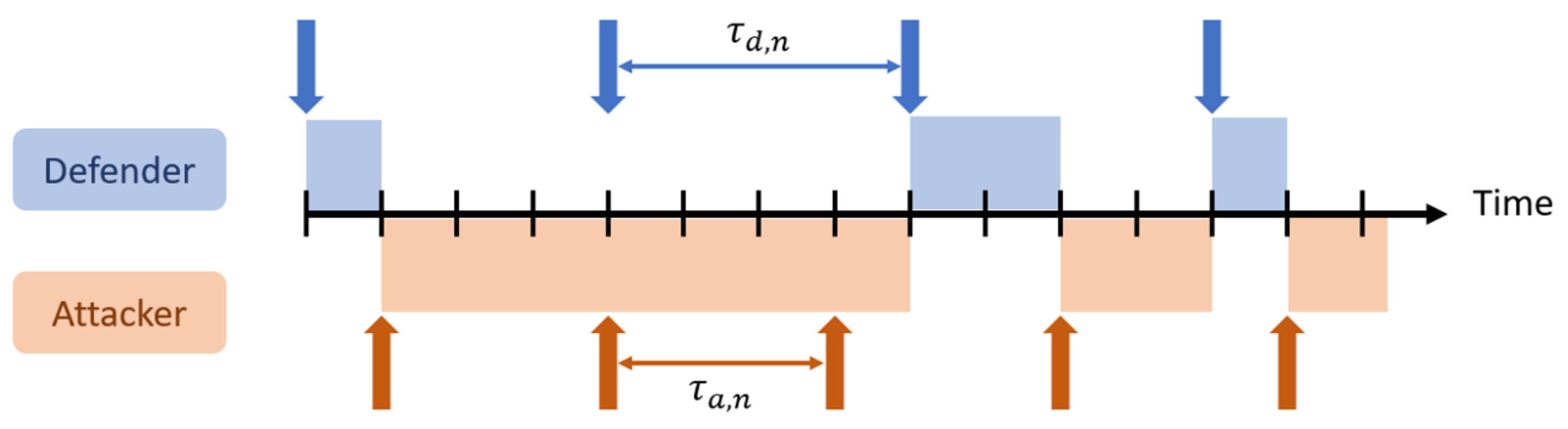}}
\caption{{The \texttt{FlipIt} game on Node $n$ between a defender and an attacker with both players adopting non-adaptive periodic strategies.}}
\label{fig:FlipItp}
\end{figure}

{Both the defender and the attacker adopt non-adaptive periodic strategies, i.e., both players have fixed intervals $\tau_{d,n} \in \mathbb{R}_{> 0}$ and $\tau_{a, n} \in \mathbb{R}_{> 0}$ between two consecutive moves as shown in Fig. \ref{fig:FlipItp}, respectively. In the following sections, we abuse the notations of strategies $p_{d, n}= \frac{1}{\tau_{d,n}}$ and $p_{a, n}=\frac{1}{\tau_{a,n}}$ to denote the defending frequency and the attacking frequency, respectively. We can compute $\alpha_n$, i.e., the expected proportion of the time that the attacker controls node $n$ by following the arguments in Section 4.1 in \cite{van2013flipit}. When $p_{d,n} \geq p_{a, n}$, i.e., $\tau_{d,n} \leq \tau_{a,n}$, the probability that the attacker moves in a given defender's move interval $\tau_{d,n}'$ is $p_{a,n}/p_{d,n}$; moreover, he moves exactly once within $\tau_{d,n}'$ since $\tau_{d, n} \leq \tau_{a, n}$ and his move is uniformly distributed at random within $\tau_{d,n}'$. Thus, we obtain $\alpha_n = \frac{p_{a,n}}{2p_{d,n}}$. Similarly, when $p_{d, n} < p_{a, n}$, we obtain $\alpha_n = 1-\frac{p_{d,n}}{2p_{a,n}}$. As a result, we have
\begin{equation}
\label{eq:Period}
\alpha_n = \left\lbrace  \begin{array}{c}
\begin{array}{lc}
{0,}&{p_{a,n} = 0;} 
\\ {\frac{p_{a, n}}{2p_{d, n}},}&{p_{d, n} \geq p_{a, n} > 0;}
\\ {1-\frac{p_{d, n}}{2p_{a, n}},}&{p_{a, n} > p_{d, n} \geq 0.}
\end{array}
\end{array}  \right.
\end{equation}
Note that when $p_{a, n} = 0$, i.e., there is no attacker or the attacker chooses not to attack, we have $\alpha_{n} = 0$ for $p_{d, n} \geq 0$ which indicates that the IoT device is always controlled by the defender. We can see from (\ref{eq:Period}) that the attacker has a larger $\alpha_n$ with the increase of his frequency $p_{u, n}$ and the decrease of the defender's frequency $p_{d, n}$. We could also see that $\alpha_n$ is continuous in both $p_{a,n}$ and $p_{d,n}$ as $\frac{p_{a, n}}{2p_{d, n}} = 1-\frac{p_{d, n}}{2p_{a, n}} = \frac{1}{2}$ when $p_{a,n} = p_{d, n}$. }

{We capture the risk dependencies between neighboring nodes with linear influence models. We use the following remark to illustrate linear influence models. 
\begin{remark}[Linear Influence Models]
\label{rem:LinearInfluenceModel}
The defender's risk level at node $n$ can be expressed as
 \begin{equation}
\label{eq:DefenderR}
R_n = \alpha_n + \eta \sum\limits_{m=1}^N w_{mn} R_m.
\end{equation} 
The first term is the expected proportion of the attacker's controlling time at node $n$, and a higher proportion indicates a higher risk level of the defender at this node. The second term captures the risks caused by neighboring nodes.  The parameter $\eta \in [0, 1]$ denotes the discount ratio of the network influence, and a larger $\eta$ denotes a stronger influence from neighboring nodes and indicates that the network is strongly connected; the parameters $w_{mn}\in [0, 1]$ denote the probability that node $n$ is attacked by the attacker at its neighboring node $m$ and they satisfy
\begin{equation}
\label{eq:w}
w_{nn} = 0,\  \sum\limits_{n=1}^N w_{mn} = 1.
\end{equation} 
Note that we have $w_{nn} = 0$ as the influence of the attacker at node $n$ has been captured by $\alpha_n$. 
\end{remark}
With linear influence models, we could achieve the following remark from Proposition 6 in \cite{zhang2017bi}.
\begin{remark}
\label{rem:DefenderRSolve}
Let $\mathbf{W}$ denote the network matrix with the $m$-th row and $n$-th column being $w_{mn}$ and let $\mathbf{I}_N$ denote an identity matrix of size $N$, we have
 \begin{equation}
\label{eq:DefenderRSolve}
R_n = g_n(\alpha_1, \alpha_2, ..., \alpha_N) = \sum\limits_{m=1}^N w_{nm}^* \alpha_m,
\end{equation} 
where $w_{nm}^*$ is the element at the $n$-th row and $m$-th column of matrix $\mathbf{W}^* = (\mathbf{I}_N-\eta \mathbf{W}^T)^{-1}$ with scalar $\eta \in [0, 1]$ being the discount ratio of the network influence. The matrix $\mathbf{W}^*$ is valid as the inverse of $\mathbf{I}_N-\eta \mathbf{W}^T$ exists. Furthermore, $w_{nm}^*$ satisfies
\begin{itemize}
\item[(i)] $w_{nn}^* > 1$ and $w_{nm}^* \geq 0$ for all $n, m \in \mathcal{N}$;
\item[(ii)] $\sum\limits_{m=1}^N w_{mn}^* = \frac{1}{1-\eta}$ for all $n \in \mathcal{N}$. 
\end{itemize}
\end{remark}
Remark \ref{rem:DefenderRSolve} indicates that the defender's risk level at one node is also affected by $\alpha_m$ in other nodes. The defender's risk level is higher when any attacker in this network has a larger expected proportion of controlling time, which captures a negative impact of the network influence on cyber security. }

{
\begin{remark}
\label{rem:DefenderLoss}
From (\ref{eq:Period}) and Remark \ref{rem:DefenderRSolve}, the defender at node $n$ has a higher risk level or faces a larger expected loss if
\begin{itemize}
\item the attacker at node $n$ has a larger expected proportion of controlling time, i.e., $\alpha_n$ is larger;
\item the attacker at node $n$ attacks more frequently, i.e., the attacker has a larger $p_{a, n}$;
\item the defender at node $n$ defends less frequently, i.e, the defender has a smaller $p_{d, n}$;
\item the attackers at other nodes have larger expected proportions of controlling time, i.e., $\alpha_m$ is larger for $m\neq n$. 
\end{itemize}
\end{remark}}

The players' problems in both Defender-D and Defender-C under periodic strategies and linear influence models could be obtained by plugging (\ref{eq:Period}) and (\ref{eq:DefenderRSolve}) into the corresponding problems.  Since solving the insurer's problem relies on the results of the \texttt{FlipIt} games, we first solve the lower-level \texttt{FlipIt} games and obtain the reactions of both defenders and attackers to the insurance contracts. Then, we solve the insurer's problem and obtain optimal insurance contracts. Note that we consider linear costs of defenders and attackers in the following sections, and we abuse the notations of $c_{d, n}$ and $c_{a,n}$ to denote the corresponding cost parameters. The attacker's problem (\ref{eq:AttackernMax}) can now be written as 
\begin{equation}
\label{eq:AttackernMaxp}
\max\limits_{p_{a, n}} \  \gamma_{a, n} \alpha_n - c_{a, n}p_{a, n}.
\end{equation}
Different $\gamma_{a, n}$ and $c_{a, n}\in\mathbb{R}_{\geq 0}$ capture the trade-offs between a larger proportion of time and a smaller attacking frequency of the attacker. 

{We could obtain the following problem after plugging (\ref{eq:DefenderRSolve}) into the defender's problem (\ref{eq:DefendernMin}) in Defender-D. 
\begin{equation}
\label{eq:DefendernMinp}
\min\limits_{p_{d, n}} \  (1-s_n)\gamma_{d, n}\sum\limits_{m=1}^N w_{nm}^*\alpha_m + c_{d, n}p_{d, n}. 
\end{equation}
The parameters $\gamma_{a, n}$ and $c_{d, n}\in\mathbb{R}_{\geq 0}$ capture the trade-offs between a smaller expected effective loss and a larger defending frequency of the defender.
\begin{remark}[Distributed Computations]
\label{rem:DefendernMin}
Since $\alpha_{m}$ is a constant with respect to $p_{d, n}$ if $m \neq n$, problem (\ref{eq:DefendernMinp}) can be simplified further into the following problem
\begin{equation}
\label{eq:DefendernMinSimplified}
\min\limits_{p_{d, n}} \  (1-s_n)\gamma_{d, n} w_{nn}^* \alpha_n + c_{d, n}p_{d, n} .
\end{equation} 
Problem (\ref{eq:DefendernMinSimplified}) indicates that the defender's decision on $p_{d, n}$ in one \texttt{FlipIt-D} game is not affected by results of other \texttt{FlipIt-D} games given the coverage level $s_n$. However, the expected loss of the defender $\mathbb{E}[X_n]$ is still affected by the outcomes of other \texttt{FlipIt-D} games.  
\end{remark}

{We could obtain the following problem after plugging (\ref{eq:DefenderRSolve}) into the defender's problem (\ref{eq:DefenderNMin}) in Defender-C. 
\begin{equation}
\label{eq:DefenderNMinp}
\min\limits_{\{p_{d, n}\}}\ (1-s)\sum\limits_{n=1}^N \sum\limits_{m=1}^N \gamma_{d,n}w_{nm}^*\alpha_m+ \sum\limits_{n=1}^Nc_{d, n}p_{d, n}.
\end{equation}
Note that $g_n(\alpha_1, ..., \alpha_N)$ in (\ref{eq:DefenderRSolve}) is additively separable for all $1 \leq n \leq N$, thus, the \texttt{L-FlipIt-C} games exist under periodic strategy and linear influence model and problem (\ref{eq:DefenderNMinp}) is equivalent to the following $N$ sub-problems from Remark \ref{rem:DefenderNMin}.
\begin{equation}
\label{eq:DefenderNMinSubp}
\min\limits_{p_{d, n}}\ (1-s)\sum_{m=1}^N \gamma_{d,m} w_{mn}^*\alpha_n +c_{d, n}p_{d, n} .
\end{equation}}

{Note that the \texttt{FlipIt-D} games and the \texttt{L-FlipIt-C} games share similar structures: the attackers in both cases solve the same optimization problems (\ref{eq:AttackernMaxp}); the defenders' problems (\ref{eq:DefendernMinSimplified}) and (\ref{eq:DefenderNMinSubp}) can be written into one unified optimization problem as
\begin{equation}
\label{eq:DefenderMin}
\min\limits_{p_{d, n}} \  (1- \tilde{s}_n) \tilde{\gamma}_{d,n} \alpha_n + c_{d, n} p_{d, n},
\end{equation}
where
\begin{equation}
\label{eq:Defenders}
\tilde{s}_{n}=\left\lbrace \begin{array}{lc}
{s_n,}&{\text{Defender-D};}\\
{s,}&{\text{Defender-C},}
\end{array}  \right.
\end{equation}
\begin{equation}
\label{eq:DefenderGamma}
\tilde{\gamma}_{d,n}=\left\lbrace \begin{array}{lc}
{\gamma_{d, n} w_{nn}^*,}&{\text{Defender-D};}\\
{\sum\limits_{m=1}^N \gamma_{d, m} w_{mn}^*,}&{\text{Defender-C}.}
\end{array}  \right.
\end{equation}
\begin{remark}
\label{rem:DefendernN}
Problem (\ref{eq:DefenderMin}) can be interpreted that the defender aims to minimize the expected proportion of the attacker's controlling time. Thus, given the same coverage level on an IoT device, the defender in Defender-C cares more about reducing the impacts from the attackers compared with the defenders in Defender-D as $\sum_{m=1}^N \gamma_{d, m} w_{mn}^* \geq \gamma_{d, n} w_{nn}^*$. 
\end{remark}
We can further define an unified local \texttt{FlipIt} game as follows.
\begin{definition}
\label{def:FlipItLocal}
Let $\mathcal{S}_{a, n} = \{p_{a,n}| p_{a,n}\in\mathbb{R}_{\geq 0}\}$ and $\mathcal{S}_{d, n} = \{p_{d,n}| p_{d,n}\in\mathbb{R}_{\geq 0}\}$ denote the action sets for the attacker and the defender at node $n$, respectively; let $J_{a,n}(p_{a,n},p_{d,n})$ and $J_{d,n}(p_{d,n},p_{a,n}; \tilde{s}_n)$ denote the objective functions from (\ref{eq:AttackernMax}) and (\ref{eq:DefenderMin}), respectively. 
A strategy profile $\{p_{a,n}^*, p_{d,n}^*\}$ is a Nash equilibrium of the local \texttt{FlipIt} game at node $n$ defined by $\left\langle\{\text{Attacker, Defender}\},\{\mathcal{S}_{a,n},\mathcal{S}_{d,n}\}, \{J_{a,n}, J_{d,n}\}\right\rangle$ if 
\[ J_{a, n}(p_{a,n}^*, p_{d,n}^*) \geq J_{a, n}(p_{a,n}, p_{d,n}^*), \ \  \forall p_{a,n} \in \mathcal{S}_{a,n};\]
\[J_{d, n}(p_{d,n}^*, p_{a,n}^*;\tilde{s}_n)   \leq J_{d, n}(p_{d,n}, p_{a,n}^*;\tilde{s}_n), \ \  \forall p_{d,n} \in \mathcal{S}_{d,n}. \]
\end{definition}
We can obtain the solutions of the \texttt{FlipIt-D} games and the \texttt{L-FlipIt-C} games by plugging (\ref{eq:Defenders}) and (\ref{eq:DefenderGamma}) into the solution of the local \texttt{FlipIt} game defined in Definition \ref{def:FlipItLocal}. The solutions of the \texttt{G-FlipIt-D} game and the \texttt{FlipIt-C} game can be further obtained by their Definitions \ref{def:FlipItn} and \ref{def:FlipItN}, respectively. Thus, we can solve all \texttt{FlipIt} games in both Defender-D and Defender-C by solving the local \texttt{FlipIt} game. }

\begin{figure}[]
\centering
\subfigure[$\frac{\gamma_{a, n}}{2c_{a, n}} \geq \frac{(1-\tilde{s}_n)\tilde{\gamma}_{d,n}}{2c_{d, n}}$]{\includegraphics[width=0.24\textwidth]{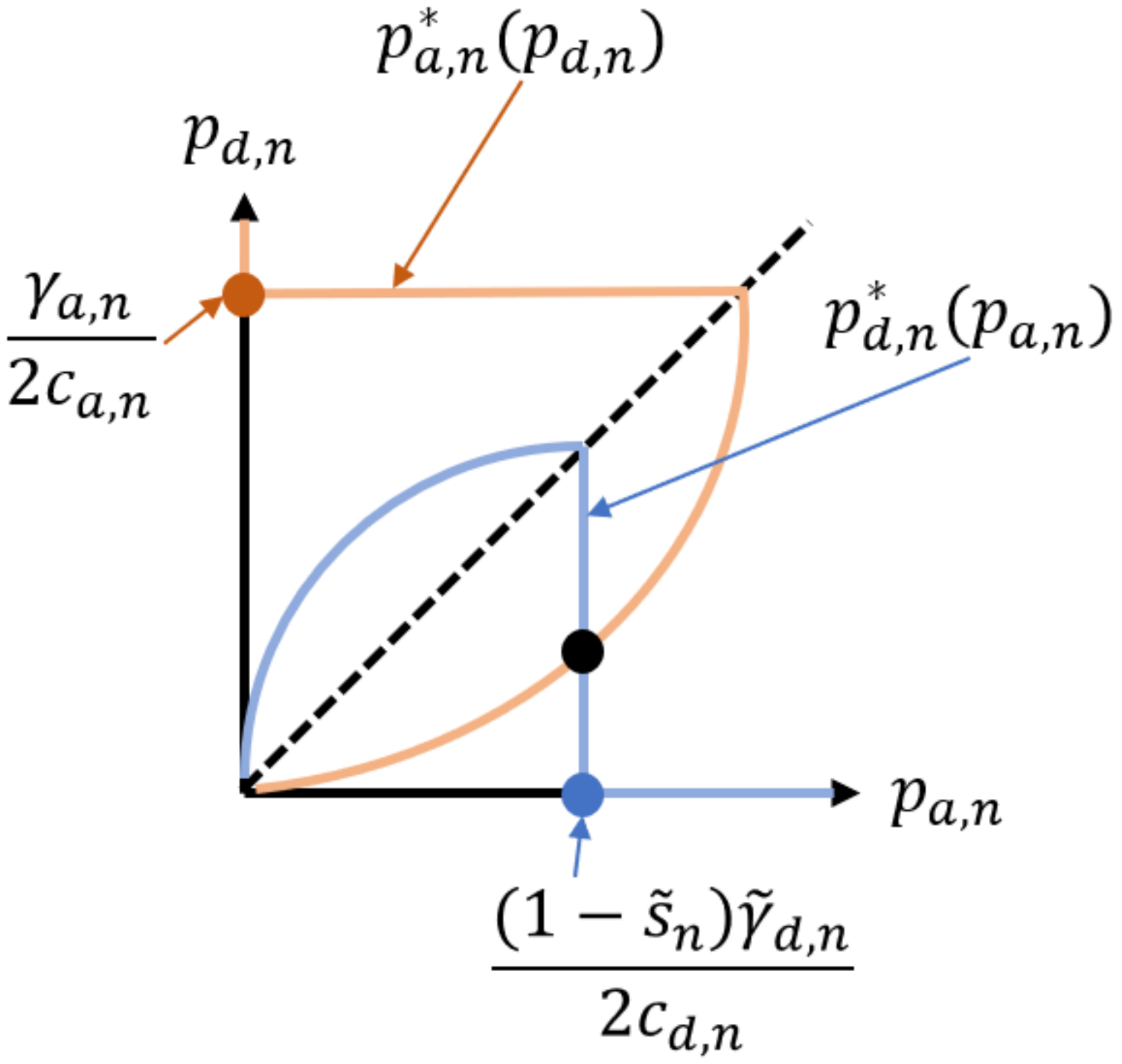}}
\subfigure[$\frac{\gamma_{a, n}}{2c_{a, n}} < \frac{(1-\tilde{s}_n)\tilde{\gamma}_{d,n}}{2c_{d, n}}$]{\includegraphics[width=0.24\textwidth]{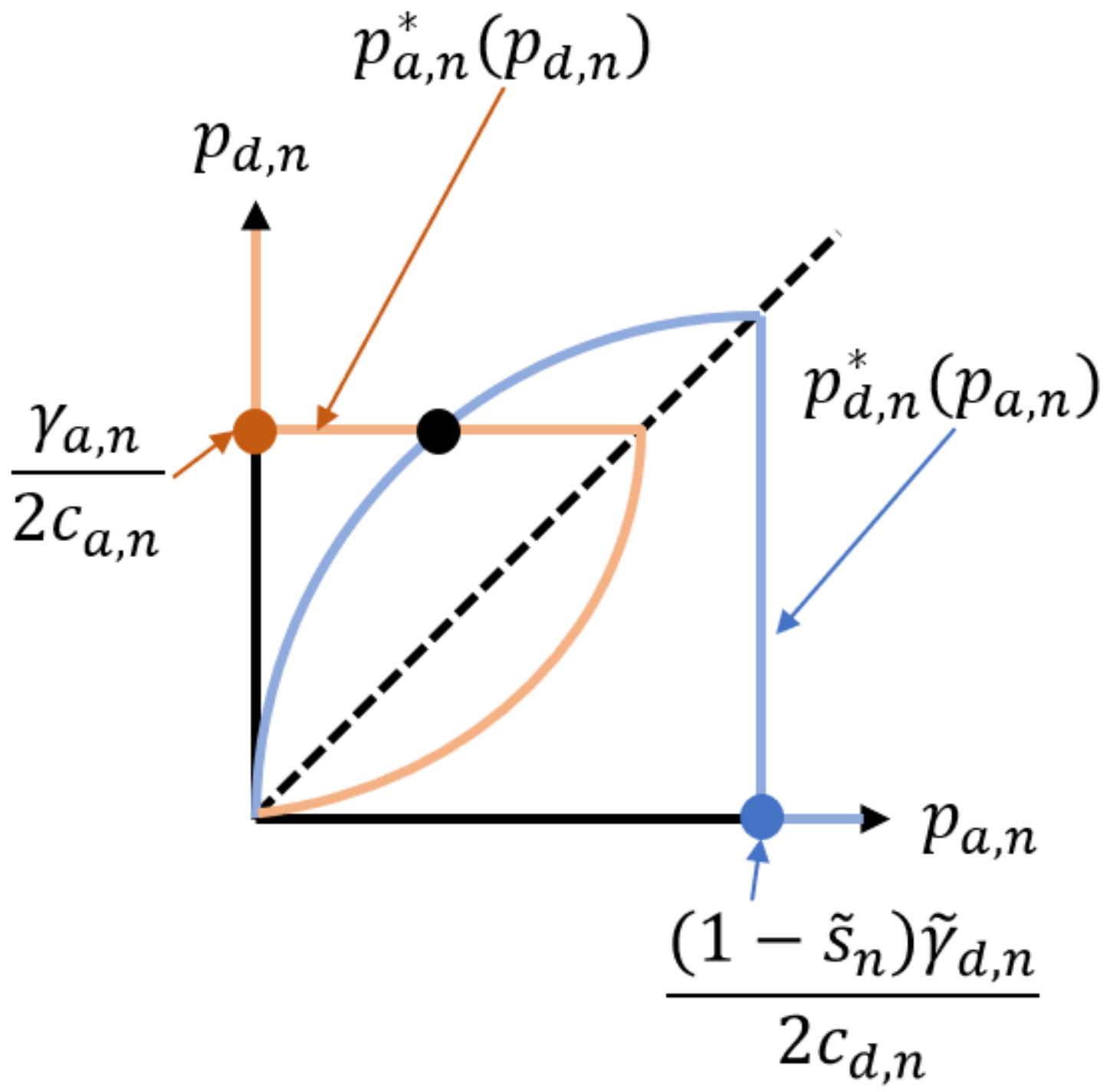}}
\caption{Equilibrium of the local \texttt{FlipIt} game at node $n$.  $p_{a,n}^*(p_{d,n})$ denotes the best response of the attacker given the defender's defending frequency $p_{d,n}$ while $p_{d,n}^*(p_{a,n})$ denotes the best response of the defender given the attacker's attacking frequency $p_{a,n}$. }
\label{fig:Equi}
\end{figure}

{The local \texttt{FlipIt} game defined in Definition \ref{def:FlipItLocal} is different from the original \texttt{FlipIt} game presented in \cite{van2013flipit} as the defender here aims to minimize the losses caused by the attacker while the defender in the original \texttt{FlipIt} game aims to maximize the proportion of his controlling time. Following similar steps as in \cite{van2013flipit}, we can obtain the equilibrium of the local \texttt{FlipIt} game defined in Definition \ref{def:FlipItLocal} by finding the intersection of the players' best responses as shown in Fig. \ref{fig:Equi}.
\begin{proposition}
\label{pro:FlipItEqu}
The Nash equilibrium of the local \texttt{FlipIt} game defined in Definition \ref{def:FlipItLocal} can be summarized into two different cases as shown in Fig. \ref{fig:Equi}.
\begin{itemize}
\item If $\frac{\gamma_{a, n}}{2c_{a, n}} \geq \frac{(1-\tilde{s}_n)\tilde{\gamma}_{d,n}}{2c_{d, n}}$, the equilibrium (\textbf{FlipIt-E1}) is achieved at
\begin{equation}
\label{eq:FlipItEqu>}
p_{d, n}^* = \frac{(1-\tilde{s}_n)^2\tilde{\gamma}_{d,n}^2c_{a, n}}{2\gamma_{a, n}c_{d, n}^2}, \ \ p_{a, n}^* = \frac{(1-\tilde{s}_n)\tilde{\gamma}_{d,n}}{2c_{d, n}};
\end{equation}
\item If $\frac{\gamma_{a, n}}{2c_{a, n}} < \frac{(1-\tilde{s}_n)\tilde{\gamma}_{d,n}}{2c_{d, n}}$, the equilibrium (\textbf{FlipIt-E2}) is achieved at 
\begin{equation}
\label{eq:FlipItEqu<}
p_{d, n}^* = \frac{\gamma_{a, n}}{2c_{a, n}}, \ \ p_{a, n}^* = \frac{\gamma_{a, n}^2c_{d, n}}{2(1-\tilde{s}_n)\tilde{\gamma}_{d,n} c_{a, n}^2}.
\end{equation}
\end{itemize}
\end{proposition}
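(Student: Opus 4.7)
The plan is to substitute the periodic-strategy formula (\ref{eq:Period}) for $\alpha_n$ into the two objectives (\ref{eq:AttackernMaxp}) and (\ref{eq:DefenderMin}), derive each player's best-response correspondence in closed form, and intersect them as depicted in Fig.~\ref{fig:Equi}. Because $\alpha_n$ is piecewise rational in the two frequencies, the derivation is a careful case analysis around the two thresholds $\gamma_{a,n}/(2c_{a,n})$ and $(1-\tilde{s}_n)\tilde{\gamma}_{d,n}/(2c_{d,n})$.

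First I would fix $p_{d,n}>0$ and compute the attacker's best response. Splitting on $p_{a,n}\le p_{d,n}$ versus $p_{a,n}>p_{d,n}$ turns the objective into a linear function of $p_{a,n}$ with slope $\gamma_{a,n}/(2p_{d,n})-c_{a,n}$ on the first piece and a strictly concave function of $p_{a,n}$ with interior maximizer $\sqrt{\gamma_{a,n}p_{d,n}/(2c_{a,n})}$ on the second. A direct AM-GM comparison of the two candidate maxima yields: $p_{a,n}^{*}(p_{d,n})=\sqrt{\gamma_{a,n}p_{d,n}/(2c_{a,n})}$ when $p_{d,n}<\gamma_{a,n}/(2c_{a,n})$, $p_{a,n}^{*}(p_{d,n})=0$ when $p_{d,n}>\gamma_{a,n}/(2c_{a,n})$, and the whole interval $[0,\gamma_{a,n}/(2c_{a,n})]$ at the threshold, where every such $p_{a,n}$ gives utility zero. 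I would run the symmetric case split for the defender: the region $p_{d,n}\ge p_{a,n}$ gives a strictly convex loss with interior minimizer $\sqrt{(1-\tilde{s}_n)\tilde{\gamma}_{d,n}p_{a,n}/(2c_{d,n})}$, while on $p_{d,n}<p_{a,n}$ the loss is linear in $p_{d,n}$. An AM-GM comparison produces the mirror-image BR: the square-root branch for $p_{a,n}$ below the threshold $(1-\tilde{s}_n)\tilde{\gamma}_{d,n}/(2c_{d,n})$, $p_{d,n}^{*}=0$ above it, and the whole interval $[0,(1-\tilde{s}_n)\tilde{\gamma}_{d,n}/(2c_{d,n})]$ at the threshold.

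Next I would locate the Nash equilibrium by intersecting the two correspondences. Combining the two smooth square-root branches leads to the cubic $p_{a,n}^{3}=\gamma_{a,n}^{2}(1-\tilde{s}_n)\tilde{\gamma}_{d,n}/(8c_{a,n}^{2}c_{d,n})$, and a short algebraic check shows its root satisfies $p_{a,n}\ge(1-\tilde{s}_n)\tilde{\gamma}_{d,n}/(2c_{d,n})$ exactly in regime \textbf{FlipIt-E1} and the analogous $p_{d,n}\ge\gamma_{a,n}/(2c_{a,n})$ exactly in regime \textbf{FlipIt-E2}, so in both regimes the interior candidate falls outside the domain in which both smooth branches are valid. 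The equilibrium is therefore forced onto whichever player's vertical segment still lies interior to the opponent's smooth branch. In \textbf{FlipIt-E1} that is the defender's segment at $p_{a,n}=(1-\tilde{s}_n)\tilde{\gamma}_{d,n}/(2c_{d,n})$; inverting the attacker's square-root branch at this $p_{a,n}$ pins down $p_{d,n}^{*}=(1-\tilde{s}_n)^{2}\tilde{\gamma}_{d,n}^{2}c_{a,n}/(2\gamma_{a,n}c_{d,n}^{2})$, and one verifies this value lies inside $[0,(1-\tilde{s}_n)\tilde{\gamma}_{d,n}/(2c_{d,n})]$, so that it is in the defender's BR set. In \textbf{FlipIt-E2} the roles swap: inverting the defender's square-root branch at $p_{d,n}=\gamma_{a,n}/(2c_{a,n})$ gives $p_{a,n}^{*}=\gamma_{a,n}^{2}c_{d,n}/(2(1-\tilde{s}_n)\tilde{\gamma}_{d,n}c_{a,n}^{2})\in[0,\gamma_{a,n}/(2c_{a,n})]$.

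The main obstacle is the multi-valuedness of both BR correspondences at their thresholds: the naive simultaneous first-order conditions have no admissible interior solution in either regime, so one cannot just solve an interior system of square roots. Instead I must argue that the equilibrium is driven onto a vertical segment by the failure of the interior cubic and then check admissibility of the quoted pair in both BR sets. The two regimes are mirror images separated by the boundary $\gamma_{a,n}/(2c_{a,n})=(1-\tilde{s}_n)\tilde{\gamma}_{d,n}/(2c_{d,n})$, at which the two equilibrium expressions coincide, confirming continuity of the equilibrium across regimes.
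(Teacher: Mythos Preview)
Your proposal is correct and follows exactly the approach the paper uses: derive the two best-response correspondences from the piecewise formula for $\alpha_n$ and intersect them as in Fig.~\ref{fig:Equi}. The paper's own proof is a single sentence that points to the figure and implicitly to the analogous computation in \cite{van2013flipit}; your write-up simply fills in the case analysis and the threshold/vertical-segment argument that the paper leaves to the reader.
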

\begin{proof}
The equilibrium can be obtained by finding the intersection between the best responses as shown in Fig. \ref{fig:Equi}. 
\end{proof}
Note that $p_{d,n}^*= 0$ and $p_{a,n}^* = 0$ are also the intersection of the best responses. However, we exclude them in this paper as there are no defender and attacker when $p_{d,n}^*= 0$ and $p_{a,n}^* = 0$. We can see from Proposition \ref{pro:FlipItEqu} that the equilibrium is affected by the coverage level $\tilde{s}_n$, and we have the following remarks regarding the relations between them. }
\begin{remark}[Equilibrium Shift]
\label{rem:Shift}
{If $\gamma_{a, n}c_{d, n} \geq \tilde{\gamma}_{d,n}c_{a, n}$, we have $\frac{\gamma_{a, n}}{2c_{a, n}} \geq \frac{(1-\tilde{s}_n)\tilde{\gamma}_{d,n}}{2c_{d, n}}$ for $0< \tilde{s}_n \leq 1$, and thus the equilibrium is \textit{FlipIt-E1} for $0< \tilde{s}_n \leq 1$.}

{If $\gamma_{a, n}c_{d, n} < \tilde{\gamma}_{d,n}c_{a, n}$, we note that $\frac{\gamma_{a, n}}{2c_{a, n}} < \frac{(1-\tilde{s}_n)\tilde{\gamma}_{d,n}}{2c_{d, n}}$ when $0< \tilde{s}_n<1-\frac{\gamma_{a, n}c_{d, n}}{\tilde{\gamma}_{d,n}c_{a, n}}$, and thus the equilibrium is \textit{FlipIt-E2}. However, we have $\frac{\gamma_{a, n}}{2c_{a, n}} \geq \frac{(1-\tilde{s}_n)\tilde{\gamma}_{d,n}}{2c_{d, n}}$ when $1-\frac{\gamma_{a, n}c_{d, n}}{\tilde{\gamma}_{d,n}c_{a, n}} \leq \tilde{s_n} \leq 1$, and thus the equilibrium is \textit{FlipIt-E1}. As a result, the equilibrium shifts from \textit{FlipIt-E2} to \textit{FlipIt-E1} as the coverage level increases. }
\end{remark}
\begin{remark}[Risk Compensation and Peltzman Effect]
\label{rem:Peltzman}
{At \textit{FlipIt-E1}, both the defender and the attacker reduce their frequencies as the coverage level increases. The defender's reckless behavior under the insurance in this case is referred as risk compensation \cite{ewold1991insurance}. The proportion of the attacker's controlling time $\alpha_n^*=1-\frac{(1-\tilde{s}_n)\tilde{\gamma}_{d,n}c_{a, n}}{2\gamma_{a, n}c_{d, n}}$ increases with the coverage level, as a result, the defender faces a higher risk, and such phenomena under the insurance is referred as Peltzman effect \cite{peltzman1975effects}. }

{However, at \textit{FlipIt-E2}, the defender does not change his frequency while the attacker increases his frequency as the coverage level increases. The proportion of the attacker's controlling time $\alpha_n^*=\frac{\gamma_{a, n}c_{d, n}}{2(1-\tilde{s}_n)\tilde{\gamma}_{d,n}c_{a, n}}$ increases with the coverage level. Thus, at \textit{FlipIt-E2}, there is no risk compensation, but we can observe Peltzman effect. }
\end{remark}
{
With the results of the local \texttt{FlipIt} game, we can solve the insurer's problems and obtain the optimal insurance contracts. In the following sections, we discuss cyber insurance separately for Defender-D and Defender-C. }

\section{Cyber Insurance: Defender-D}
\label{sec:D}
In this section, we aim to solve the insurer's problem (\ref{eq:Insurern}) in Defender-D. We can obtain the results of the \texttt{FlipIt-D} game by plugging $\tilde{s}_n = s_n$ and $\tilde{\gamma}_{d,n}=\gamma_{d,n}w_{nn}^*$ into Proposition \ref{pro:FlipItEqu}. Let us abuse the notations $p_{d,n}^*(s_n)$, $p_{a,n}^*(s_n)$, and $\alpha_{n}^*(s_n)$ to denote the equilibrium results under the coverage level $s_n$; let $K_{d,n}^*(s_n) = J_{d,n}(p_{d,n}^*,p_{a,n}^*;s_n)$ from Definition \ref{def:FlipItLocal}. Note that $s_n=0$ indicates the results under no insurance. 
{
\begin{remark}
\label{rem:DefendernIR}
The defender's decision on $p_{d, n}$ is not affected by the players' decisions at other nodes from Remark \ref{rem:DefendernMin} while the attacker's decision on $p_{a,n}$ is also not affected by the players' decisions at other nodes from (\ref{eq:AttackernMax}). Thus, we have $\alpha_m' = \alpha_m, \forall m \neq n$ in (\ref{eq:DefendernIR}). As a result, (\ref{eq:DefendernIR}) can be rewritten as
\begin{equation}
\label{eq:DefenderIRn}
\begin{array}{l}
T_n \leq K_{d,n}^*(0)-K_{d,n}^*(s_n)+s_n\gamma_{d, n}\sum\limits_{m\neq n} w_{nm}^* \alpha_m.
\end{array}
\end{equation}
\end{remark}}

Since the insurer aims to maximize his profit, he sets highest possible premium at
\begin{equation}
\label{eq:InsurerMaxn}
T_{n,\max} = K_{d,n}^*(0)-K_{d,n}^*(s_n)+s_n\gamma_{d, n}\sum\limits_{m\neq n} w_{nm}^* \alpha_m.
\end{equation}
As a result, solving the insurer's problem (\ref{eq:Insurern}) is equivalent to solving the following problem after plugging (\ref{eq:InsurerMaxn}) into (\ref{eq:Insurern}): 
\begin{equation}
\label{eq:Insurancen}
\begin{array}{l}
s_n^* = \arg\max\limits_{0< s_n\leq 1}\  K_{d,n}^*(0)-K_{d,n}^*(s_n)-s_n\gamma_{d, n}w_{nn}^* \alpha_n^*(s_n)\\
\begin{array}{cc}
{\text{s.t.}}&{K_{d,n}^*(0)-K_{d,n}^*(s_n) - s_n\gamma_{d, n}w_{nn}^* \alpha_n^*(s_n) \geq 0,}
\end{array}
\end{array}
\end{equation}
and the premium $T_n^*$ can be achieved by plugging $s_n^*$ into (\ref{eq:InsurerMaxn}). Since we have achieved the equilibrium results of the \texttt{FlipIt-D} game in the previous section, we can directly solve (\ref{eq:Insurancen}). 

\subsection{High-Risk Regime: $\gamma_{a, n}c_{d, n} \geq \gamma_{d, n}w_{nn}^*c_{a, n}$}
In this case, the \texttt{FlipIt-D} game between the defender and the attacker achieves \textit{FlipIt-E1} as in Remark \ref{rem:Shift}. After plugging the results of \textit{FlipIt-E1}, problem (\ref{eq:Insurancen}) can be expressed as
\begin{equation}
\label{eq:Insurancen>}
\begin{array}{l}
s_n^*=\arg\max\limits_{0< s_n\leq 1} \  \frac{\gamma_{d, n}^2 w_{nn}^{*2} c_{a, n}}{2\gamma_{a, n} c_{d, n}}(1-s_n)s_n \\
\begin{array}{cc}
{\text{s.t.}}&{\frac{ \gamma_{d, n}^2 w_{nn}^{*2} c_{a, n}}{2\gamma_{a, n} c_{d, n}} (1-s_n)s_n \geq 0. }
\end{array}
\end{array}
\end{equation}
Note that $\frac{\gamma_{d, n}^2 w_{nn}^{*2} c_{a, n}}{2\gamma_{a, n} c_{d, n}}$ is constant for $s_n$ and the constraint is satisfied for $0< s_n \leq 1$. Thus, we only need to find $s_n^*$ that minimizes the objective function to obtain the optimal insurance contract.  
\begin{lemma}
\label{lem:InsurerOptimaln>}
If $\gamma_{a, n}c_{d, n} \geq \gamma_{d, n}w_{nn}^*c_{a, n}$, the optimal insurance contract is
\begin{equation}
\label{eq:InsurerOptimaln>}
s_n^* = \frac{1}{2}, \ \ \ T_n^* =  \frac{\gamma_{d,n}w_{nn}^*+\gamma_{d, n}\sum_{m\neq n} w_{nm}^* \alpha_m}{2}.
\end{equation}
The insurer's profit under this contract is $\frac{ \gamma_{d, n}^2 w_{nn}^{*2} c_{a, n}}{8\gamma_{a, n} c_{d, n}}$. 
\end{lemma}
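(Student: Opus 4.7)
The plan is to recognize that problem \eqref{eq:Insurancen>} is a one-variable concave quadratic maximization on the closed interval $(0,1]$, which admits a routine closed-form solution. First I would verify that the leading constant $\frac{\gamma_{d, n}^2 w_{nn}^{*2} c_{a, n}}{2\gamma_{a, n} c_{d, n}}$ is strictly positive, so the objective $\frac{\gamma_{d, n}^2 w_{nn}^{*2} c_{a, n}}{2\gamma_{a, n} c_{d, n}}(1-s_n)s_n$ is a strictly concave parabola in $s_n$ with roots at $0$ and $1$. The inequality constraint $\frac{\gamma_{d, n}^2 w_{nn}^{*2} c_{a, n}}{2\gamma_{a, n} c_{d, n}}(1-s_n)s_n \geq 0$ is automatically satisfied throughout $(0,1]$, so the problem reduces to unconstrained maximization of a scalar concave quadratic.

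Next I would apply the first-order condition: differentiating $(1-s_n)s_n$ and setting the derivative to zero yields the unique critical point $s_n^* = 1/2$, which lies strictly inside the feasible region and is the global maximizer by strict concavity (second derivative $-2$). Substituting this back gives $(1-s_n^*)s_n^* = 1/4$, so that the insurer's optimal profit equals $\frac{\gamma_{d, n}^2 w_{nn}^{*2} c_{a, n}}{8\gamma_{a, n} c_{d, n}}$, matching the claimed value.

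The remaining task is to compute $T_n^*$ by substituting $s_n^* = 1/2$ into \eqref{eq:InsurerMaxn}, which requires a closed form for $K_{d,n}^*(0)-K_{d,n}^*(s_n)$. Using the \textit{FlipIt-E1} equilibrium strategies from \eqref{eq:FlipItEqu>} with $\tilde{\gamma}_{d,n} = \gamma_{d,n}w_{nn}^*$, and noting that in the high-risk regime we have $p_{a,n}^* \geq p_{d,n}^*$ so that $\alpha_n^* = 1 - \frac{p_{d,n}^*}{2p_{a,n}^*}$ by \eqref{eq:Period}, a direct evaluation of the defender's equilibrium payoff in \eqref{eq:DefenderMin} produces the clean identity $K_{d,n}^*(s_n) = (1-s_n)\gamma_{d,n}w_{nn}^*$, since the quadratic piece of the expected effective loss exactly cancels the equilibrium defending cost. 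Consequently $K_{d,n}^*(0)-K_{d,n}^*(s_n) = s_n\gamma_{d,n}w_{nn}^*$, and plugging $s_n^* = 1/2$ into \eqref{eq:InsurerMaxn} delivers the stated $T_n^*$.

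No step here is conceptually hard; the main care lies in the algebraic simplification of $K_{d,n}^*(s_n)$, where one must track the cancellation between the loss term and the equilibrium defense cost. Everything else is first-semester calculus applied to a concave parabola.
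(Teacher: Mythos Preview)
Your proposal is correct and follows exactly the same approach as the paper: maximize the concave quadratic in \eqref{eq:Insurancen>} to obtain $s_n^*=\tfrac{1}{2}$, then recover $T_n^*$ from \eqref{eq:InsurerMaxn}. You have simply filled in the algebra the paper omits, in particular the explicit verification that $K_{d,n}^*(s_n)=(1-s_n)\gamma_{d,n}w_{nn}^*$ at \textit{FlipIt-E1}, which is accurate and makes the computation of $T_n^*$ transparent.
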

\begin{proof}
We can achieve that $s_n^* = \frac{1}{2}$. $T_n^*$ can be achieved by plugging $s_n^*$ into (\ref{eq:InsurerMaxn}). 
\end{proof}

\subsection{Low-Risk Regime: $\gamma_{a, n}c_{d, n} < \gamma_{d, n}w_{nn}^*c_{a, n}$}
In this case, the equilibrium of the \texttt{FlipIt-D} game shifts from \textit{FlipIt-E2} to \textit{FlipIt-E1} as the coverage level increases from Remark \ref{rem:Shift}. When $0< s_n < 1-\frac{\gamma_{a, n}c_{d, n}}{\gamma_{d, n}w_{nn}^*c_{a, n}}$, the \texttt{FlipIt-D} game achieves \textit{FlipIt-E2} and we have $T_{n,\max}=K_{d,n}^*(0)-K_{d,n}^*(s_n) - s_n\gamma_{d, n}w_{nn}^* \alpha_n^*(s_n) = - s_n\gamma_{d, n}w_{nn}^* \alpha_n^*(s_n) < 0$. Thus, the insurer does not provide any insurance contracts with $0< s_n < 1-\frac{\gamma_{a, n}c_{d, n}}{\gamma_{d, n}w_{nn}^*c_{a, n}}$.

When $1-\frac{\gamma_{a, n}c_{d, n}}{\gamma_{d, n}w_{nn}^*c_{a, n}} \leq s_n \leq 1$, the \texttt{FlipIt} game achieves \textit{FlipIt-E1} under the insurance and \textit{FlipIt-E2} without the insurance, and after plugging the results of \textit{FlipIt-E1} and \textit{FlipIt-E2} into (\ref{eq:Insurancen}), we have
\begin{equation}
\label{eq:Insurancen<}
\begin{array}{l}
s_n^* \in \arg\max\limits_{s_n}\ \frac{\gamma_{a, n}c_{d, n}}{c_{a, n}}-\gamma_{d, n} w_{nn}^* + \frac{ \gamma_{d, n} ^2w_{nn}^{*2} c_{a, n}}{2\gamma_{a, n} c_{d, n}}(1-s_n)s_n \\
\begin{array}{cc}
{\text{s.t.}}&{\begin{array}{c}
\frac{\gamma_{a, n}c_{d, n}}{c_{a, n}}-\gamma_{d, n} w_{nn}^* + \frac{ \gamma_{d, n} ^2w_{nn}^{*2} c_{a, n}}{2\gamma_{a, n} c_{d, n}}(1-s_n)s_n  \geq 0.
\end{array}}
\end{array}
\end{array}
\end{equation}
We first obtain the following proposition regarding the insurability of the defender. 
\begin{proposition}[Insurability]
\label{pro:Insurability}
The defender is not insurable, i.e., there exists no effective insurance contract and the equilibrium of the \texttt{FlipIn-D} game does not exist, if 
\begin{equation}
\label{eq:Insurability}
0 < \frac{\gamma_{a, n}c_{d, n}}{\gamma_{d, n}w_{nn}^*c_{a, n}} < \frac{1}{2} + \frac{\sqrt{2}}{4}.
\end{equation}
\end{proposition}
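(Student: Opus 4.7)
The plan is to show that when the ratio $r := \frac{\gamma_{a,n}c_{d,n}}{\gamma_{d,n}w_{nn}^*c_{a,n}}$ falls below the threshold $\tfrac{1}{2} + \tfrac{\sqrt{2}}{4}$, no admissible coverage level $s_n \in (0,1]$ can produce a nonnegative insurer profit, so the insurer's individual rationality constraint (\ref{eq:InsurernIR}) is never satisfied and the bi-level \texttt{FlipIn-D} equilibrium fails to exist.

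First I would dispose of the sub-range $0 < s_n < 1 - r$. On this interval the local \texttt{FlipIt-D} game sits at \textit{FlipIt-E2} both with and without insurance, so as the paragraph just above the proposition shows, $T_{n,\max}$ reduces to $-s_n\gamma_{d,n}w_{nn}^*\alpha_n^*(s_n) < 0$ and no contract is viable. This reduces the insurability question to whether the objective in (\ref{eq:Insurancen<}) can be made nonnegative for some $s_n \in [1-r, 1]$.

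Second, after dividing that objective by $\gamma_{d,n}w_{nn}^* > 0$, the insurability constraint becomes the existence of $s_n \in [1-r, 1]$ satisfying $(1-s_n)s_n \geq 2r(1-r)$. I would then maximize the left-hand side over the interval by splitting on whether the unconstrained maximizer $s_n = 1/2$ lies inside. When $r \geq 1/2$, the maximum equals $1/4$ at $s_n = 1/2$ and the requirement collapses to $8r^2 - 8r + 1 \geq 0$, whose roots are $r = \tfrac{1}{2} \pm \tfrac{\sqrt{2}}{4}$; intersecting with $r \geq 1/2$ yields feasibility iff $r \geq \tfrac{1}{2} + \tfrac{\sqrt{2}}{4}$. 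When $0 < r < 1/2$, the map $s \mapsto s(1-s)$ is strictly decreasing on $[1-r,1] \subset (1/2,1]$, so its maximum equals $r(1-r)$ at $s_n = 1-r$, and the insurability inequality reduces to $r(1-r) \geq 2r(1-r)$, which is impossible for $r \in (0,1)$. Combining the two cases, insurability requires $r \geq \tfrac{1}{2} + \tfrac{\sqrt{2}}{4}$, and its complement within the low-risk regime is precisely the stated range $0 < r < \tfrac{1}{2} + \tfrac{\sqrt{2}}{4}$.

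The main obstacle is structural rather than computational. One has to recognize that on $s_n \in [1-r,1]$ the insurer is comparing the defender's equilibrium cost across two different regimes of the lower-level game, so the objective picks up the constant term $\tfrac{\gamma_{a,n}c_{d,n}}{c_{a,n}} - \gamma_{d,n}w_{nn}^*$ which pushes the profit negative unless the quadratic $(1-s_n)s_n$ is large enough. When $r < 1/2$ the feasible domain excludes $s_n = 1/2$, so this quadratic is capped at the endpoint value $r(1-r)$ rather than $1/4$; it is exactly this boundary effect that sharpens the insurability threshold away from the naive $r = 1/2$ to the precise value $\tfrac{1}{2} + \tfrac{\sqrt{2}}{4}$.
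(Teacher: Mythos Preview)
Your proof is correct and follows essentially the same route as the paper: both reduce the insurer's rationality constraint in (\ref{eq:Insurancen<}) to the feasibility of a quadratic inequality in $s_n$ over the interval $[1-r,1]$, with the paper completing the square to obtain $(s_n-\tfrac12)^2 \le 2(r-\tfrac12)^2-\tfrac14$ and then intersecting with $[1-r,1]$, while you instead maximize $(1-s_n)s_n$ directly over that interval by a case split on whether $\tfrac12$ lies inside. One minor remark: your closing commentary attributes the threshold $\tfrac12+\tfrac{\sqrt{2}}{4}$ to the boundary effect in the $r<\tfrac12$ case, but in fact that value already emerges from the unconstrained maximum in the $r\ge\tfrac12$ case; the boundary analysis for $r<\tfrac12$ merely confirms that no smaller $r$ can be rescued.
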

\begin{proof}
See Appendix A. 
\end{proof}
Proposition \ref{pro:Insurability} comes from the individual rationality constraints of both insurer and defender, and it reflects situations that the insurer has no incentive to provide insurance to the defender as he cannot make a profit from it or the defender has no incentive to accept any insurance as he has larger costs with it. We can further achieve the following remark regarding the condition (\ref{eq:Insurability}).
\begin{remark}
	\label{rem:InsurableEqu}
	The defender is not insurable if:
	\begin{itemize}
		\item[(i)] $\gamma_{d,n}$ is high, i.e., the attacker inflicts large losses on the defender;
		\item[(ii)] $c_{d,n}$ is low, i.e., the defender has a low cost to control the device frequently; 
		\item[(iii)] $\gamma_{a,n}$ is low, i.e., the attacker has a low benefit of controlling the device;
		\item[(iv)] $c_{a,n}$ is high, i.e., the attacker has a high cost to control the device frequently;
		\item[(v)] $w_{nn}^*$ is high, i.e., the network effect is high.  
	\end{itemize}
\end{remark}
We have that the following proposition regarding the optimal insurance contracts when the defender is insurable.
\begin{lemma}
\label{lem:InsurerOptimaln<}
If $\frac{1}{2} + \frac{\sqrt{2}}{4}\leq  \frac{\gamma_{a, n}c_{d, n}}{\gamma_{d, n}w_{nn}^*c_{a, n}} <1$, the optimal insurance contract is
\begin{equation}
\label{eq:InsurerOptimaln<1}
s_n^* = \frac{1}{2}, \ \ \ T_n^* =  \frac{\gamma_{a, n}c_{d, n}}{c_{a, n}}-\frac{\gamma_{d, n} w_{nn}^*}{2} + \frac{\gamma_{d, n}\sum_{m\neq n} w_{nm}^* \alpha_m}{2}.
\end{equation}
The insurer's profit under this contract is $\frac{\gamma_{a, n}c_{d, n}}{c_{a, n}}-\gamma_{d, n} w_{nn}^* + \frac{ \gamma_{d, n} ^2w_{nn}^{*2} c_{a, n}}{8\gamma_{a, n} c_{d, n}} $. 
\end{lemma}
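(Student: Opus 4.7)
My plan is to reduce the insurer's problem (\ref{eq:Insurancen<}) to an unconstrained maximization of a downward parabola, and then check that the hypothesis on $\gamma_{a,n}c_{d,n}/(\gamma_{d,n}w_{nn}^{*}c_{a,n})$ is precisely what is needed for the interior maximizer to satisfy the insurer's individual rationality constraint. Writing $r := \gamma_{a,n}c_{d,n}/(\gamma_{d,n}w_{nn}^{*}c_{a,n}) \in (0,1)$ and $A := \gamma_{a,n}c_{d,n}/c_{a,n} - \gamma_{d,n}w_{nn}^{*}$, $B := \gamma_{d,n}^{2}w_{nn}^{*2}c_{a,n}/(2\gamma_{a,n}c_{d,n})$, the objective in (\ref{eq:Insurancen<}) becomes $A + B\,(1-s_n)s_n$, which is strictly concave in $s_n$ with unique unconstrained maximizer $s_n=1/2$. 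Since the feasible region on which the FlipIt-E1 equilibrium applies is $s_n \in [1-r,\,1]$, I first observe that the hypothesis $r \geq 1/2 + \sqrt{2}/4 > 1/2$ guarantees $1-r \leq 1/2$, so $s_n^{*}=1/2$ lies in the interior of this interval.

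Second, I would verify the insurer's participation constraint at $s_n=1/2$. Substituting gives $A + B/4 \geq 0$, which after multiplying through by $8\gamma_{a,n}c_{d,n}/(\gamma_{d,n}w_{nn}^{*})^{2}c_{a,n}^{-1}$ and simplifying reduces to the quadratic $8r^{2} - 8r + 1 \geq 0$. The roots are $r = 1/2 \pm \sqrt{2}/4$, so the constraint holds precisely when $r \leq 1/2 - \sqrt{2}/4$ or $r \geq 1/2 + \sqrt{2}/4$; the hypothesis picks out the right branch, and together with the constraint $r < 1$ it delimits the ``insurable low-risk'' regime of the lemma. This compatibility between the unconstrained optimum and the individual rationality constraint is the key step and the one with the most potential for algebraic slip-ups.

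Third, I would recover $T_n^{*}$ from (\ref{eq:InsurerMaxn}) by computing $K_{d,n}^{*}(0)$ and $K_{d,n}^{*}(1/2)$ explicitly. At $s_n=0$ the game sits at FlipIt-E2, and plugging (\ref{eq:FlipItEqu<}) with $\tilde{s}_n=0$, $\tilde{\gamma}_{d,n}=\gamma_{d,n}w_{nn}^{*}$ into $J_{d,n}$ yields $K_{d,n}^{*}(0)=\gamma_{a,n}c_{d,n}/c_{a,n}$. At $s_n=1/2$ the game sits at FlipIt-E1, and using $\alpha_n^{*}(1/2) = 1 - \gamma_{d,n}w_{nn}^{*}c_{a,n}/(4\gamma_{a,n}c_{d,n})$ together with the corresponding $p_{d,n}^{*}(1/2)$ from (\ref{eq:FlipItEqu>}), a direct calculation shows that the cost-of-defense term exactly cancels the quadratic correction in the loss term, leaving $K_{d,n}^{*}(1/2)=\gamma_{d,n}w_{nn}^{*}/2$. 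Inserting these two values into (\ref{eq:InsurerMaxn}) gives the stated $T_n^{*}$.

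Finally, the claimed profit follows by subtracting the expected coverage $(1/2)\gamma_{d,n}\bigl(w_{nn}^{*}\alpha_n^{*}(1/2)+\sum_{m\neq n}w_{nm}^{*}\alpha_m\bigr)$ from $T_n^{*}$; the cross-node terms cancel, and substituting $\alpha_n^{*}(1/2)$ produces $\gamma_{a,n}c_{d,n}/c_{a,n} - \gamma_{d,n}w_{nn}^{*} + \gamma_{d,n}^{2}w_{nn}^{*2}c_{a,n}/(8\gamma_{a,n}c_{d,n})$, which matches the stated profit and equals $A+B/4$ as expected. I anticipate the main obstacle to be the clean derivation of the quadratic threshold $1/2+\sqrt{2}/4$ from the IR inequality; once that is in place, every other step is a mechanical substitution using Proposition \ref{pro:FlipItEqu} and Remark \ref{rem:DefendernIR}.
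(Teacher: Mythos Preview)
Your proposal is correct and follows essentially the same route as the paper: maximize the concave quadratic $A+B(1-s_n)s_n$ over $[1-r,1]$, observe that $s_n^*=\tfrac12$ lies in this interval because $r>\tfrac12$, and then recover $T_n^*$ from (\ref{eq:InsurerMaxn}). The only organizational difference is that the paper outsources the verification of the insurer's IR constraint (your inequality $8r^2-8r+1\geq 0$ and the resulting threshold $r\geq \tfrac12+\tfrac{\sqrt{2}}{4}$) to the separate Proposition~\ref{pro:Insurability} proved in Appendix~A, whereas you derive it inline; your more explicit computations of $K_{d,n}^*(0)$, $K_{d,n}^*(1/2)$, and the profit are simply the details the paper suppresses.
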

\begin{proof}
We can achieve that $s_n^* = \frac{1}{2}$ from (\ref{eq:Insurancen<}), and it satisfies the constraint $1-\frac{\gamma_{a, n}c_{d, n}}{\gamma_{d, n}w_{nn}^*c_{a, n}}  \leq s_n \leq 1$. $T_n^*$ can be obtained by plugging $s_n^*$ into (\ref{eq:InsurerMaxn}).
\end{proof}

With Lemma \ref{lem:InsurerOptimaln>}, Proposition \ref{pro:Insurability}, and Lemma \ref{lem:InsurerOptimaln<}, we have the following proposition regarding the equilibrium of the \texttt{FlipIn-D} game defined in Definition \ref{def:BiLeveln}. 
\begin{proposition}
\label{pro:EquBin}
The Nash equilibrium of the \texttt{FlipIn-D} game defined in Definition \ref{def:BiLeveln} can be summarized into the following three cases:
\begin{itemize}
\item if $\frac{\gamma_{a, n}c_{d, n}}{\gamma_{d, n}w_{nn}^*c_{a, n}}\geq 1$, the equilibrium is achieved at
\[s_n^* = \frac{1}{2}, T_n^* =  \frac{\gamma_{d,n}w_{nn}^*}{2}+\frac{\gamma_{d, n}\sum_{m\neq n} w_{nm}^* \alpha_m(s_m)}{2},\]
\[p_{d, n}^* = \frac{ \gamma_{d, n}^2 w_{nn}^{*2}c_{a, n}}{8\gamma_{a, n}c_{d, n}^2},p_{a, n}^* =  \frac{\gamma_{d, n} w_{nn}^*}{4c_{d, n}};\]
\item if $\frac{1}{2} + \frac{\sqrt{2}}{4}\leq \frac{\gamma_{a, n}c_{d, n}}{\gamma_{d, n}w_{nn}^*c_{a, n}} <1$, the equilibrium is achieved at 
\[s_n^* = \frac{1}{2},  T_n^* =  \frac{\gamma_{a, n}c_{d, n}}{c_{a, n}}-\frac{\gamma_{d, n} w_{nn}^*}{2} + \frac{\gamma_{d, n}\sum_{m\neq n} w_{nm}^* \alpha_m(s_m)}{2}, \]
\[p_{d, n}^* = \frac{ \gamma_{d, n}^2 w_{nn}^{*2}c_{a, n}}{8\gamma_{a, n}c_{d, n}^2},p_{a, n}^* =  \frac{\gamma_{d, n} w_{nn}^*}{4c_{d, n}};\]
\item if $ \frac{\gamma_{a, n}c_{d, n}}{\gamma_{d, n}w_{nn}^*c_{a, n}} < \frac{1}{2} + \frac{\sqrt{2}}{4}$, the equilibrium does not exist. The defender and the attacker have 
\[p_{d,n}^* = \frac{\gamma_{a,n}}{2c_{a,n}}, p_{a,n}^* = \frac{\gamma_{a, n}^2c_{d, n}}{2\gamma_{d,n}w_{nn}^* c_{a, n}^2}.\]
\end{itemize} 
\end{proposition}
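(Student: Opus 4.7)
The plan is to combine the three building blocks already established, namely Lemma \ref{lem:InsurerOptimaln>}, Proposition \ref{pro:Insurability}, and Lemma \ref{lem:InsurerOptimaln<}, with the local \texttt{FlipIt} equilibrium in Proposition \ref{pro:FlipItEqu}, applied with the substitutions $\tilde{s}_n = s_n$ and $\tilde{\gamma}_{d,n} = \gamma_{d,n} w_{nn}^*$ dictated by Defender-D. Let $r_n := \frac{\gamma_{a,n} c_{d,n}}{\gamma_{d,n} w_{nn}^* c_{a,n}}$, which is the threshold parameter controlling which branch of Remark \ref{rem:Shift} governs the lower-level game. The proof splits into the three cases indicated in the statement.

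For the first case ($r_n \geq 1$, the high-risk regime), Remark \ref{rem:Shift} states that the \texttt{FlipIt-D} game is at \textit{FlipIt-E1} for every $s_n \in (0,1]$, so Lemma \ref{lem:InsurerOptimaln>} applies and yields $s_n^* = \tfrac{1}{2}$ together with the stated premium $T_n^*$. I would then obtain $p_{d,n}^*$ and $p_{a,n}^*$ by substituting $\tilde{s}_n = \tfrac{1}{2}$ and $\tilde{\gamma}_{d,n} = \gamma_{d,n} w_{nn}^*$ directly into the \textit{FlipIt-E1} formulas (\ref{eq:FlipItEqu>}). The premium formula requires plugging $s_n^* = 1/2$ back into (\ref{eq:InsurerMaxn}); I would use Remark \ref{rem:DefendernIR} to justify why only $\alpha_m$ (not $\alpha_m'$) appears.

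For the second case ($\tfrac{1}{2} + \tfrac{\sqrt 2}{4} \leq r_n < 1$, the insurable low-risk regime), Lemma \ref{lem:InsurerOptimaln<} already delivers $s_n^* = \tfrac{1}{2}$ and the stated premium. The small verification I need is that $s_n^* = \tfrac{1}{2}$ falls in the range where the lower-level game sits at \textit{FlipIt-E1}: by Remark \ref{rem:Shift} this requires $s_n^* \geq 1 - r_n$, which follows from $r_n \geq \tfrac{1}{2} + \tfrac{\sqrt 2}{4} > \tfrac{1}{2}$. With \textit{FlipIt-E1} active at $s_n^* = \tfrac{1}{2}$, the expressions for $p_{d,n}^*$ and $p_{a,n}^*$ follow from (\ref{eq:FlipItEqu>}) by the same substitution as in Case 1, which is why the two cases share identical frequency formulas.

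For the third case ($r_n < \tfrac{1}{2} + \tfrac{\sqrt 2}{4}$), Proposition \ref{pro:Insurability} immediately asserts non-insurability, so no insurance contract exists and hence no \texttt{FlipIn-D} equilibrium exists. To recover the defender's and attacker's frequencies in the absence of insurance, I note that $r_n < 1$ places us in the low-risk regime of Remark \ref{rem:Shift}, and at $s_n = 0$ the condition $\tfrac{\gamma_{a,n}}{2 c_{a,n}} < \tfrac{\gamma_{d,n} w_{nn}^*}{2 c_{d,n}}$ (equivalent to $r_n < 1$) forces \textit{FlipIt-E2}. Substituting $\tilde{s}_n = 0$ and $\tilde{\gamma}_{d,n} = \gamma_{d,n} w_{nn}^*$ into (\ref{eq:FlipItEqu<}) yields the stated $p_{d,n}^*$ and $p_{a,n}^*$. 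The only non-mechanical step across the entire proof is the boundary verification in Case 2 that $s_n^* = \tfrac{1}{2}$ respects the \textit{FlipIt-E1} region, and handling the case $r_n = 1$ cleanly at the interface between Cases 1 and 2; both are short once the threshold inequality $\tfrac{1}{2} + \tfrac{\sqrt 2}{4} > \tfrac{1}{2}$ is observed.
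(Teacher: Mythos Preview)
Your proposal is correct and follows exactly the paper's own approach: the paper's proof is the single sentence ``This proposition follows from combining Proposition \ref{pro:FlipItEqu}, Lemma \ref{lem:InsurerOptimaln>}, Proposition \ref{pro:Insurability}, and Lemma \ref{lem:InsurerOptimaln<}.'' Your write-up simply spells out the substitutions and the boundary check (that $s_n^*=\tfrac12$ lies in the \textit{FlipIt-E1} region when $r_n\geq\tfrac12+\tfrac{\sqrt2}{4}$) that the paper leaves implicit.
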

\begin{proof}
This proposition follows from combining Proposition \ref{pro:FlipItEqu}, Lemma \ref{lem:InsurerOptimaln>}, Proposition \ref{pro:Insurability}, and Lemma \ref{lem:InsurerOptimaln<}.
\end{proof}
We can see that when the defender is insurable, the optimal insurance contract provides a coverage level of $\frac{1}{2}$.  
\begin{remark}
\label{remLEquiBinG}
The Nash equilibrium of the \texttt{G-FlipIn-D} game defined in Definition \ref{def:BiLeveln} could be obtained with Proposition \ref{pro:EquBin} by combing the results of all the \texttt{FlipIn-D} games.
\end{remark}
We could also obtain the Nash equilibrium of the \texttt{FlipIn-D} game when there are no network connectivities by following the similar steps in this section. In this case, all the IoT devices are not connected with each other or there is only one IoT device in this network. 
\begin{corollary}
\label{cor:EquBinNoNetwork}
When the network is not connected, the Nash equilibrium of the \texttt{FlipIn-D} game can be summarized into the following three cases:
\begin{itemize}
\item if $\frac{\gamma_{a, n}c_{d, n}}{\gamma_{d, n}c_{a, n}}\geq 1$, the equilibrium is achieved at $s_n^* = \frac{1}{2}, T_n^* =  \frac{\gamma_{d,n}}{2},p_{d, n}^* = \frac{ \gamma_{d, n}^2 c_{a, n}}{8\gamma_{a, n}c_{d, n}^2},p_{a, n}^* =  \frac{\gamma_{d, n} }{4c_{d, n}}$;
\item if $\frac{1}{2} + \frac{\sqrt{2}}{4}\leq \frac{\gamma_{a, n}c_{d, n}}{\gamma_{d, n}c_{a, n}}<1$, the equilibrium is achieved at $s_n^* = \frac{1}{2},  T_n^* =  \frac{\gamma_{a, n}c_{d, n}}{c_{a, n}}-\frac{\gamma_{d, n} }{2}, p_{d, n}^* = \frac{ \gamma_{d, n}^2 c_{a, n}}{8\gamma_{a, n}c_{d, n}^2},p_{a, n}^* =  \frac{\gamma_{d, n} }{4c_{d, n}}$;
\item if $ \frac{\gamma_{a, n}c_{d, n}}{\gamma_{d, n}c_{a, n}} < \frac{1}{2} + \frac{\sqrt{2}}{4}$, the equilibrium does not exist. The defender and the attacker have $p_{d,n}^* = \frac{\gamma_{a,n}}{2c_{a,n}}, p_{a,n}^* = \frac{\gamma_{a, n}^2c_{d, n}}{2\gamma_{d,n}c_{a, n}^2}$.
\end{itemize} 
\end{corollary}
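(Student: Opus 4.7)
The plan is to derive Corollary \ref{cor:EquBinNoNetwork} as a direct specialization of Proposition \ref{pro:EquBin} to the case of an unconnected network. First, I would observe that when no node has network connections (or when $N=1$), the linear influence model of Remark \ref{rem:LinearInfluenceModel} reduces to $\mathbf{W} = \mathbf{0}$, hence $\mathbf{W}^* = (\mathbf{I}_N - \eta\mathbf{W}^T)^{-1} = \mathbf{I}_N$. This gives $w_{nn}^* = 1$ and $w_{nm}^* = 0$ for all $m\neq n$, which is also consistent with Remark \ref{rem:LossNoNetwork} where $R_n = \alpha_n$.

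Next, I would substitute these two reductions directly into the three cases of Proposition \ref{pro:EquBin}. In each regime, the threshold $\gamma_{a,n}c_{d,n} / (\gamma_{d,n} w_{nn}^* c_{a,n})$ becomes $\gamma_{a,n}c_{d,n}/(\gamma_{d,n}c_{a,n})$, the cross-node contributions $\sum_{m\neq n} w_{nm}^*\alpha_m(s_m)$ appearing in the premium vanish, and every $w_{nn}^*$ in the equilibrium strategies $p_{d,n}^*$ and $p_{a,n}^*$ collapses to $1$. This immediately produces the stated premium $T_n^* = \gamma_{d,n}/2$ in the high-risk regime and $T_n^* = \gamma_{a,n}c_{d,n}/c_{a,n} - \gamma_{d,n}/2$ in the insurable low-risk regime, together with the simplified expressions for $p_{d,n}^*$ and $p_{a,n}^*$. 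The non-insurable regime condition $\gamma_{a,n}c_{d,n}/(\gamma_{d,n}c_{a,n}) < 1/2 + \sqrt{2}/4$ inherits directly from Proposition \ref{pro:Insurability}, with the uninsured equilibrium strategies coming from \textit{FlipIt-E2} in Proposition \ref{pro:FlipItEqu} after setting $\tilde{s}_n = 0$ and $\tilde{\gamma}_{d,n} = \gamma_{d,n}$.

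Finally, I would verify consistency: the insurer's individual rationality constraint, the defender's individual rationality constraint, and the selection $s_n^* = 1/2$ remain optimal under the reduction since none of the arguments in Lemmas \ref{lem:InsurerOptimaln>} and \ref{lem:InsurerOptimaln<} depend on the network structure beyond the single parameter $w_{nn}^*$ and the additive cross-node term, both of which have been accounted for above. There is no genuine obstacle here; the only care required is to confirm that the two ways of characterizing the unconnected case (Remark \ref{rem:LossNoNetwork} versus $\mathbf{W}=\mathbf{0}$ in Remark \ref{rem:DefenderRSolve}) yield identical reductions, which they do, and that property (ii) of Remark \ref{rem:DefenderRSolve}, $\sum_m w_{mn}^* = 1/(1-\eta)$, specializes correctly to $w_{nn}^* = 1$ when $\eta$ is irrelevant because $\mathbf{W}=\mathbf{0}$.
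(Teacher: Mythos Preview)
Your proposal is correct and matches the paper's intent: the paper does not write out a separate proof but simply notes that the corollary is obtained ``by following the similar steps in this section,'' i.e., by specializing the analysis culminating in Proposition~\ref{pro:EquBin} to the no-network case $w_{nn}^*=1$, $w_{nm}^*=0$ for $m\neq n$. Your added remark that property~(ii) of Remark~\ref{rem:DefenderRSolve} need not be invoked here (since the row-stochastic condition~\eqref{eq:w} fails when $\mathbf{W}=\mathbf{0}$, and $\mathbf{W}^*=\mathbf{I}_N$ follows directly) is a welcome clarification beyond what the paper spells out.
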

Recall $w_{nn}^* > 1$ and $w_{nm}^* \geq 0$ from Remark \ref{rem:DefenderRSolve}. By comparing Proposition \ref{pro:EquBin} and Corollary \ref{cor:EquBinNoNetwork}, we can see that the network effect decreases the insurability as the insurable defender could be uninsurable because of network effects when $\frac{\gamma_{a, n}c_{d, n}}{\gamma_{d, n}w_{nn}^*c_{a, n}} < \frac{1}{2} + \frac{\sqrt{2}}{4} \leq \frac{\gamma_{a, n}c_{d, n}}{\gamma_{d, n}c_{a, n}} $. Moreover, the premium of the optimal insurance contract is also higher with network connectivity when $\frac{\gamma_{a, n}c_{d, n}}{\gamma_{d, n}w_{nn}^*c_{a, n}} \geq 1$.   

\section{Cyber Insurance: Defender-C}
\label{sec:C}
In this section, we analyze the insurer's problem (\ref{eq:InsurerN}) for the \texttt{FlipIn-C} game. Following similar steps in the previous section, let $K_{d,n}^*(s) = J_{d,n}(p_{d,n}^*,p_{a,n}^*;s)$ from Definition \ref{def:FlipItLocal}. The defender's individual rationality constraint (\ref{eq:DefenderNIR}) indicates that:
\begin{equation}
\label{eq:DefenderIRN}
\begin{array}{l}
T \leq \sum\limits_{n=1}^N \left(K_{d,n}^*(0)-K_{d,n}^*(s) \right).
\end{array}
\end{equation}
Thus, the highest premium that the insurer can charge is
\begin{equation}
\label{eq:DefenderMaxN}
\begin{array}{l}
T_{\max} =  \sum\limits_{n=1}^N \left(K_{d,n}^*(0)-K_{d,n}^*(s) \right).
\end{array}
\end{equation}
As a result, solving the insurer's problem (\ref{eq:InsurerN}) is equivalent to solving the following problem after plugging (\ref{eq:DefenderMaxN}) into (\ref{eq:InsurerN}):
\begin{equation}
\label{eq:InsuranceN}
\begin{array}{l}
s^*=\arg\max\limits_{0 < s \leq 1} \ \sum\limits_{n=1}^N \left(K_{d,n}^*(0)-K_{d,n}^*(s) - s \sum\limits_{m=1}^N \gamma_{d, m} w_{mn}^* \alpha_n^*(s)\right) 
\\ \begin{array}{cc}
{\text{s.t.}}&{ \sum\limits_{n=1}^N \left(K_{d,n}^*(0)-K_{d,n}^*(s) - s \sum\limits_{m=1}^N \gamma_{d, m} w_{mn}^* \alpha_n^*(s)\right)  \geq 0.}
\end{array}
\end{array}
\end{equation}
Problem (\ref{eq:InsuranceN}) is a nonlinear programming problem and it is challenging to find the analytical solution. We can leverage numerical methods to compute $s^*$ and obtain $T^*$ with (\ref{eq:DefenderMaxN}). We can then find $p_{d,n}^*$ and $p_{a,n}^*$ by plugging $s^*$ into Proposition \ref{pro:FlipItEqu}, and further obtain the solution of the \texttt{FlipIn-C} game defined in Definition \ref{def:BiLevelN}.

\subsection{Semi-homogeneous Case}
Problem (\ref{eq:InsuranceN}) can be directly solved in a semi-homogeneous case following similar steps in the previous section. In this semi-homogeneous case, we consider that all players in one party are homogeneous with the same parameters, i.e., $c_{d,n} = c_d$, $\gamma_{d, n} = \gamma_d$, $c_{a, n} = c_a$, and $\gamma_{a, n} = \gamma_a$ for $n\in\mathcal{N}$. Note that the network can be heterogeneous, i.e., each node may have a different number of neighbors with different $w_{mn}$. 

Recall (\ref{eq:DefenderGamma}), we have $\tilde{\gamma}_{d,n} = \sum_{m=1}^N \gamma_{d, m} w_{mn}^* = \gamma_{d}\sum_{m=1}^N  w_{mn}^* = \frac{\gamma_{d}}{1-\eta}$ for $n\in\mathcal{N}$ from Remark \ref{rem:DefenderRSolve}. Since the equilibrium of the \texttt{L-FlipIt-C} game only depends on $c_{d,n}$, $\tilde{\gamma}_{d, n}$, $c_{a, n}$, $\gamma_{a,n}$, and $\tilde{s}_n$, which are same for each node, all nodes have the same results at the equilibrium, i.e., $p_{d,n}^* = p_d^*$, $p_{a,n}^* = p_a^*$, $\alpha_{n}^* = \alpha^*$, and $J_{d,n}(p_{d,n}^*,p_{a,n}^*;s) = J_{d}(p_{d}^*,p_{a}^*;s)$. Thus, let $K_d^*(s)=J_{d}(p_{d}^*,p_{a}^*;s)$, the insurer's problem can be simplified into the following problem
\begin{equation}
\label{eq:InsuranceNH}
\begin{array}{l}
s^*=\arg\max\limits_{0<s\leq 1} \  K_{d}^*(0)-K_{d}^*(s) - s \frac{\gamma_{d}}{1-\eta} \alpha^*(s)
\\ \begin{array}{cc}
{\text{s.t.}}&{K_{d}^*(0)-K_{d}^*(s) - s \frac{\gamma_{d}}{1-\eta} \alpha^*(s)  \geq 0.}
\end{array}
\end{array} 
\end{equation}
Note that the premium $T^* = N(K_{d}^*(0)-K_{d}^*(s^*) )$, where $N$ is the number of nodes. We note that the structure of the games in this subsection is similar to the structure of the games in Defender-D, and we can obtain the equilibrium of the \texttt{FlipIn-C} game in this semi-homogeneous case using the results from Section \ref{sec:D}. 
\begin{corollary}
\label{cor:InsuranceNH}
The Nash equilibrium of the \texttt{FlipIn-C} game defined in Definition \ref{def:BiLeveln} of a semi-homogeneous case can be summarized into the following three cases:
\begin{itemize}
\item if $\frac{(1-\eta)\gamma_{a}c_{d}}{\gamma_{d}c_{a}}\geq 1$, the equilibrium is achieved at $s^* = \frac{1}{2}, T^* =  \frac{N\gamma_{d}}{2(1-\eta)}, p_{d}^* = \frac{ \gamma_{d}^2 c_{a}}{8(1-\eta)^2\gamma_{a}c_{d}^2},p_{a}^*= \frac{\gamma_{d}}{4(1-\eta)c_{d}}$;
\item if $\frac{1}{2} + \frac{\sqrt{2}}{4}\leq \frac{(1-\eta)\gamma_{a}c_{d}}{\gamma_{d}c_{a}}<1$, the equilibrium is achieved at $s^* = \frac{1}{2},  T^* =  \frac{Nc_{d}\gamma_{a}}{c_{a}}-\frac{N\gamma_{d}}{2(1-\eta)}, p_{d}^* =  \frac{ \gamma_{d}^2 c_{a}}{8(1-\eta)^2\gamma_{a}c_{d}^2},p_{a}^* =  \frac{\gamma_{d}}{4(1-\eta)c_{d}}$;
\item if $\frac{(1-\eta)\gamma_{a}c_{d}}{\gamma_{d}c_{a}} < \frac{1}{2} + \frac{\sqrt{2}}{4}$, the equilibrium does not exist. The defender and the attacker have $p_{d}^* =  \frac{\gamma_{a}}{2c_{a}}, p_{a}^*= \frac{(1-\eta)\gamma_{a}^2c_{d}}{2\gamma_{d}c_{a}^2}$.
\end{itemize} 
\end{corollary}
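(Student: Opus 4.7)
The plan is to exploit the structural parallel between this semi-homogeneous centralized setting and the single-node distributed analysis carried out in Section \ref{sec:D}, so that Corollary \ref{cor:InsuranceNH} follows from Proposition \ref{pro:EquBin} by a substitution of parameters. The first step is to simplify $\tilde{\gamma}_{d,n}$ using part (ii) of Remark \ref{rem:DefenderRSolve}: since $\gamma_{d,m}=\gamma_d$ for all $m$, we get $\tilde{\gamma}_{d,n} = \gamma_d \sum_{m=1}^{N} w_{mn}^{*} = \gamma_d/(1-\eta)$, which is independent of $n$. Combined with $c_{d,n}=c_d$, $c_{a,n}=c_a$, $\gamma_{a,n}=\gamma_a$, and $\tilde{s}_n = s$, the local \texttt{FlipIt} games defined in Definition \ref{def:FlipItLocal} are all identical, so Proposition \ref{pro:FlipItEqu} yields a common $(p_d^{*},p_a^{*},\alpha^{*})$ across all nodes and the shift threshold $\gamma_{a,n}c_{d,n} \gtrless \tilde{\gamma}_{d,n} c_{a,n}$ becomes $(1-\eta)\gamma_a c_d \gtrless \gamma_d c_a$.

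The second step is to reduce the insurer's program (\ref{eq:InsuranceN}) to its semi-homogeneous form (\ref{eq:InsuranceNH}): every summand in the objective and in the constraint is identical, so the outer sum contributes only a multiplicative factor of $N$ which does not affect the optimizer $s^{*}$. At this point, (\ref{eq:InsuranceNH}) has exactly the same algebraic shape as (\ref{eq:Insurancen}) in Section \ref{sec:D}, with $\gamma_{d,n} w_{nn}^{*}$ replaced by $\gamma_d/(1-\eta)$ and with the off-node term $s_n \gamma_{d,n} \sum_{m\neq n} w_{nm}^{*} \alpha_m$ absent (because, in the centralized case, there are no other principal-agent subproblems whose equilibria feed into this one).

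The third step is then a direct case split mirroring Lemma \ref{lem:InsurerOptimaln>}, Proposition \ref{pro:Insurability}, and Lemma \ref{lem:InsurerOptimaln<}. In the high-risk regime $(1-\eta)\gamma_a c_d \geq \gamma_d c_a$, the game stays at \textit{FlipIt-E1} for all $s\in(0,1]$, the insurer's profit is proportional to $(1-s)s$, and the maximizer is $s^{*}=1/2$; plugging into the substituted version of (\ref{eq:FlipItEqu>}) gives the stated $(p_d^{*},p_a^{*})$ and, via $T_{\max}=N(K_d^{*}(0)-K_d^{*}(s^{*}))$, the premium $T^{*}=N\gamma_d/[2(1-\eta)]$. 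In the intermediate regime, one has to rerun the argument of Proposition \ref{pro:Insurability} with the constant term $\gamma_a c_d/c_a - \gamma_d/(1-\eta)$ in place of its distributed analogue, obtaining the insurability threshold $(1-\eta)\gamma_a c_d / (\gamma_d c_a) \geq 1/2 + \sqrt{2}/4$; when insurable, $s^{*}=1/2$ again and the premium is read off from (\ref{eq:DefenderMaxN}). Below that threshold, no contract simultaneously satisfies both IR constraints, so the equilibrium collapses to the uninsured \textit{FlipIt-E2} outcome from (\ref{eq:FlipItEqu<}).

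The main obstacle I anticipate is bookkeeping rather than conceptual: one has to verify carefully that the comparison between $K_d^{*}(0)$ (which sits at \textit{FlipIt-E2} when the regime is low-risk) and $K_d^{*}(s)$ (which sits at \textit{FlipIt-E1} once $s$ is large enough) produces exactly the same quadratic-in-$s$ shape as in Section \ref{sec:D}, so that the $1/2 + \sqrt{2}/4$ threshold carries over verbatim. Once this algebraic identification is confirmed, the three cases of Corollary \ref{cor:InsuranceNH} follow by direct substitution, and the premium formulas drop out of (\ref{eq:DefenderMaxN}).
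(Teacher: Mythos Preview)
Your proposal is correct and follows essentially the same route as the paper: the paper explicitly states that in the semi-homogeneous case the structure collapses to that of Section~\ref{sec:D} with $\tilde{\gamma}_{d,n}=\gamma_d/(1-\eta)$, and that the equilibrium can be read off from those earlier results. Your write-up simply fills in the bookkeeping that the paper leaves implicit.
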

We can see that the equilibrium results of the semi-homogeneous \texttt{FlipIn-C} game do not depend on the network topology. We could also obtain the equilibrium results of the \texttt{FlipIn-D} games of Defender-D in this semi-homogeneous case by Proposition \ref{pro:EquBin}. Note that $\frac{(1-\eta)\gamma_{a}c_{d}}{\gamma_{d}c_{a}} < \frac{\gamma_{a}c_{d}}{\gamma_{d}w_{nn}^*c_{a}}$ as $w_{nn}^* < \frac{1}{1-\eta}$ from Remark \ref{rem:DefenderRSolve}, thus, the defender in Defender-C is less insurable than the defenders in Defender-D. Moreover, both the defender and the attacker act more frequently in Defender-C than in Defender-D when the defenders in both Defender-D and Defender-C are insurable. The defender defends at the same rate in Defender-D and in Defender-C while the attacker attacks more frequently in Defender-D than in Defender-C when the defenders in both Defender-D and Defender-C are not insurable. 

\section{Numerical Analysis}
\label{sec:Num}
In this section, we present three numerical experiments and compare the results in Defender-D and the results in Defender-C. In the first and second experiments, we consider homogeneous players and investigate the impacts of network topology. The first experiment compares the results of homogeneous networks with different levels of connectivity, while the second experiment compares the results of nodes with different numbers of neighbors in a heterogeneous network. In the last experiment, we consider heterogeneous players in a homogeneous network and compare the results of defenders with different cost parameters. 

These three experiments are inspired by real-world IoT applications. The first and second experiments consider homogeneous IoT devices, such as thermal controllers, surveillance cameras, and unmanned aerial vehicles (UAVs). It is important for both defenders and insurers to know how network topology affects the security of IoT networks. The third experiment considers heterogeneous IoT devices in a network. One example is that a smart home may contain laptops, wireless routers, smart speakers, cameras, and sweeping robots. Different devices may have distinct vulnerabilities and require different protection methods. Moreover, some devices, such as laptops and cameras, contain sensitive information of the household, and they may inflict higher losses on the defenders once they are compromised. Thus, it is also crucial to study cyber insurance on different devices in a network. 

\begin{figure}[]
\centering
\subfigure[]{\includegraphics[width=0.11\textwidth]{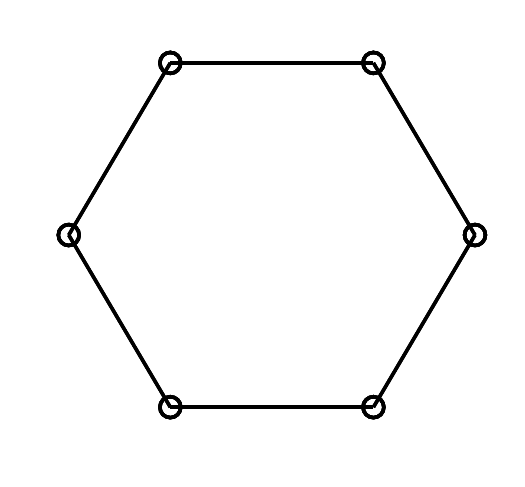}}
\subfigure[]{\includegraphics[width=0.11\textwidth]{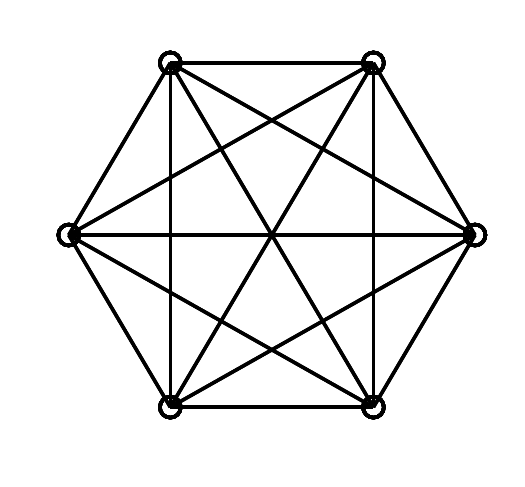}}
\subfigure[]{\includegraphics[width=0.11\textwidth]{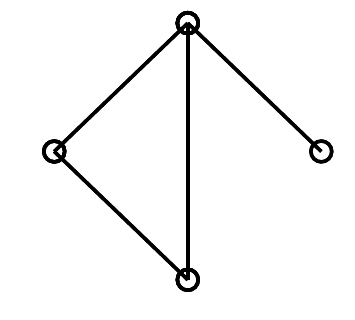}}
\subfigure[]{\includegraphics[width=0.11\textwidth]{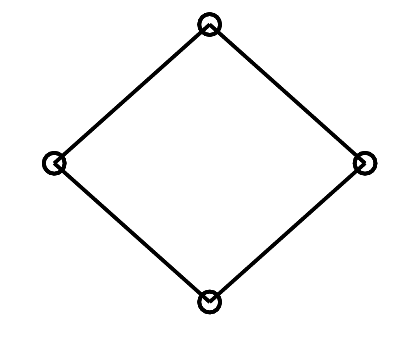}}
\caption{Networks.}
\label{fig:ExNetworks}
\end{figure}

\begin{figure}[]
\centering
\subfigure{\includegraphics[width=0.24\textwidth]{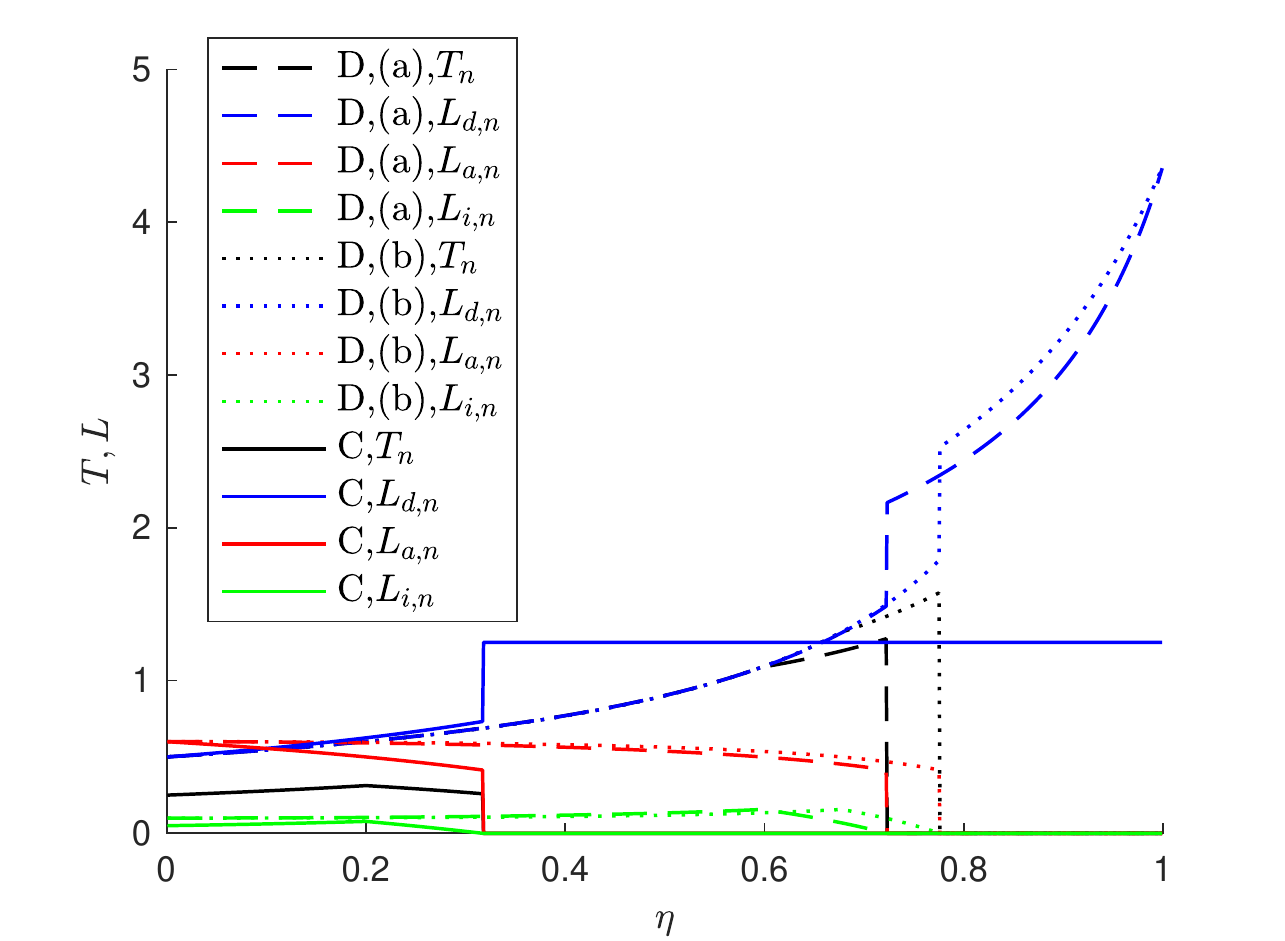}}
\subfigure{\includegraphics[width=0.24\textwidth]{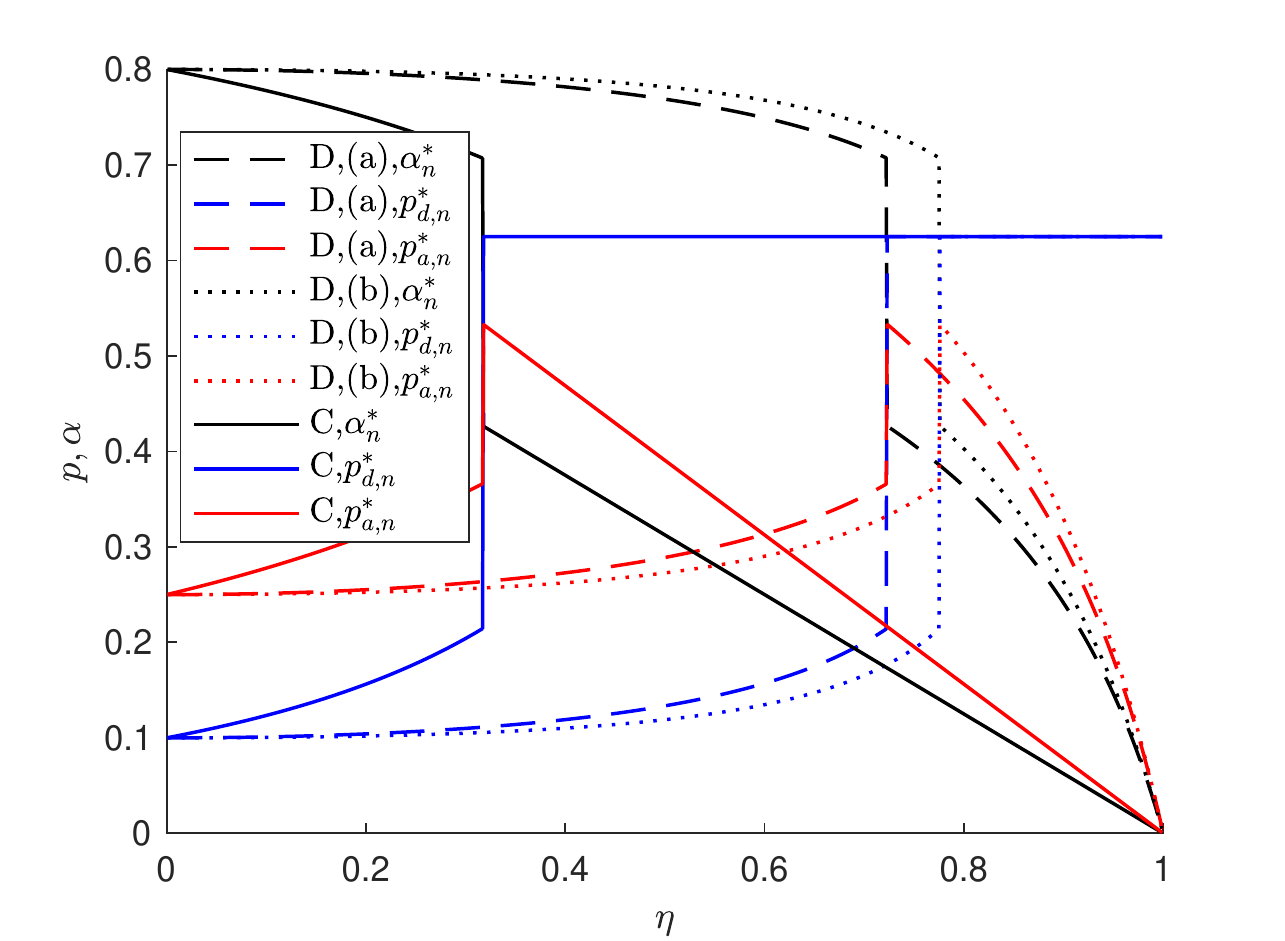}}
\caption{{Numerical results of homogeneous networks with homogeneous players in Fig. \ref{fig:ExNetworks}(ab). The x-axis is the network discount ratio $\eta$. ``D" and ``C" represent Defender-D and Defender-C, respectively. ``(a)" and ``(b)" represent networks (a) and (b) in Fig. \ref{fig:ExNetworks}, respectively. Note that each node in (a) and (b) has $2$ and $5$ neighbors, respectively. All nodes in both graphs satisfy $\gamma_{d,n}=1.0$, $ \gamma_{a,n} = 1.0$, $c_{d,n} = 1.0$, and $c_{a, n} = 0.8$. Each link in (a) and (b) satisfies $w_{nm} = 0.5$ and $0.2$, respectively. Different line styles indicate results in different networks while different line colors indicate different variables. Note that there are only one insurer and one defender for the case Defender-C, and we plot $T_n = T/N$ and $L_{d,n} = L_d/N$ instead of $T$ and $L_d$ to compare with the results in Defender-D.} }
\label{fig:ExHom}
\end{figure}

All the results in three experiments have been plotted with respect to the network discount ratio $\eta$ as shown in Figs. \ref{fig:ExHom}, \ref{fig:ExSemi}, and \ref{fig:ExHet}. A larger $\eta$ indicates that the network is strongly connected and an attacker can inflict larger losses on the neighboring IoT devices. We plot the losses of defenders as $L_{d,n}$ or $L_d$, utilities of attackers as $L_{a, n}$ or $L_a$, and profits of insurers as $L_{i,n}$ or $L_i$. Note that $L_{d,n}$ in Defender-D is computed through (\ref{eq:DefendernMin}) instead of (\ref{eq:DefendernMinSimplified}) as there are also losses caused by the attackers in neighboring nodes. In all figures, $T = 0$ or $T_n = 0$ indicate that the defender is not insurable and there exists no effective insurance contract. In the first and second experiments, the coverage level of the optimal insurance contract is $\frac{1}{2}$ in both Defender-C and Defender-D when the defenders are insurable. 

{We have two important observations from all experiments. We can see that the premiums in both Defender-D and Defender-C have an upward trend with the increase of $\eta$, which indicates that defenders are required to pay higher premiums on strongly connected networks. However, when $\eta$ is too large, the premiums drop to $0$,  i.e., the defenders are not insurable. As a result, we can conclude that the network effect decreases the insurability of defenders. The insurers should either charge higher premiums or provide no insurance to defenders while the defenders should improve local protections instead of purchasing insurance on strongly connected networks. }

{We can also see from all experiments that when $\eta$ is small and defenders are insurable, the defender's total loss in Defender-C is higher than the defenders' global loss in Defender-D, however, when the $\eta$ is large and defenders are not insurable, the defender's total loss in Defender-C is lower than the defenders' global loss in Defender-D. This phenomenon provides guidance for IoT defenders to decide between centralized management or decentralized management: for weakly connected networks, decentralized management is better than centralized management and each defender should monitor his or her own device; for strongly connected networks, centralized management outperforms decentralized management and a global defender should in charge of all devices. }

\begin{figure}[]
\centering
\subfigure[]{\includegraphics[width=0.24\textwidth]{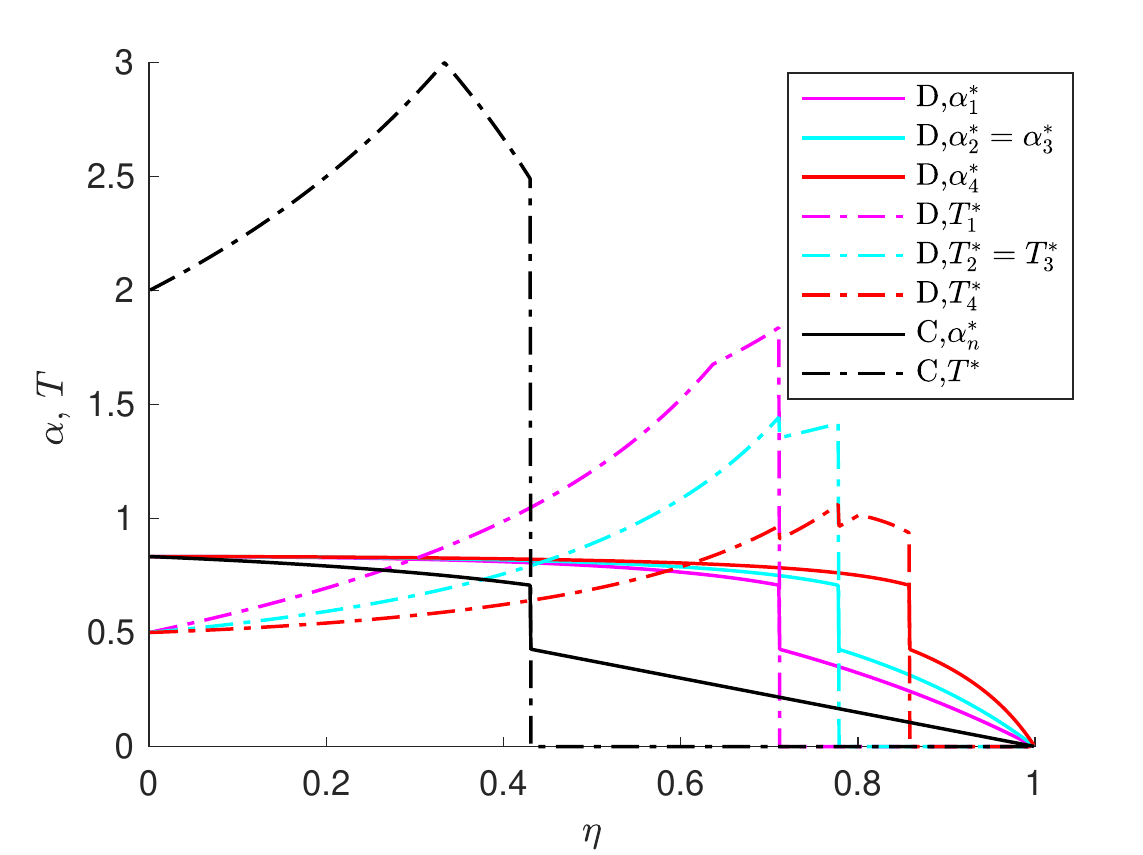}}
\subfigure[]{\includegraphics[width=0.24\textwidth]{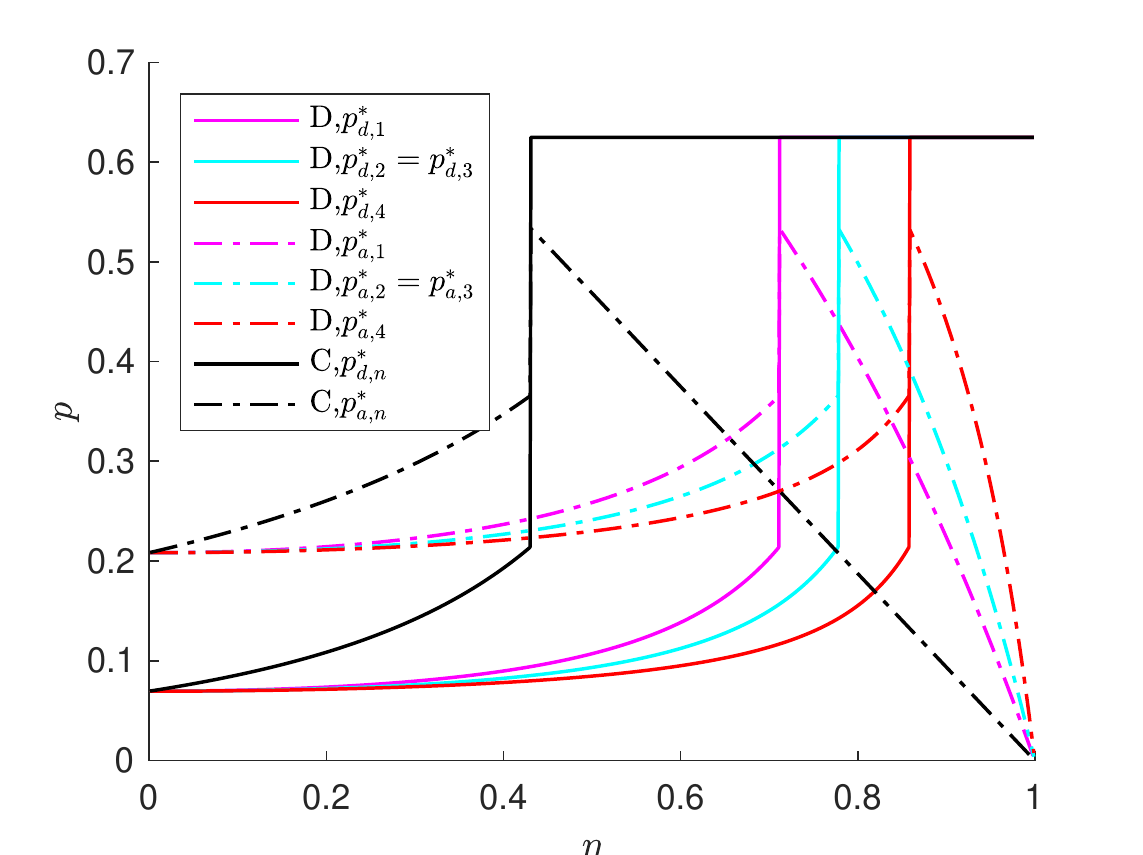}}
\subfigure[]{\includegraphics[width=0.24\textwidth]{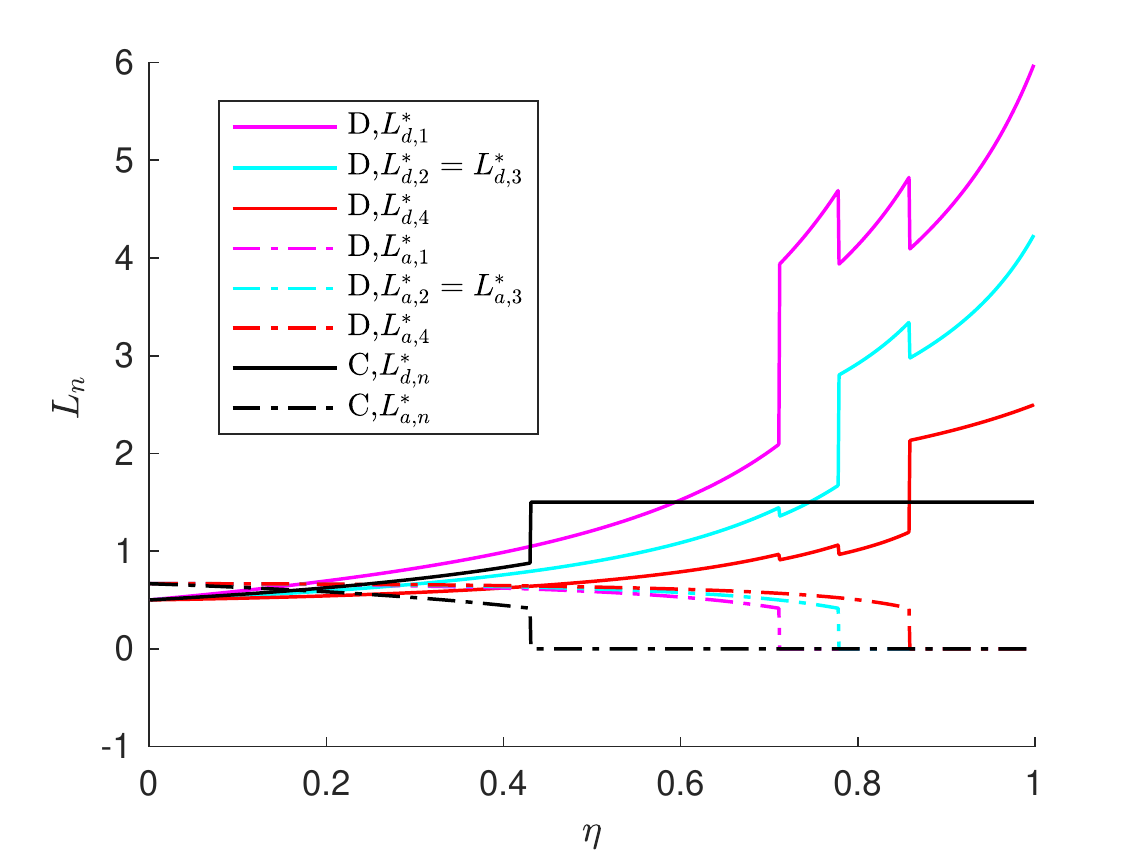}}
\subfigure[]{\includegraphics[width=0.24\textwidth]{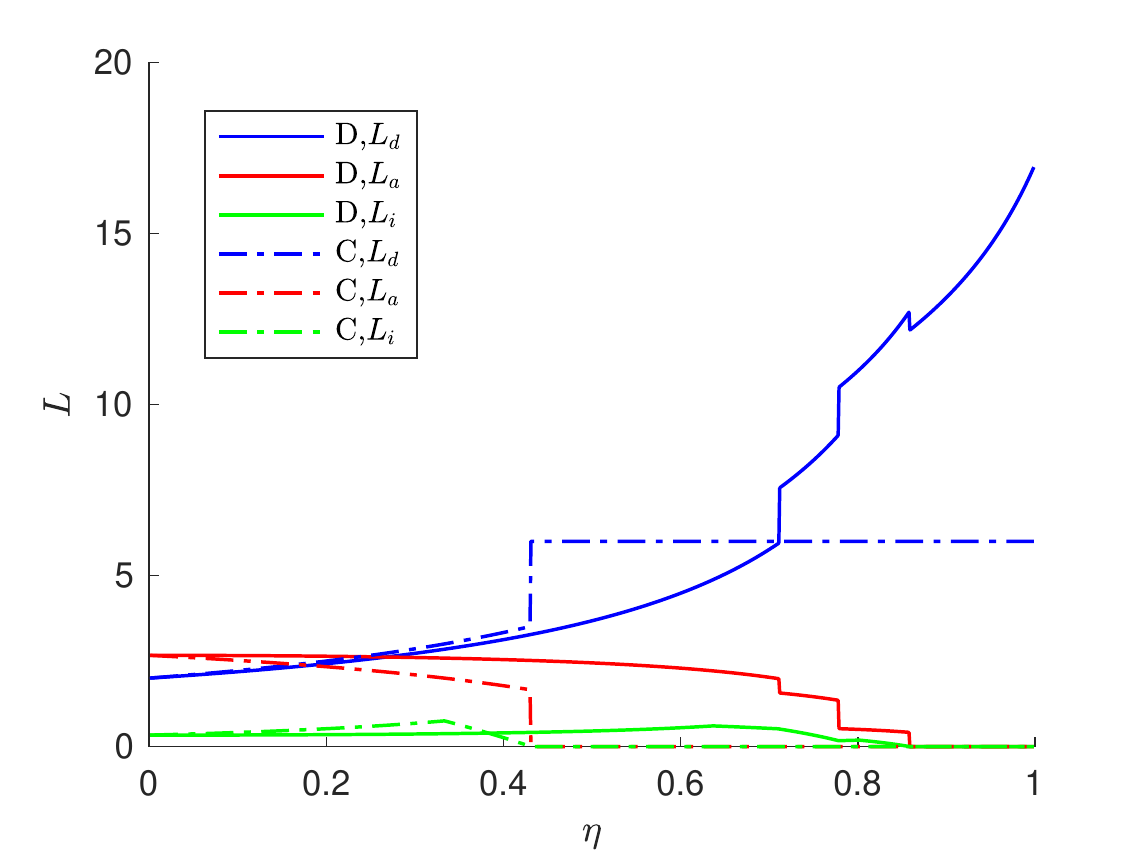}}
\caption{{Numerical results of a heterogeneous network with homogeneous IoT devices in Fig. \ref{fig:ExNetworks}(c). The x-axis is the network discount ratio $\eta$. ``D" and ``C" represent Defender-D and Defender-C, respectively. Note that nodes $1$, $2$, $3$, and $4$ have $3$, $2$, $2$, and $1$ neighbors, respectively, thus, $w_{12}=w_{13}=w_{14}=0.3333$, $w_{21} = w_{23} = w_{31} = w_{32} = 0.5$, and $w_{41} = 1$. All nodes satisfy $\gamma_{d,n}=1.0$, $ \gamma_{a,n} = 1.0$, $c_{d,n} = 1.2$, and $c_{a, n} = 0.8$. Different line colors indicate results of different nodes while different line styles indicate different variables. Note that we plot the global results of Defender-D in Fig. \ref{fig:ExSemi}(d) to compare with the results of Defender-C.}}
\label{fig:ExSemi}
\end{figure}

The results of the first experiment are presented in Fig. \ref{fig:ExHom}. Since all players are homogeneous and all networks are homogeneous, we only plot the results of one node for either network. Since all players are homogeneous, we can achieve the same results for both networks in Defender-C as discussed in Section \ref{sec:C}.A.; thus, we only plot the results of one network in Defender-C. Comparing the results of network (a) and network (b) in Defender-D, we can see that nodes in network (a) become uninsurable at a smaller $\eta$, which indicates that networks with lower connectivities are less insurable. Moreover, the global defender in Defender-C becomes uninsurable at a smaller $\eta$ than the defenders in Defender-D for both networks, which indicates that a defender who controls the whole network is less insurable than a defender who controls a single device.

The results of the second experiment are presented in Fig. \ref{fig:ExSemi}. Note that all nodes reach the same \texttt{L-FlipIt-C} equilibrium in Defender-C as discussed in Section \ref{sec:C}.A., and thus we only need to plot the results of one node for Defender-C. We also plot $L_d = \sum_{n\in\mathcal{N}}L_{d,n}$, $L_a = \sum_{n\in\mathcal{N}}L_{a,n}$, and $L_i = \sum_{n\in\mathcal{N}}L_{i,n}$ for Defender-D in Fig.\ref{fig:ExSemi}(d). Comparing the results of different nodes in Defender-D, we note that node 1 has a higher premium than nodes 2-4 and node 1 becomes uninsurable at a smaller $\eta$ from Fig. \ref{fig:ExSemi}(a). Thus, nodes with more neighbors are less insurable and they should be charged with higher premiums. Moreover, the defender at node 1 has a higher loss than the defenders at nodes 2-4, which indicates that nodes with more neighbors are more vulnerable. 

The results of the third experiment are presented in Fig. \ref{fig:ExHet}. Note that the defenders have different cost parameters in different nodes, thus, we need to solve problem (\ref{eq:InsuranceN}) with numerical methods to find the optimal insurance contracts in Defender-C. Comparing the results of different nodes in Defender-D, we can see that node 1 is always not insurable and node 2 becomes uninsurable at a smaller $\eta$ than nodes 3-4, which indicates that a defender who has a lower cost to protect his or her device is less insurable. Different from the first and second experiments, the defender switches between insurable statuses and uninsurable statuses with the increase of $\eta$ in Defender-C, and the coverage level is not $\frac{1}{2}$. We can see that both the coverage level and the premium increase with the increase of $\eta$ when the defender is at one insurable status. 

\begin{figure}[]
\centering
\subfigure{\includegraphics[width=0.24\textwidth]{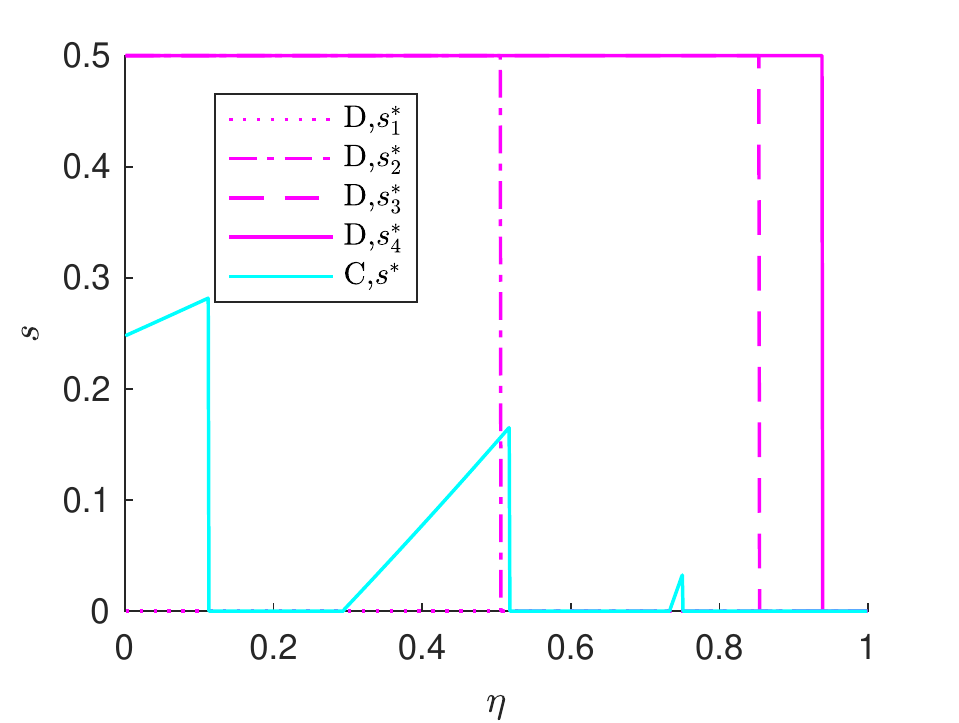}}
\subfigure{\includegraphics[width=0.24\textwidth]{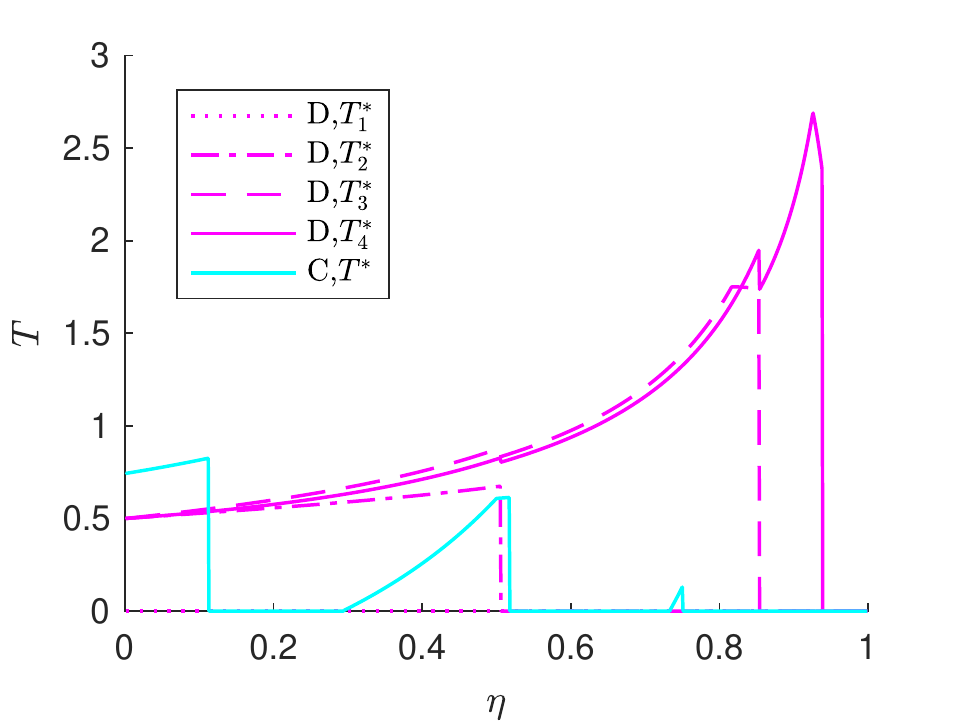}}
\subfigure{\includegraphics[width=0.24\textwidth]{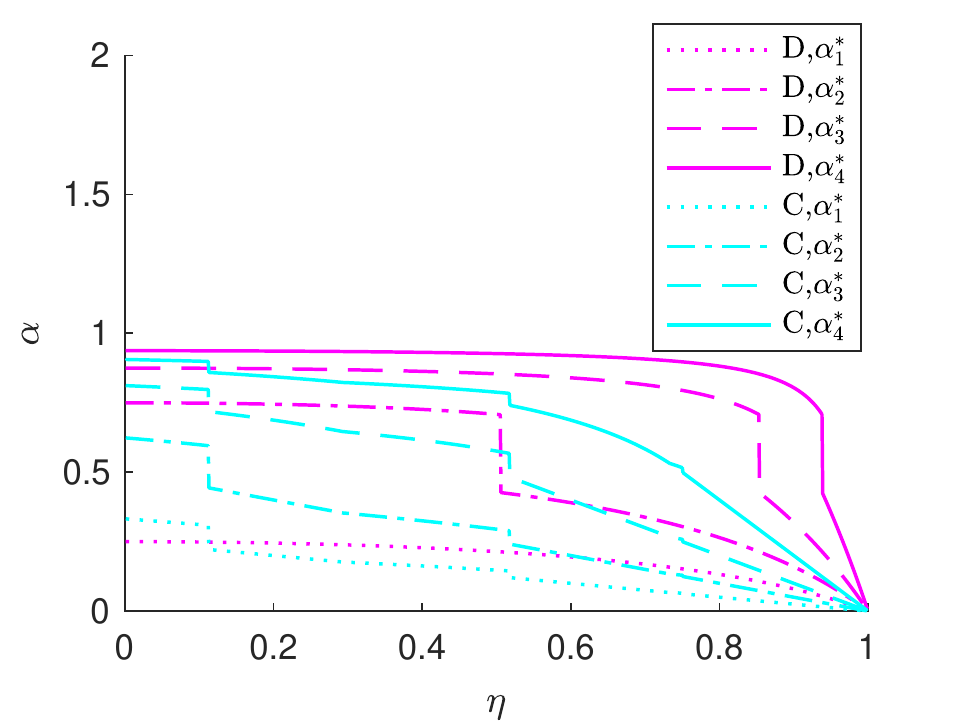}}
\subfigure{\includegraphics[width=0.24\textwidth]{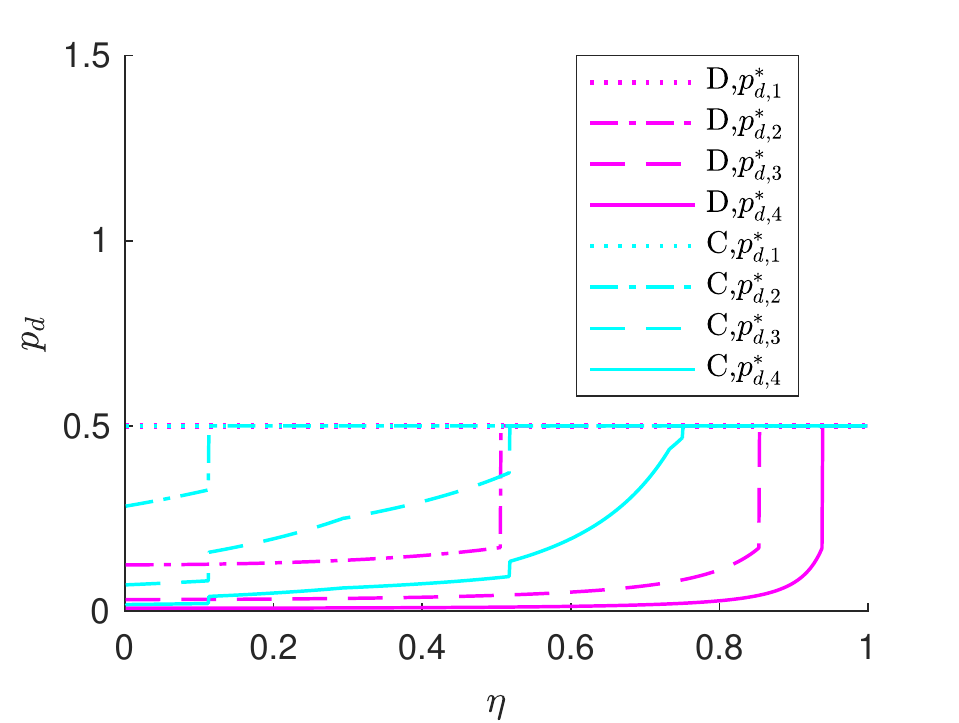}}
\subfigure{\includegraphics[width=0.24\textwidth]{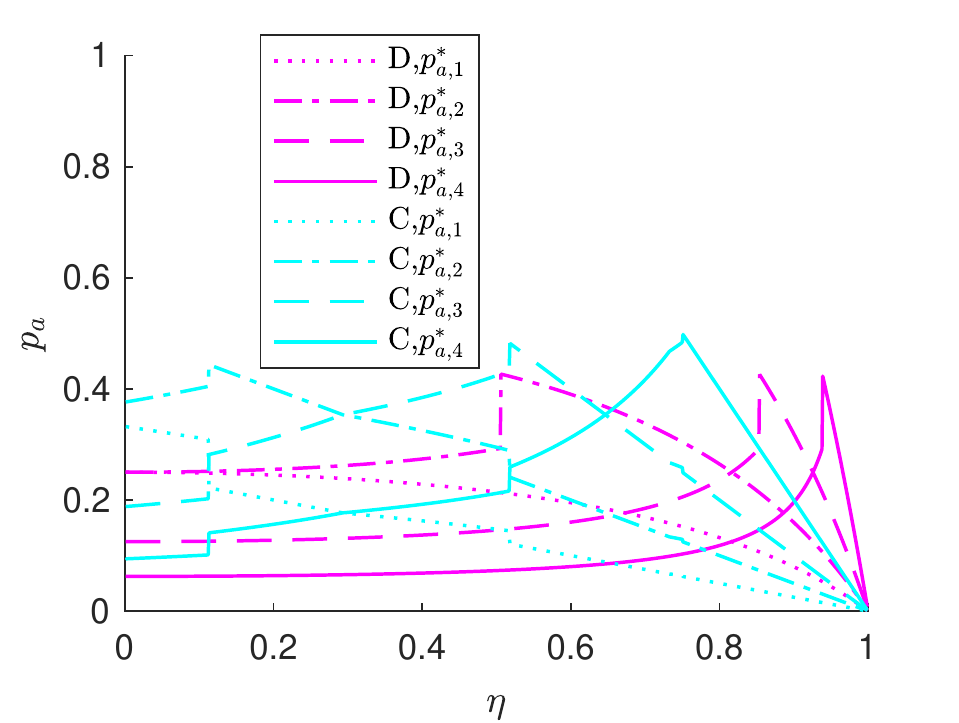}}
\subfigure{\includegraphics[width=0.24\textwidth]{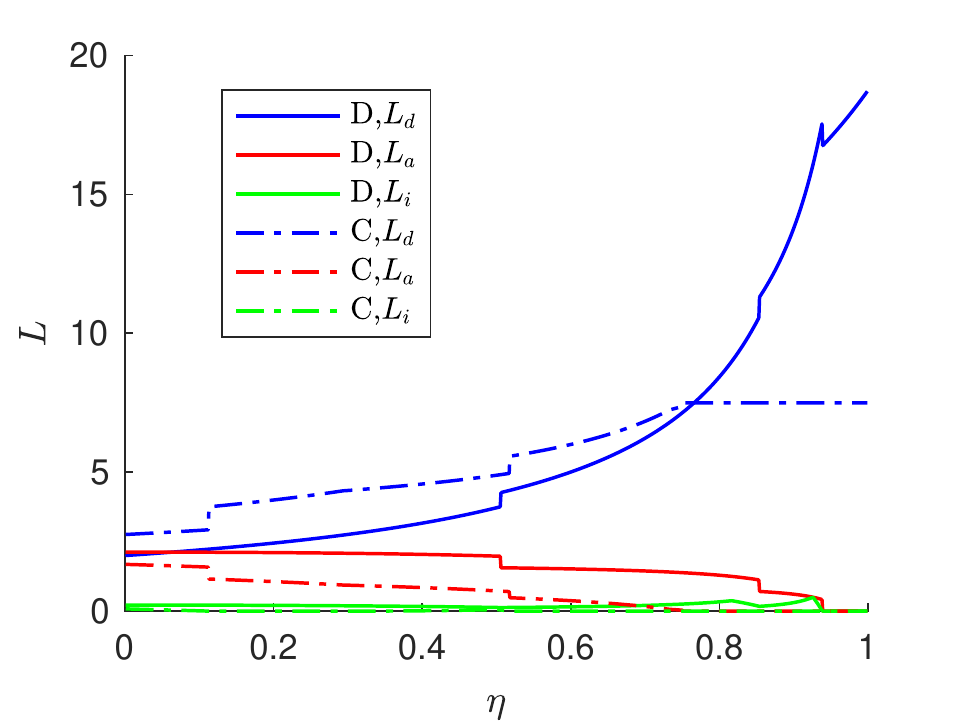}}
\caption{{Numerical results of a homogeneous network with heterogeneous IoT devices in Fig. \ref{fig:ExNetworks}(d). ``D" and ``C" represent Defender-D and Defender-C, respectively. All nodes satisfy $\gamma_{d,n}=1.0$, $ \gamma_{a,n} = 1.0$, and $c_{a, n} = 1.0$; all links satisfy $w_{nm} = 0.5$. Note that $c_{d,1} = 0.5$, $c_{d,2} = 1$, $c_{d,3} = 2$, and $c_{d,4} = 4$. Note that we plot the global results of Defender-D in the last subfigure of Fig. \ref{fig:ExHet} to compare with the results of Defender-C.}}
\label{fig:ExHet}
\end{figure}

\section{Conclusion}
\label{sec:Con}
In this paper, we have established the framework of \texttt{FlipIn} by composing \texttt{FlipIt} games and principal-agent problems to describe the complex interactions among defenders, attackers, and insurers over IoT networks. The framework has provided a theoretical underpinning for the quantitative assessment of cyber risks, the development of cross-layer defense mechanisms, and the design of cyber insurance policies. Through the analysis of the composed games, we have investigated the Peltzman effect of IoT owners and studied the fundamental concept of insurability. We have completely characterized the optimal insurance contracts for the case with a network of distributed defenders and the case with a centralized defender over a semi-homogeneous network. It has been shown that the optimal incentive-compatible insurance contract is to cover half of the defender's losses. Observations from numerical experiments have provided design guidelines and insights for designing policies for security management. There exists a nonlinear relationship between the level of connectivity and insurability. Nodes with low insurability need to invest in local cyber defense instead of counting on cyber insurance. One of the future directions would be the investigation of the dynamic \texttt{FlipIn} framework with partial observations that explores the optimal contract design under the time-varying cyber risks with imperfect measurements.

\section*{Appendix A Proof of Proposition \ref{pro:Insurability}}
From the constraint in (\ref{eq:Insurancen<}), we have
\[\begin{array}{l}
\frac{\gamma_{a, n}c_{d, n}}{c_{a, n}}-\gamma_{d, n} w_{nn}^* + \frac{ \gamma_{d, n} ^2w_{nn}^{*2} c_{a, n}}{2\gamma_{a, n} c_{d, n}}(1-s_n)s_n  \geq 0
\\ \Leftrightarrow
 \left(s_n - \frac{1}{2}\right)^2  \leq 2\left(\delta_n - \frac{1}{2}  \right)^2 - \frac{1}{4},
\end{array}  \]
where $\delta_n= \frac{\gamma_{a, n} c_{d, n}}{\gamma_{d, n} w_{nn}^{*} c_{a, n}}$ has been introduced to simplify the representations in this proof. Note that $\gamma_{a, n}c_{d, n} < \gamma_{d, n}w_{nn}^*c_{a, n}$ indicates $\delta_n < 1$ and $1-\frac{\gamma_{a, n}c_{d, n}}{\gamma_{d, n}w_{nn}^*c_{a, n}} \leq s_n \leq 1$ indicates $1-\delta_n \leq s_n \leq 1 $.

There exists $s_n$ only when $2\left(\delta_n  - \frac{1}{2}  \right)^2 - \frac{1}{4}  \geq 0$. As a result, $s_n$ is not feasible if $\frac{1}{2} -\frac{\sqrt{2}}{4}< \delta_n < \frac{1}{2} + \frac{\sqrt{2}}{4}$. 

We can further obtain that $s_n$ should satisfy 
\[\begin{array}{l}
\frac{1}{2} - \sqrt{2\left(\delta_n - \frac{1}{2}  \right)^2 - \frac{1}{4}} \leq  s_n  \leq \frac{1}{2} + \sqrt{2\left(\delta_n  - \frac{1}{2}  \right)^2 - \frac{1}{4}}.
\end{array}\]
Note that $s_n$ should also satisfy $1-\delta_n \leq s_n \leq 1$. Thus, $s_n$ is feasible only when 
\[ \begin{array}{l}
   \frac{1}{2}-\delta_n  \leq  \sqrt{2\left(\delta_n  - \frac{1}{2}  \right)^2 - \frac{1}{4}}.
\end{array}\]

If $0 < \delta_n \leq \frac{1}{2} -\frac{\sqrt{2}}{4}$, we have that $\left(\frac{1}{2} -\delta_n  \right)^2 \leq 2\left(\frac{1}{2} -\delta_n  \right)^2 - \frac{1}{4}$. Thus, $\delta_n$ should satisfy $\delta_n \geq 1$ or $\delta_n \leq 0$, which contradicts to $0 < \delta_n \leq \frac{1}{2} -\frac{\sqrt{2}}{4}$. As a result, $s_n$ is not feasible if $0 < \delta_n \leq \frac{1}{2} -\frac{\sqrt{2}}{4}$.

If $\frac{1}{2} + \frac{\sqrt{2}}{4}\leq \delta_n<1$, we have that $\frac{1}{2} - \delta_n < 0 \leq  \sqrt{2\left(\delta_n  - \frac{1}{2}  \right)^2 - \frac{1}{4}}$. As a result, $s_n$ is feasible if $\frac{1}{2} + \frac{\sqrt{2}}{4}\leq \delta_n<1$.

After summarizing everything, we can obtain that $s_n$ is not feasible if $0 < \delta_n < \frac{1}{2} + \frac{\sqrt{2}}{4}$, thus, Proposition \ref{pro:Insurability} holds. 

{
\bibliographystyle{ieeetr}
\bibliography{DraftRZ.bib}
}
\end{document}